\newtheorem{theo}{\textsc{Theorem}}
\newtheorem{lemma}[theo]{\textsc{Lemma}}
\definecolor{myurlcolor}{rgb}{0,0,0.7}
\definecolor{myrefcolor}{rgb}{0.8,0,0}
\def\ket#1{| #1 \rangle}
\def\bra#1{\langle #1 |}
\def\dm#1{\left|#1 \right\rangle \left\langle #1 \right|}
\newcommand{\ketbra}[3]{\left| #1 \right\rangle \left\langle #2 \right|_{#3}}
\newcommand{\supp}[1]{\operatorname{supp} #1}
\newcommand{\phip}{\phi^{+}}
\newcommand{\psip}{\psi^{+}}
\newcommand{\phim}{\phi^{-}}
\newcommand{\psim}{\psi^{-}}
\newcommand{\id}{\text{id}}
\newcommand{\K}{\mathcal{K}}
\newcommand{\F}{\mathcal{F}}
\renewcommand{\O}{\mathcal{O}}
\newcommand{\N}{\mathbb{N}}
\newcommand{\C}{\mathbb{C}}
\newcommand{\ZZ}{\mathbb{Z}_2}
\newcommand{\bea}{\begin{eqnarray}}
\newcommand{\eea}{\end{eqnarray}}
\newtheorem*{rep@theorem}{\rep@title}
\newcommand{\newreptheorem}[2]{%
\newenvironment{rep#1}[1]{%
 \def\rep@title{#2 \ref{##1}}%
 \begin{rep@theorem}}%
 {\end{rep@theorem}}}
\begin{document}
\title{Construction of optimal resources for concatenated quantum protocols}
\author{A. Pirker, J. Walln\"ofer, H. J. Briegel and W. D\"ur}
\affiliation{Institut f\"ur Theoretische Physik, Universit\"at Innsbruck, Technikerstr. 21a, A-6020 Innsbruck,  Austria}
\date{\today}

\begin{abstract}
We consider the explicit construction of resource states for measurement-based quantum information processing. We concentrate on special-purpose resource states that are capable to perform a certain operation or task, where we consider unitary Clifford circuits as well as non-trace preserving completely positive maps, more specifically probabilistic operations including Clifford operations and Pauli measurements. We concentrate on $1 \to m$ and $m \to 1$ operations, i.e. operations that map one input qubit to $m$ output qubits or vice versa. Examples of such operations include encoding and decoding in quantum error correction, entanglement purification or entanglement swapping. We provide a general framework to construct optimal resource states for complex tasks that are combinations of these elementary building blocks.  All resource states only contain input and output qubits, and are hence of minimal size. We obtain a stabilizer description of the resulting resource states, which we also translate into a circuit pattern to experimentally generate these states. In particular, we derive  recurrence relations at the level of stabilizers as key analytical tool to generate explicit (graph-) descriptions of families of resource states. This allows us to explicitly construct resource states for encoding, decoding and syndrome readout for concatenated quantum error correction codes, code switchers, multiple rounds of entanglement purification, quantum repeaters and combinations thereof (such as resource states for entanglement purification of encoded states).

\end{abstract}
\pacs{03.67.Hk, 03.67.Lx, 03.67.Pp}
\maketitle


\section{Introduction}

Measurement-based quantum computation \cite{BriegelMeas, BriegelMeas2, Gottesman1999} enables one to perform any quantum computation via a sequence of single-qubit measurements on a large, highly entangled resource state, the 2D cluster state \cite{BriegelCluster}. Hence this specific state is universal for quantum computation. As the 2D cluster state might be very large in terms of the number of qubits, in many situations one is interested in resource states of minimal size for implementing a specific quantum task \cite{RausMeasQCcluster}.


Such a measurement-based approach to quantum information processing has been discussed in various contexts, including quantum computation \cite{BriegelMeas} and quantum communication \cite{ZwergerQC} and was in some cases experimentally demonstrated \cite{LanyonQEC,BarzQEC}. In contrast to a circuit-based approach, no coherent manipulation of quantum information via the application of single- and two-qubit gates is required. One rather needs to prepare certain resource states, which are then manipulated by means of measurements, e.g. by coupling input qubits via Bell measurements to the resource state, similarly as in a teleportation process where however in this case a specific operation, determined by the resource state, is performed. The only sources of errors in such a scheme are imperfect preparation of resource states, and imperfect measurements.

Two main advantages of such a measurement-based approach have been identified \cite{ZwergerQC,ZwergerQEC,ZwergerEPP}: On the one hand, one finds that the acceptable error rates are very high - depending on the task, 10\% noise per particle or more can be tolerated when assuming a noise model where each qubit of the resource state is subjected to single-qubit depolarizing noise. On the other hand, such a measurement-based approach allows for various ways to prepare resource states, including probabilistic (heralded) schemes. This opens the possibility to use probabilistic processes such as parametric-down conversion, error detection, entanglement purification or even cooling to a ground state for resource state preparation.

Depending on the task at hand, different resource states need to be prepared. However, obtaining the explicit resource state or even an efficient preparation procedure for a given task is not straightforward, and explicit constructions are often limited to small system sizes. In turn, any experimental realization requires knowledge of the exact form of resource states and how to prepare them. Also from a theoretical side, knowing the explicit resource state for a specific task allows one to investigate its entanglement features, stability under noise and imperfections and to design optimized ways to generate them with high fidelity, e.g. by means of entanglement purification

Here we provide an explicit construction of resource states for complex tasks that correspond to a composition of elementary building blocks. These building blocks consist of unitary and non-unitary $1 \to m$ or $m \to 1$ operations, including e.g. encoding and decoding in quantum error correction, entanglement swapping and entanglement purification which is a probabilistic process. We develop a general framework for concatenating resource states for different quantum operations according to their stabilizer description and obtain resource states of minimal size which consist only of input and output particles (or only input particles for tasks where there is no quantum output). We provide an efficient and explicit description of families of resource states via recurrence relations in terms of stabilizers for complex quantum operations, and also obtain a representation of these stabilizer states in terms of graph states. This leads directly to an efficient quantum circuit that prepares these states using at most ${\cal O}((n+m)^2)$ commuting two qubit gates for any task with $n$ input and $m$ output systems. Furthermore, this allows one to use several of the methods and techniques developed for graph states \cite{HeinGraph,HeinMulti}, including entanglement purification \cite{DurEPP,DurEPPGraph}, or to analyze their stability under noise and decoherence. Entanglement purification is of particular importance in this context, as this does not only allow one to prepare resource states with high fidelity, but also yields states which can be well described by a local noise model \cite{Wallnoefer}, thereby confirming above mentioned local error model that was used e.g. in \cite{ZwergerQC,ZwergerQEC,ZwergerEPP,ZwergerRepeater}.

The examples we provide include resource states for multiple steps of entanglement purification using a recurrence protocol \cite{Bennett,Deutsch,DurEPP}, encoding, decoding and error syndrome readout for concatenated quantum error correction \cite{Nielsen}, quantum code switchers that allow one to change between different error correction codes (e.g. for storage and data processing), entanglement purification of encoded states or quantum repeater stations for long-distance quantum communication \cite{BriegelRepeater,SangouardRMP}.

This paper is organized as follows. In Sec. \ref{sec:rel} we relate our findings to earlier works in the field and highlight the novelty of our approach. In Sec. \ref{sec:back} we provide some background information on stabilizer states, graph states and Clifford circuits, and give a brief introduction to measurement-based quantum computation and the Choi-Jamiolkowksi isomorphism \cite{Jamiolkowski}. We also settle the notation we use throughout the article in this section. In Sec. \ref{sec:framework} we describe the general framework of concatenated quantum tasks, and present our main technical results to efficiently construct a stabilizer description of corresponding resource states. In Sec. \ref{sec:applications} we present several applications of our method. We provide resource states for multiple steps of entanglement purification using the recurrence protocol of \cite{Deutsch}, concatenated quantum error correction using a generalized Shor code, codes switchers and entanglement purification of encoded states. We summarize and discuss our results in Sec. \ref{sec:discussion}.

\section{Relation to prior work}\label{sec:rel}

Measurement-based quantum computation \cite{BriegelMeas,BriegelMeas2,Gottesman1999,RausMeasQCcluster,NielsenCluster,AliferisComp,ChildsUnified,GrossNovel} is a paradigm where quantum information is processed by measurements only. Certain states, e.g. the 2D cluster state \cite{BriegelCluster}, serve as universal resources \cite{MantriUniversal}. That is, by performing only single qubit measurements, an arbitrary quantum computation can be performed, or equivalently an arbitrary quantum state can be generated. For an introduction and review on measurement-based quantum computation we also refer to \cite{BrowneIntro,JozsaIntro,CampellIntro} and for their  implementation in specific systems to \cite{KwekImpl,KyawImpl,BenjaminProsp}. Despite the probabilistic character of measurements, the desired state is generated deterministically up to local Pauli corrections. There are also special purpose resource states \cite{RausMeasQCcluster,ZwergerQC,ZwergerQEC,ZwergerRepeater} that allow one to perform a specific task or operation. Typically, these special purpose resource states are smaller in size. Such resource states can be constructed in two different ways: First, one may start with a 2D cluster state, and perform all measurements corresponding to Clifford operations. This leaves one with a state of reduced size, which can in principle be determined using the stabilizer formalism \cite{gottesmanphd} or graph-state formalism \cite{HeinGraph}. On the other hand, for circuits that only include Clifford operations and Pauli measurements, one can construct the resource state via the Jamiolokowski isomorphism \cite{Jamiolkowski}, i.e. by applying the circuit to part of a maximally entangled state (see e.g. \cite{RausMeasQCcluster,ZwergerQC,ZwergerQEC,ZwergerRepeater}). This equivalence is also apparent from Theorem 1 in  \cite{RausMeasQCcluster}. In both cases, the stabilizer formalism allows one in principle to efficiently obtain the description of the resource state in terms of its stabilizers. However, taking care of local correction operations and stabilizer update rules is a tedious task, making such a direct approach difficult for complex tasks and operations, especially for operations that act on many qubits and consist of many Clifford operations and measurements.

Measurement-based quantum information processing using special purpose resource states has been investigated in different contexts. In \cite{RausMeasQCcluster} explicit resources for quantum adder and the quantum fourier transform have been constructed. Quantum error correction codes associated with graph states have been proposed in \cite{Schlingemann}. This type of quantum error correction codes especially fit in a measurement-based setting as their resource state corresponds to a graph. Quantum error correction codes where the codewords correspond to graph states were studied e.g. in \cite{SchlingemannGraphCode,Grassl,HeinGraph,HeinMulti}. The concatenation of quantum error correction codes is a standard way to obtain efficient codes for fault-tolerant quantum computation, see e.g. \cite{gottesmanphd}. The measurement-based implementations of quantum error correction codes were studied in \cite{ZwergerQEC,ZwergerQC,BarzQEC,LanyonQEC} where explicit resource states for the repetition code and cluster-ring code were provided. In \cite{DawsonNoiseQuantumComputers,DawsonNoiseClusterState} the practicality of optical cluster-state quantum computation \cite{BrownerOptical,NielsenOptical} using quantum error correction codes was numerically investigated and it was found that scalable optical quantum computing is possible. Furthermore it was also investigated in fault-tolerant quantum computation on cluster-states in \cite{NielsenFault,RaussendorfFault,RaussendorfFault2}. 
In \cite{ZwergerRepeater} explicit resource states for entanglement purification and entanglement swapping, as well as combinations thereof were constructed. In particular, resource states for one and two rounds of entanglement purification using the protocol of \cite{Deutsch} and entanglement purification followed by entanglement swapping. The latter is a building block for a measurement-based quantum repeater, which allows for long-distance quantum communication.

All examples mentioned so far have in common that they construct a resource state for one specific task. Even though these resource states do implement a particular quantum operation, they still lack of a composable description. It is a non-trivial task to concatenate those elementary quantum operations at the level of resource states as already easy examples like two rounds of entanglement purification show. Here we address this problem and provide an explicit construction of resource states for concatenated tasks that are combinations of elementary building blocks. Rather than calculating the required resource state directly, we develop a method to combine and concatenate small building blocks in terms of their stabilizer description via a set of recurrence relations. We would like to emphasize that our construction is applicable not only to unitary Clifford circuits, but also to circuits that contain Pauli measurements. In particular, also probabilistic operations such as entanglement purification can be treated in this way. Pauli measurements can be done beforehand, thereby obtaining resource states of smaller (minimal) size. Furthermore, the result of the Pauli measurement (which determines whether the overall operation is successful and the output states should be kept) can be determined from the results of the incoupling Bell measurements \cite{ZwergerQC,ZwergerEPP,ZwergerRepeater,ZwergerQEC}. This ensures full functionality of circuits, including post-selection based on measurement outcomes, or application of correction operations depending on the encountered error syndrome.

Consider for example a resource state for syndrome read-out and error correction for a concatenated five-qubit code with four concatenation levels. This corresponds to an error correction code of $5^4=625$ qubits, i.e. a resource states of 1250 qubits. The circuit to implement the required error correction operation thus contains several thousand gates, and the direct computation of the resource state from the 2D cluster state or via the Jamiolkowski isomorphism is difficult. With our approach, we make use of the recursive and concatenated structure of the states, and can easily construct the required resource states for decoding and encoding for such concatenated codes in terms of their stabilizers. The combination of decoding (with syndrome readout) and encoding allows then to obtain the resource state for syndrome readout and error correction, or alternatively a code switcher between different codes. In a similar way, one can combine different kinds of elementary building blocks, and obtain resource states for multiple rounds of entanglement purification, entanglement purification of encoded states or quantum repeater stations for encoded quantum information. This allows for full flexibility, and for a broad applicability of our findings.
Following our approach one immediately obtains a stabilizer description of a concatenated quantum operation rather than computing its implementing resource state from scratch. Furthermore, as the construction scheme relies on stabilizers, this description turns out to be especially suited for studying scaling and stability properties of resource states, crucial for experimental implementations. For all relevant examples, we also provide an explicit description of the resulting resource states as graph states. This has the advantage that one obtains directly an efficient way to prepare these states via elementary two-qubit operations. In addition, as entanglement purification protocols for all graph states exist \cite{EPPallGraphs}, one also obtains a way to generate all these states with high-fidelity from multiple copies via entanglement purification.

\section{Background and Notation}\label{sec:back}
In the following we recall some basic notations and results concerning stabilizer states, graph states and measurement-based quantum computation which will be used throughout the paper.

\subsection{Stabilizer states, Clifford group and Graph states}\label{sec:back:stab}

Let $P_n$ denote the $n$ qubit Pauli group, i.e. $P_n$ is the group consisting of all $n-$fold tensor products of the Pauli operators $X,Y,$ and $Z$ as well as the identity. We call an $n$ qubit state $\ket{\psi}$ a stabilizer state if it is stabilized by elements of $P_n$. More precisely, there exist $S_1,..,S_n \in P_n$ such that $S_i \ket{\psi} = \ket{\psi}$ for $i=1,..,n$ \cite{Nielsen}. The stabilizers of $\ket{\psi}$ form a subgroup of $P_n$. \newline
Graph states \cite{HeinGraph,HeinMulti} are specific stabilizer states. Given a mathematical graph $G=(V,E)$, where $V$ denotes the set of vertices and $E$ the set of edges, the associated graph state $\ket{G}$ is stabilized by the operators
\begin{align}
K_{a} = X^{(a)} \prod_{\lbrace a,b \rbrace \in E} Z^{(b)}
\end{align}
where the superscripts in brackets indicate on which Hilbert space the Pauli operator acts. Hence the graph state $\ket{G}$ is the common $+1$ eigenstate of the family of operators $\lbrace K_{a} \rbrace_{a \in V}$. Alternatively, the graph state $\ket{G}$ can be generated via
\begin{align}
\ket{G} = \prod_{\lbrace a,b \rbrace \in E} U^{\lbrace a,b \rbrace} \ket{+}^{\otimes V}
\end{align}
where $U^{\lbrace a,b \rbrace} = \dm{0} \otimes \id + \dm{1} \otimes Z$ is a controlled $Z-$gate. Notice that this implies an efficient preparation procedure for all graph states, i.e. knowing the graph state description of a state provides one with a way to prepare the state with at most quadratically many commuting two-qubit gates.

We call two graph states $\ket{G}$ and $\ket{G'}$ \emph{local unitary} equivalent (LU equivalent) if there exist unitaries $U_i$ such that $U_1\otimes \dots \otimes U_n \ket{G} = \ket{G'}$. \newline
An important group of unitaries is the so-called Clifford group. The Clifford group is the set of all $n$ qubit unitaries $U$ such that $U P_n U^\dagger = P_n$, or, in other words, the Clifford group is the normalizer of the Pauli group. An important result which we will use here frequently is that any stabilizer state is \emph{local Clifford} equivalent (LC equivalent) to a graph state \cite{Nest} and that those local Clifford operations can be determined efficiently. This equivalence can be most easily derived in terms of the binary representation of stabilizers of a stabilizer state. \newline
Finally, we denote the four Bell-basis states by
\begin{align}
\ket{\phip} = (\ket{00} + \ket{11})/\sqrt{2}, \\
\ket{\phim} = (\ket{00} - \ket{11})/\sqrt{2}, \\
\ket{\psip} = (\ket{01} + \ket{10})/\sqrt{2}, \\
\ket{\psim} = (\ket{01} - \ket{10})/\sqrt{2}.
\end{align}

\subsection{Measurement-based quantum computation and the Jamiolkowski isomorphism}\label{sec:back:meas}

In measurement-based quantum computation a specific quantum operation is realized via a sequence of single-qubit measurements on an entangled state. It has been shown in \cite{BriegelMeas2} that the 2D cluster state \cite{BriegelCluster} is universal for quantum computation \cite{MantriUniversal}. The 2D cluster state is a graph state associated with a two-dimensional square lattice.

This enables a correspondence between a quantum operation and a specific sequence of measurements on the 2D cluster, as those measurements implement the quantum operation in a measurement-based way. However, the outcomes of measurements are random, leading to Pauli correction operations. For general quantum circuits, this implies that measurements need to be modified depending on previous measurement results, and quantum information processing takes place in a sequential way. Nevertheless, by using a sufficiently large resource state, one can deterministically implement an arbitrary quantum circuit following this approach. Determinism in measurement-based quantum computation was addressed in more depth in \cite{DanosDeter,BrowneDeter}.

The 2D cluster state is a universal resource and can thus be used to simulate any quantum circuit. However, there can be a large overhead in terms of the number of auxiliary systems. So if one is interested in a specific quantum task, it might be beneficial to consider a special-purpose resource state that can be used to realize a specific operation, and find a state of minimal size or complexity \cite{RausMeasQCcluster}. The Jamiolkowski isomorphism \cite{Jamiolkowski} establishes a one-to-one correspondence between completely positive maps and quantum states. More precisely, for every quantum operation $\O$ there exists a mixed state $E_{\O}$ (which we will refer to as \emph{resource state}) such that $E_{\O}$ allows one to {\em probabilistically} implement the operation $\O$ via Bell-measurements at the input qubits of the resource state, see Fig. \ref{fig:back:measimpl}.

\begin{figure}[h!]
\scalebox{0.7}{
\includegraphics{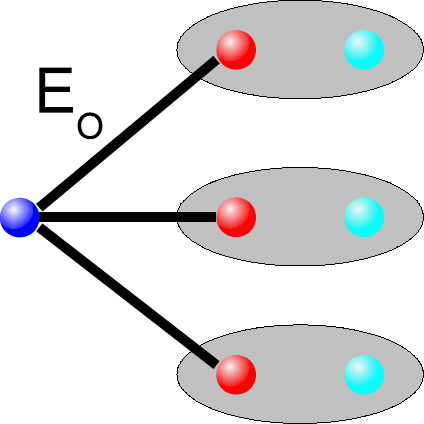}
}
\caption[h!]{Measurement-based implementation of the quantum operation $O$ mapping $n=3$ qubits to $m=1$ qubit via its Jamiolkowksi state $E_O$. The input qubits of the Jamiolkowski state are red, the output qubit is blue and the ellipses indicate Bell-measurements.} \label{fig:back:measimpl}
\end{figure}

For unitary operations $U$, or unitary operations followed by projective measurements on some of the output qubits, the corresponding resource state is pure, $\ket{E_{U}}$. We only consider such situations in the following.

In this case, the Jamiolkowski state $\ket{E_{\O}}$ for the quantum operation $\O$ is given by
\begin{align}
\ket{E_{\O}} = (I_A \otimes \O_B) \bigotimes\limits^{n}_{i=1} \ket{\phip}_{A_i B_i},
\end{align}
(see also Theorem 1 in \cite{RausMeasQCcluster}), where $n$ denotes the number of input qubits of $\O$, and $A$ and $B$ denote input and output qubits respectively. For an $n \to m$ operation, i.e. an operation acting on $n$ input qubits and producing $m$ output qubits, the resource state is of size $n+m$.
The processing of quantum information now takes place by coupling the qubits of the input states to the input qubits of the resource state via Bell measurements, similarly as in teleportation. Depending on the outcome of the Bell measurements, the output state is then given by a state, where first some Pauli operations $\otimes_n \sigma_{i_k}$ (determined by the outcomes of the Bell measurements) act, followed by the application of the desired operation $\O$. In general, the Pauli operations and $\O$ do not commute, resulting in a probabilistic implementation of $\O$. In particular, if all outcomes of the Bell measurements are given by $\ket{\phi^+}$ (which happens with probability $1/4^n$), the desired operation is applied.

For specific operations $\O$, including all Clifford circuits (unitary operations from the Clifford group and Pauli measurements), the Pauli operations can be corrected, and one obtains a {\em deterministic} implementation of the map in this case. All circuits we consider throughout the paper are of this form.




It is also straightforward to concatenate different quantum tasks, as the quantum computation is done via Bell-measurements at the input qubits. In particular, one can combine resource states via Bell-measurements on the respective inputs and outputs, see Fig. \ref{fig:construction:all}.


\begin{figure}[h!]
\begin{subfigure}{\linewidth}
\scalebox{0.72}{
\includegraphics{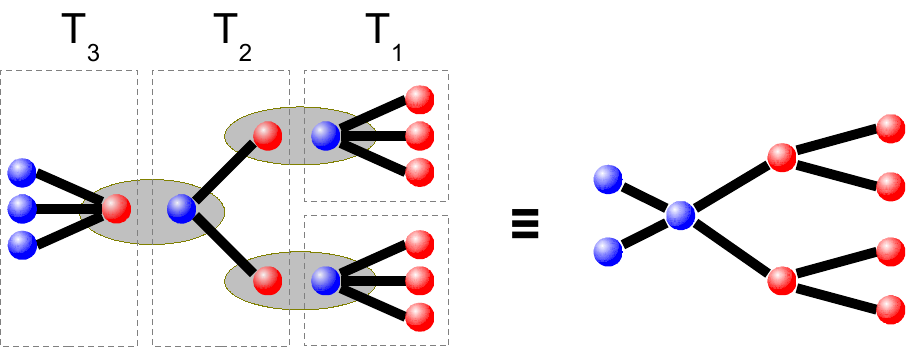}
}
\end{subfigure}\\[1ex]
\begin{subfigure}{\linewidth}
\scalebox{0.72}{
\includegraphics{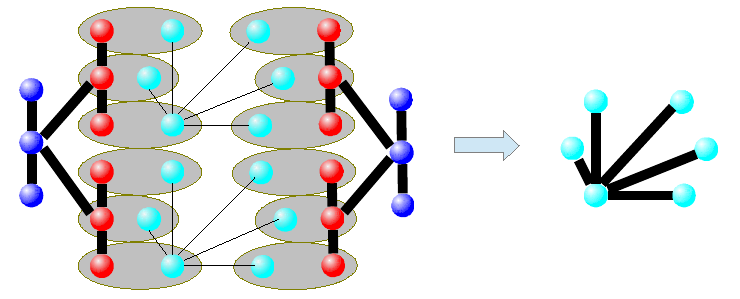}
}
\end{subfigure}
\caption{Construction and usage of the resource state for the concatenated quantum task \emph{entanglement purification at a logical level}. The first figure illustrates the concatenation of several different quantum tasks. In this example, the quantum tasks \emph{decoding in 3-qubit bit-flip code} ($T_1$), \emph{one round of the entanglement purification protocol \cite{Deutsch}} ($T_2$) and \emph{encoding in 3-qubit bit-flip code} ($T_3$) are concatenated. On the right hand-side an LC-equivalent graph state for performing all three quantum tasks in one go is shown. This resource state implements one round of entanglement purification at a logical level, i.e. without decoding. The second figure depicts the usage of the resource state in a measurement-based implementation of the concatenated quantum task \emph{entanglement purification at a logical level}.}
\label{fig:construction:all}
\end{figure}

Because those Bell-measurements might, in general, yield other outcomes than $\ket{\phip}$ one would need to deal in sequential implementations of concatenated quantum tasks with Pauli byproduct operators. We emphasize that within the framework proposed here these coupling measurements are done virtually only, thus enabling deterministic generation of resource states for complex quantum tasks. That is, the intermediate output-input qubits only appear virtually, and are not part of the resulting resource state that contains input and output qubits of the general task. This is similar as for qubits which are measured in the Pauli basis in a Clifford circuit: also in this case, the measurement can be done beforehand on qubits in the resource state, thereby leading to a state of reduced size that only contains input and output qubits. The result of Pauli measurements on beforehand measured (virtual) systems is determined by the outcomes of the incoupling Bell measurements see e.g. \cite{ZwergerRepeater} for entanglement purification. This leads to a possible reinterpretation of the outcomes at the read-in.

In the following we will use at some points of the paper the notation:
\begin{align}
\sigma_{0,0} = \id, \quad \sigma_{0,1} = X, \quad \sigma_{1,0} = Z, \quad \sigma_{1,1} = Y.
\end{align}
We denote by $n \stackrel{\O}{\to} m$ that the quantum operation $\O$ maps $n$ input qubits to $m$ output qubits. Furthermore, all resource states we are concerned with here are stabilizer states.

\section{Framework} \label{sec:framework}

In this section we present a general framework to construct the stabilizers of resource states for concatenated quantum tasks.

\subsection{Basic observation}

Suppose we implement the quantum operation $1 \stackrel{\O}{\to} m$ via its Jamiolkowski state $\ket{\psi^{\O}}$ in a measurement-based way, for example encoding quantum information in a quantum error correcting code. We rewrite $\ket{\psi^{\O}}$ as
\begin{align}
\ket{\psi^{\O}} = \ket{+}_{in} \ket{G_{0}}_{out} + \ket{-}_{in} \ket{G_{1}}_{out} = \sum_i \ket{i^{x}}_{in} \ket{G_{i}}_{out} \label{eq:observation:plusminus}
\end{align}
where $\ket{+}_{in} = \ket{0^{x}}_{in}$ and $\ket{-}_{in} = \ket{1^{x}}_{in}$ denote the eigenstates of $X$, $in$ the input qubit and $out$ the system of output qubits.
Observe that the states $\ket{G_{0}}_{out}$ and $\ket{G_{1}}_{out}$ are not normalized. Furthermore assume there exist two classes of operators, which we call \emph{auxiliary} operators $K$ and $F$, which satisfy the equations:
\begin{align}
K \ket{G_{i}}_{out} &= \ket{G_{i \oplus 1}}_{out}, \label{def:k:aux}\\
F \ket{G_{i}}_{out} &= (-1)^{i} \ket{G_{i}}_{out}. \label{def:f:aux}
\end{align}
The operator $K$ serves as logical $X$ and the operator $F$ as a logical $Z$ operator for the states $\ket{G_{i}}_{out}$. We denote by $\K$ and $\F$ the sets of all operators satisfying (\ref{def:k:aux}) and (\ref{def:f:aux}) respectively. Using (\ref{def:k:aux}) and (\ref{def:f:aux}) we immediately observe that the operators
\begin{align}
Z &\otimes K \label{framework:kstab}\\
X &\otimes F \label{framework:fstab}
\end{align}
stabilize $\ket{\psi^{\O}}$, i.e. $(Z \otimes K)\ket{\psi^{\O}} = \ket{\psi^{\O}}$ and $(X \otimes F)\ket{\psi^{\O}} = \ket{\psi^{\O}}$ for $K \in \K$ and $F \in \F$. So we infer that some stabilizers of the resource state $\ket{\psi^{\O}}$ are given by $\lbrace Z \otimes K \rbrace_{K \in \K} \cup \lbrace X \otimes F \rbrace_{F \in \F}$. \

According to the following argument the stabilizers can always be brought to this form: If there are at least two generators of the stabilizer group with different Pauli operators on the first qubit we can construct a set of generators that are of form (\ref{framework:kstab}) and (\ref{framework:fstab}) by picking different elements of the stabilizer group. This can be done by multiplying two of the known stabilizers together to get a new one.
Furthermore, all meaningful (i.e. entangled, so they actually can implement an operation) resource states have at least two stabilizers with different Pauli operators as every set of stabilizers is local Clifford equivalent to a set of graph state stabilizers \cite{Nest}.
So, in the graph state picture there is an edge from the first qubit to at least one of the other qubits. That means the stabilizers can always be brought to the required form if the input qubit is entangled with the other qubits.

In the following we will construct the stabilizers of a resource state implementing a concatenated quantum task, such as concatenated quantum error correction or entanglement purification at a logical level. Suppose we want to concatenate the quantum operations $1 \stackrel{\O}{\to} m$ and $1 \stackrel{\O'}{\to} n$, where the latter acts on each output particles of the former, i.e. the resulting quantum operation is $\mathcal{O'}^{\otimes m} \circ \mathcal{O}$, and both operations are implemented in a measurement-based way \footnote{We emphasize that the same procedure applies to quantum operations of the form $m \stackrel{\O}{\to} 1$ and $n \stackrel{\O'}{\to} 1$, i.e. taking multiple qubits as input and producing a single qubit output. Examples thereof are concatenated entanglement purification protocols or decoding a logical qubit.}. Examples thereof are encoding a single qubit into a quantum error correction code. Hence we have to combine the resource states $\ket{\psi^{\O}}$ and $\ket{\psi^{\O'}}^{\otimes m}$. Expressing $\ket{\psi^{\O}}$ and $\ket{\psi^{\O'}}$ as in (\ref{eq:observation:plusminus}) yields
\begin{align}
\ket{\psi^{\O}} &= \sum\limits^1_{i=0} \ket{i^{x}}_{in} \ket{G^{1}_{i}}_{out}, \\
\ket{\psi^{\O'}} &= \sum\limits^1_{i=0} \ket{i^{x}}_{in'} \ket{G^{n}_{i}}_{out'}.
\end{align}
Combining $\ket{\psi^{\O}}$ and $\ket{\psi^{\O'}}^{\otimes m}$ is done via Bell-measurements, see Sec. \ref{sec:back:meas}, between $out$ and $in'$, see Fig. \ref{fig:basic:observarion}.
\begin{figure}[h!]
\scalebox{1.1}{
\includegraphics{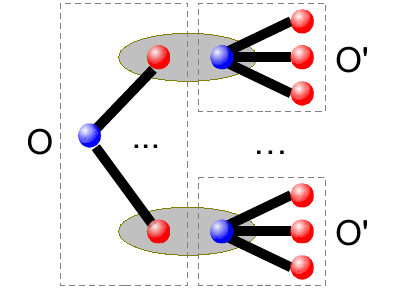}
}
\caption[h!]{Graphical illustration of the construction of the resource state for the concatenated quantum tasks $O'^{\otimes m} \circ O$. The ellipsis indicated Bell-measurements.} \label{fig:basic:observarion}
\end{figure}
Because we are interested in the construction of the resource state for implementing the composite $\mathcal{O'}^{\otimes m} \circ \mathcal{O}$ rather than their sequential execution we assume that the (virtual) coupling Bell-measurements reveal simultaneously the $\ket{\phip}$ outcome. It is straightforward to define \emph{connecting} functions
\begin{align}
\alpha_{i}(k_1,\ldots,k_m) = {}^{\otimes m}\! \langle \phip \ket{k^x_1,\ldots,k^x_m}_{in'^m} \ket{G^{1}_{i}}_{out} \label{def:confunc}
\end{align}
for $i = 0,1$ where $\ket{k^x_1,\ldots,k^x_m}_{in'^m} = \bigotimes^m_{l=1} \ket{k^x_l}_{in'_l}$. To summarize, the resource state for implementing the quantum operation $\mathcal{O'}^{\otimes m} \circ \mathcal{O}$ by denoting $\ket{\Phi} = \bigotimes^m_{l=1} \ket{\phip}_{in'_l, out_l}$ is given by
\begin{align}
\ket{\psi^{\mathcal{O'}^{\otimes m} \circ \mathcal{O}}} & = \frac{1}{2^m} \bra{\Phi} \psi^{\O} \rangle \ket{\psi^{\O'}}^{\otimes m} \notag \\
&= \frac{1}{2^m} \bra{\Phi} \sum_{i} \ket{i^{x}}_{in} \sum^1_{k^x_1,..,k^x_m=0} \ket{k^x_1,\ldots,k^x_m}_{in'^m}  \notag \\
& \hspace{1cm} \otimes \ket{G^{1}_{i}}_{out} \ket{G^{n}_{k_1},\ldots,G^{n}_{k_m}}_{out'^m} \notag \\
&= \frac{1}{2^m} \sum_{i} \ket{i^{x}}_{in} \sum^1_{k_1,..,k_m=0} \alpha_{i}(k_1,\ldots,k_m) \notag \\
& \;\;\;\;\;\;\;\;\;\;\;\;\;\;\;\;\;\;\;\;\;\;\;\;\;\; \ket{G^{n}_{k_1},\ldots,G^{n}_{k_m}}_{out'^m} \label{framework.eq.1}
\end{align}
where $\ket{G^{n}_{k_1},\ldots,G^{n}_{k_m}}_{out'^m} = \bigotimes^m_{l=1} \ket{G^{n}_{k_l}}_{out'_l}$. We observe that (\ref{framework.eq.1}) is again of the form (\ref{eq:observation:plusminus}). In the rest of this section we elaborate on the construction of the auxiliary operators of $\ket{\psi^{\mathcal{O'}^{\otimes m} \circ \mathcal{O}}}$ in terms of the auxiliary operators of $\ket{\psi^{\O'}}$. We emphasize that the same techniques apply to the concatenation of quantum operations with a single qubit output.

\subsection{Main results}

In the following we denote concrete auxiliary operators of $\ket{\psi^{\mathcal{O'}^{\otimes m} \circ \mathcal{O}}}$ by $K_{n+1}$ and $F_{n+1}$, i.e. $K_{n+1} \in \K_{n+1}$ and $F_{n+1} \in \F_{n+1}$. The theorem below relates the auxiliary operator $K_{n+1}$ of $\ket{\psi^{\mathcal{O'}^{\otimes m} \circ \mathcal{O}}}$ to auxiliary operators $K_{n} \in \K_{n}$ and $F_{n} \in \F_{n}$ of $\ket{\psi^{\O'}}$.
\begin{theo}[Recurrence relation for $K_{n+1}$]\label{thm.koperators}
Let $K_1 = a \bigotimes^m_{k=1} \sigma_{i_k,j_k}$ with $a \in \lbrace \pm 1, \pm i \rbrace$, $i_k \in \lbrace 0,1 \rbrace$ and $j_k \in \lbrace 0,1 \rbrace$ be the $K-$type operator of $\ket{\psi^{\mathcal{O}}}$. \newline
Then there exist $\gamma \in \ZZ^m$, $\delta \in \ZZ^m$ and $c \in \C$ where $|c|=1$ such that the auxiliary operator $K_{n+1}$ of $\ket{\psi^{\mathcal{O'}^{\otimes m} \circ \mathcal{O}}}$ is given by $K_{n+1} = c \bigotimes\limits^m_{r=1} (F_r)^{\delta_r} \bigotimes\limits^m_{s=1} (K_s)^{\gamma_s}$ where $\gamma_s = i_s$, $\delta_s = j_s$, $K_{s} \in \K_{n}$ and $F_{s} \in \F_{n}$ for all $s$ and $c = a \prod\limits^m_{l=1} i^{i_l j_l}$.
\end{theo}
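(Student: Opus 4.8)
\emph{Approach.} The plan is to characterize $K_{n+1}$ purely through its defining relation (\ref{def:k:aux}). Writing the composite state (\ref{framework.eq.1}) in the form (\ref{eq:observation:plusminus}), its output components are
\[
\ket{\tilde G_i} = \frac{1}{2^m}\sum_{\vec k}\alpha_i(\vec k)\,\ket{G^{n}_{k_1},\ldots,G^{n}_{k_m}}_{out'^m},
\]
so I must produce an operator that maps $\ket{\tilde G_i}\mapsto\ket{\tilde G_{i\oplus1}}$ and then read off its structure. Abbreviating $\vec i=(i_1,\dots,i_m)$ and $\vec j=(j_1,\dots,j_m)$ for the $Z$- and $X$-patterns carried by $K_1=a\bigotimes_k\sigma_{i_k,j_k}$, the whole statement reduces to tracking how the flip $i\mapsto i\oplus1$ propagates through the connecting functions $\alpha_i$ of (\ref{def:confunc}).

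\emph{Key step.} Since $K_1\ket{G^1_i}_{out}=\ket{G^1_{i\oplus1}}_{out}$, I would insert this into (\ref{def:confunc}) and transfer $K_1$ from the measured qubits $out$ onto the $in'$ qubits using the transpose identity for the maximally entangled projector, $\bra{\phip}(N_{out}\otimes\id_{in'})=\bra{\phip}(\id_{out}\otimes N^{T}_{in'})$, applied factorwise since both $\ket{\phip}^{\otimes m}$ and $K_1$ are tensor products. It then remains to evaluate $K_1^{T}$ on the $X$-eigenstates $\ket{\vec k^x}_{in'}$. Combining $\sigma_{i,j}^{T}=(-1)^{ij}\sigma_{i,j}$ with $\sigma_{i,j}\ket{k^x}=(-i)^{ij}(-1)^{jk}\ket{(k\oplus i)^x}$, the per-site phases collapse to $i^{ij}$, giving
\[
\sigma_{i,j}^{T}\ket{k^x}=i^{ij}(-1)^{jk}\ket{(k\oplus i)^x}.
\]
Multiplying over the $m$ sites yields $\alpha_{i\oplus1}(\vec k)=c\,(-1)^{\sum_l j_l k_l}\,\alpha_i(\vec k\oplus\vec i)$ with $c=a\prod_l i^{i_l j_l}$, which already fixes the overall phase and its modulus $|c|=1$ claimed in the theorem.

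\emph{Assembling the operator.} I would then verify that $K_{n+1}=c\bigotimes_r(F_r)^{j_r}\bigotimes_s(K_s)^{i_s}$ reproduces this shift. Acting on $\ket{\tilde G_i}$, the factors $(K_s)^{i_s}$ flip $k_s\mapsto k_s\oplus i_s$ through (\ref{def:k:aux}) for $\O'$, while the factors $(F_r)^{j_r}$ contribute $(-1)^{j_r(k_r\oplus i_r)}$ through (\ref{def:f:aux}); after the substitution $\vec k\mapsto\vec k\oplus\vec i$ in the sum, these phases and index shifts reassemble precisely into $\ket{\tilde G_{i\oplus1}}$. This identifies $\gamma_s=i_s$ and $\delta_s=j_s$, with $K_s\in\K_n$, $F_s\in\F_n$, completing the recurrence.

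\emph{Main obstacle.} The delicate part is the phase bookkeeping rather than the structure. First, one must correctly obtain the factor $i^{i_lj_l}$ from the interplay between the Pauli transpose (only $Y^{T}=-Y$ flips sign) and the sign that $\sigma_{i,j}$ produces on $X$-eigenstates. Second, the ordering of the $F_r$ relative to the $K_s$ matters: applying $F_r^{j_r}$ after the flip yields $(-1)^{j_r(k_r\oplus i_r)}$ rather than $(-1)^{j_r k_r}$, so the change of summation variable must be carried out consistently for the two expressions to coincide.
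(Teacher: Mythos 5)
Your proposal is correct and follows essentially the same route as the paper's own proof in Appendix~\ref{app:proofs}: you apply the candidate operator to $\ket{G^{n+1}_i}$, reduce the claim to the identity $\alpha_{i\oplus1}(\vec k)=c\,(-1)^{\sum_l j_l k_l}\alpha_i(\vec k\oplus\vec i)$, and derive that identity by moving $K_1$ across $\bra{\phip}^{\otimes m}$ via the transpose trick and evaluating $K_1^{T}$ on $X$-eigenstates, obtaining the same phase $c=a\prod_l i^{i_l j_l}$. The ordering subtlety you flag for the $F_r$ factors is handled in the paper by re-indexing the sum before applying the $F_r$, which is equivalent to your bookkeeping.
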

We sketch the proof of Theorem \ref{thm.koperators} as follows: First one observes that the new auxiliary operator $K_{n+1}$ must satisfy (\ref{def:k:aux}), i.e. $K_{n+1} \ket{G^{n+1}_{i}} = \ket{G^{n+1}_{i \oplus 1}}$. The application of $K_{n+1} = c \bigotimes\limits^m_{r=1} (F_r)^{\delta_r} \bigotimes\limits^m_{s=1} (K_s)^{\gamma_s}$ to $\ket{G^{n+1}_{i}} = \sum_{k_1,..,k_m} \alpha_{i}(k_1,\ldots,k_m) \ket{G^{n}_{k_1} ... G^{n}_{k_m}}$ leads to $c (-1)^{\sum_r \delta_r k_r} \alpha_{i}(k_1 \oplus \gamma_1,..,k_m \oplus \gamma_m) =\alpha_{i \oplus 1}(k_1,..,k_m)$. From the definition of the connecting functions $\alpha_i$ (\ref{def:confunc}), the expansion of $K_1$ in the Pauli basis and the well-known relationship $(\sigma_{i,j} \otimes \id) \ket{\phip} = (-1)^{ij}(\id \otimes \sigma_{i,j}) \ket{\phip}$ follows the claim. We provide a detailed proof in Appendix \ref{app:proofs}. \newline
Theorem \ref{thm.koperators} establishes a recurrence relation for auxiliary operators of type $K$ for concatenated quantum tasks. We provide a similar theorem for the auxiliary operators of type $F$.
\begin{theo}[Recurrence relation for $F_{n+1}$]\label{thm.foperators1}
Let $F_1 = a \bigotimes^m_{k=1} \sigma_{i_k,j_k}$ with $a \in \lbrace \pm 1, \pm i \rbrace$, $i_k \in \lbrace 0,1 \rbrace$ and $j_k \in \lbrace 0,1 \rbrace$ be the $F-$type operator of $\ket{\psi^{\mathcal{O}}}$. \newline
Then there exist $\epsilon \in \ZZ^m$, $\eta \in \ZZ^m$ and $c \in \C$ where $|c|=1$ such that the auxiliary operator $F_{n+1}$ of $\ket{\psi^{\mathcal{O'}^{\otimes m} \circ \mathcal{O}}}$ is given by $F_{n+1} = c \bigotimes\limits^m_{r=1} (F_r)^{\eta_r} \bigotimes\limits^m_{s=1} (K_s)^{\epsilon_s}$ where $\epsilon_s = i_s$, $\eta_s = j_s$, $K_{s} \in \K_{n}$ and $F_{s} \in \F_{n}$ for all $s$ and $c = a \prod\limits^m_{l=1} i^{i_l j_l}$.
\end{theo}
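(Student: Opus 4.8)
The plan is to mirror the proof sketch of Theorem~\ref{thm.koperators}, replacing the logical-$X$ requirement \eqref{def:k:aux} by the logical-$Z$ requirement \eqref{def:f:aux}. Concretely, I would impose that the candidate operator satisfy $F_{n+1}\ket{G^{n+1}_i} = (-1)^i\ket{G^{n+1}_i}$, substitute the proposed form $F_{n+1} = c\bigotimes_{r=1}^m (F_r)^{\eta_r}\bigotimes_{s=1}^m (K_s)^{\epsilon_s}$, and expand $\ket{G^{n+1}_i} = \sum_{k_1,\dots,k_m}\alpha_i(k_1,\dots,k_m)\ket{G^n_{k_1}\dots G^n_{k_m}}$ as read off from \eqref{framework.eq.1}. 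The goal is then to reduce the operator identity to a single functional equation for the connecting functions $\alpha_i$ and to solve for the exponents $\epsilon,\eta$ and the prefactor $c$.

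First I would evaluate the action of the ansatz mode by mode on the inner building blocks, using $(K_s)^{\epsilon_s}\ket{G^n_{k_s}} = \ket{G^n_{k_s\oplus\epsilon_s}}$ from \eqref{def:k:aux} and $(F_s)^{\eta_s}\ket{G^n_{k_s}} = (-1)^{\eta_s k_s}\ket{G^n_{k_s}}$ from \eqref{def:f:aux}. Keeping the operator ordering (all $K_s$ applied before all $F_r$, so on a coincident output system the phase attaches to the already-shifted label) gives $F_{n+1}\ket{G^{n+1}_i} = c\sum_{k_1,\dots,k_m}\alpha_i(k_1,\dots,k_m)\prod_s(-1)^{\eta_s(k_s\oplus\epsilon_s)}\ket{G^n_{k_1\oplus\epsilon_1}\dots G^n_{k_m\oplus\epsilon_m}}$. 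Substituting $k_s'=k_s\oplus\epsilon_s$ and matching the coefficient of each $\ket{G^n_{k_1'}\dots G^n_{k_m'}}$ against $(-1)^i\alpha_i(k_1',\dots,k_m')$ reduces the whole claim to the transformation law
\begin{align}
(-1)^i\,\alpha_i(k_1,\dots,k_m) = c\,(-1)^{\sum_s j_s k_s}\,\alpha_i(k_1\oplus i_1,\dots,k_m\oplus i_m),
\end{align}
together with the identifications $\epsilon_s=i_s$ and $\eta_s=j_s$.

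The crux is to derive this transformation law directly from the defining property of $F_1$. Here I would insert $F_1\ket{G^1_i}_{out} = (-1)^i\ket{G^1_i}_{out}$ into the definition \eqref{def:confunc} of $\alpha_i$, write $F_1 = a\bigotimes_k\sigma_{i_k,j_k}$, and ricochet each $\sigma_{i_k,j_k}$ from the $out$ register onto the $in'$ register by the identity $(\sigma_{i,j}\otimes\id)\ket{\phip} = (-1)^{ij}(\id\otimes\sigma_{i,j})\ket{\phip}$. Acting with the displaced Pauli on $\ket{k_k^x}$ flips the label to $(k_k\oplus i_k)^x$ and contributes a phase $(-i)^{i_kj_k}(-1)^{j_kk_k}$; combined with the ricochet sign $(-1)^{i_kj_k}$ the per-mode factor collapses to $i^{i_kj_k}(-1)^{j_kk_k}$, and taking the product over all $m$ modes reproduces exactly the right-hand side above with $c=a\prod_{l=1}^m i^{i_lj_l}$. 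This is the same Bell-measurement bookkeeping that fixes the prefactor for $K_{n+1}$ in Theorem~\ref{thm.koperators}, only with the logical-$Z$ rather than the logical-$X$ defining relation driving it.

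I expect the main obstacle to be purely the phase accounting, not anything structural. One must simultaneously track three distinct sources of signs and powers of $i$ --- the ricochet sign $(-1)^{i_kj_k}$, the intrinsic $X$-basis phase $(-i)^{i_kj_k}(-1)^{j_kk_k}$ of $\sigma_{i_k,j_k}$ relative to the convention $\sigma_{1,1}=Y$, and the shift-dependent phase $(-1)^{\eta_s(k_s\oplus\epsilon_s)}$ coming from the non-commutation of $F_s$ and $K_s$ on the same output system --- and check that they conspire to leave precisely $c=a\prod_l i^{i_lj_l}$ with no leftover $(k_1,\dots,k_m)$-dependence beyond the factor $(-1)^{\sum_s j_s k_s}$. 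Once the transformation law is established, the reindexing is routine and the identifications $\epsilon_s=i_s$, $\eta_s=j_s$ drop out immediately, completing the argument; the full details parallel those deferred to Appendix~\ref{app:proofs} for Theorem~\ref{thm.koperators}.
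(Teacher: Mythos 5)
Your proposal is correct and follows essentially the same route as the paper's proof in Appendix \ref{app:proofs}: apply the ansatz to $\ket{G^{n+1}_i}$, reduce to the functional equation $c(-1)^{\sum_r \eta_r k_r}\alpha_i(k_1\oplus\epsilon_1,\dots,k_m\oplus\epsilon_m)=(-1)^i\alpha_i(k_1,\dots,k_m)$, and derive that equation from $F_1\ket{G^1_i}=(-1)^i\ket{G^1_i}$ by ricocheting $F_1$ across $\ket{\phi^+}^{\otimes m}$ and tracking the same three phase contributions, yielding $\epsilon=(i_1,\dots,i_m)$, $\eta=(j_1,\dots,j_m)$ and $c=a\prod_l i^{i_l j_l}$. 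No gaps.
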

The proof is similar to the proof of Theorem \ref{thm.koperators}: $F_{n+1}$ must satisfy (\ref{def:f:aux}), i.e. $F_{n+1} \ket{G^{n+1}_{i}} = (-1)^{i} \ket{G^{n+1}_{i}}$. Hence applying $F_{n+1}$ to $\ket{G^{n+1}_{i}}$ implies the condition $c (-1)^{\sum_r \eta_r k_r} \alpha_{i}(k_1 \oplus \epsilon_1,..,k_m \oplus \epsilon_m) = (-1)^{i} \alpha_{i}(k_1,..,k_m)$. Again, from the definition of $\alpha_i$ (\ref{def:confunc}), and the expansion of $F_1$ the claim follows. The detailed proof is provided in Appendix \ref{app:proofs}. \newline
Thus, Theorem \ref{thm.koperators} and \ref{thm.foperators1} show that the sets $\K_{n+1}$ and $\F_{n+1}$ are given by
\begin{align}
\K_{n+1} = \bigcup\limits_{1 \leq l \leq |\K_1|} \lbrace c \bigotimes\limits^m_{r=1} (F_r)^{\delta^{(l)}_r} & \bigotimes\limits^m_{s=1} (K_s)^{\gamma^{(l)}_s} : \notag \\
& F_r \in \F_{n}, K_s \in \K_{n} \rbrace, \label{eq:k:fam} \\
\F_{n+1} = \bigcup\limits_{1 \leq l \leq |\F_1|} \lbrace c \bigotimes\limits^m_{r=1} (F_r)^{\eta^{(l)}_r} & \bigotimes\limits^m_{s=1} (K_s)^{\epsilon^{(l)}_s} : \notag \\
& F_r \in \F_{n}, K_s \in \K_{n} \rbrace, \label{eq:f:fam}
\end{align}
where the families $\lbrace \gamma^{(l)} \rbrace_{1 \leq l \leq |\K_1|}$, $\lbrace \delta^{(l)} \rbrace_{1 \leq l \leq |\K_1|}$ and $\lbrace \epsilon^{(l)} \rbrace_{1 \leq l \leq |\F_1|}$, $\lbrace \eta^{(l)} \rbrace_{1 \leq l \leq |\F_1|}$ denote the decompositions in the Pauli basis of the initial auxiliary operators in $\K_1$ and $\F_1$ respectively. \newline
The sets $\K_{n+1}$ and $\F_{n+1}$ enable us to provide a complete set of recurrence relations for auxiliary operators via (\ref{eq:k:fam}) and (\ref{eq:f:fam}). Recall that the auxiliary operators immediately translate to stabilizers via (\ref{framework:kstab}) and (\ref{framework:fstab}). In general the sets $\K_{n+1}$ and $\F_{n+1}$ will contain too many auxiliary operators depending on $|\K_1|$ and $|\F_1|$ and $|\K_n|$ and $|\F_n|$. This is rather obvious as all initial auxiliary operators in $\K_1$ and $\F_1$ enable the application of Theorem \ref{thm.koperators} and \ref{thm.foperators1}. The new stabilizers uniquely describe the resulting state because we can choose a sufficient number (i.e. the number of qubits in the resulting state) of linear independent stabilizers of the form (\ref{framework:kstab}) and (\ref{framework:fstab}) from the sets $\K_{n+1}$ and $\F_{n+1}$. For the examples we are concerned with in the following sections this follows immediately from the construction of the new stabilizers using Theorem \ref{thm.koperators} and \ref{thm.foperators1}. We address this issue in detail for those cases in Appendix \ref{app:lin}. Furthermore, we show that this method always provides a full set of independent stabilizers in Appendix \ref{app:numberofstabilizers}.\newline
We conclude the technical results by providing a theorem concerning the concatenation of single qubit output with single qubit input quantum tasks, crucial for the measurement-based implementation of code switchers and quantum repeaters \cite{BriegelRepeater}. The setting is illustrated in Fig. \ref{fig:thm:coupling}.

\begin{figure}[h!]
\scalebox{1}{
\includegraphics{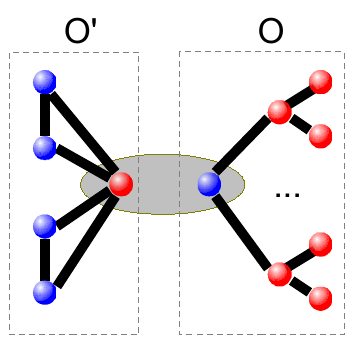}
}
\caption[h!]{Illustration of the concatenation of single qubit output with single qubit input quantum tasks via a Bell-measurement. Here the single qubit output task $O$ is concatenated with the single qubit input task $O'$, e.g. $O$ might be a decoding procedure in a quantum error correction and $O'$ might be an encoding procedure into another quantum error correction code.} \label{fig:thm:coupling}
\end{figure}

\begin{theo}[Coupling of resource states]\label{thm.coupling}
Suppose we are given two resource states, $\ket{\psi_1} = \ket{+}_{in} \ket{G_0} + \ket{-}_{in} \ket{G_1}$ and $\ket{\psi_2} = \ket{+}_{out} \ket{H_0} + \ket{-}_{out}\ket{H_1}$, implementing a single-qubit input and single-qubit output quantum task respectively. Furthermore assume that we are provided the sets of auxiliary operators $\K_1, \F_1$ and $\K_2, \F_2$, i.e.
\begin{align}
K_1 \ket{G_i} = \ket{G_{i \oplus 1}} \quad \text{ and } \quad F_1 \ket{G_{i}} = (-1)^{i} \ket{G_{i}}, \\
K_2 \ket{H_i} = \ket{H_{i \oplus 1}} \quad \text{ and } \quad F_2 \ket{H_{i}} = (-1)^{i} \ket{H_{i}}.
\end{align}
where $K_i \in \K_i$ and $F_i \in \F_i$. Then the stabilizers of the concatenated quantum task, i.e. the resource state obtained by connecting $\ket{\psi_2}$ and $\ket{\psi_1}$ through a Bell-measurement between $in$ and $out$, are given by
\begin{align}
\lbrace K_1 \otimes K_2: \, & K_1 \in \K_1 , K_2 \in \K_2 \rbrace \notag \\
& \cup \lbrace F_1 \otimes F_2: \, F_1 \in \F_1, F_2 \in \F_2 \rbrace
\end{align}
\end{theo}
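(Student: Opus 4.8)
\emph{Proof idea.} The plan is to compute the resource state produced by the (virtual) coupling measurement, to verify directly that the two families of operators stabilise it, and then to show that together they generate its complete stabiliser group.

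First I would evaluate the projection onto the $\ket{\phip}$ outcome of the Bell measurement between $in$ and $out$. Writing the Bell vector in the $X$ basis as $\ket{\phip}_{in,out} = \frac{1}{\sqrt{2}}\big(\ket{+}_{in}\ket{+}_{out} + \ket{-}_{in}\ket{-}_{out}\big)$ and using the orthonormality of $\ket{+}$ and $\ket{-}$, only the two ``diagonal'' terms of $\ket{\psi_1}\otimes\ket{\psi_2}$ survive, so that
\begin{align}
\ket{\Psi} = \bra{\phip}_{in,out}\big(\ket{\psi_1}\otimes\ket{\psi_2}\big) = \frac{1}{\sqrt{2}}\big(\ket{G_0}\ket{H_0} + \ket{G_1}\ket{H_1}\big).
\end{align}
This state is supported only on the $G$ and $H$ systems, the qubits $in$ and $out$ having been consumed by the measurement.

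Next I would check the stabilising property by direct substitution. For $K_1\otimes K_2$ the defining relations $K_1\ket{G_i}=\ket{G_{i\oplus1}}$ and $K_2\ket{H_i}=\ket{H_{i\oplus1}}$ give $(K_1\otimes K_2)\ket{G_i}\ket{H_i}=\ket{G_{i\oplus1}}\ket{H_{i\oplus1}}$, so the operator merely exchanges the two summands of $\ket{\Psi}$ and returns it unchanged. For $F_1\otimes F_2$ the relations $F_1\ket{G_i}=(-1)^i\ket{G_i}$ and $F_2\ket{H_i}=(-1)^i\ket{H_i}$ contribute a factor $(-1)^i(-1)^i=1$ on each summand, again fixing $\ket{\Psi}$. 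I would also record the consistency check that the two families commute: since $K_1,F_1$ act as a logical $X/Z$ pair on the $G$ code space they anticommute, and likewise $K_2,F_2$ on $H$, so the two sign flips cancel and $(K_1\otimes K_2)(F_1\otimes F_2)=(F_1\otimes F_2)(K_1\otimes K_2)$, as is required for them to share the eigenstate $\ket{\Psi}$.

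The step I expect to be the main obstacle is \emph{completeness}: showing that these operators generate all of the stabiliser group of $\ket{\Psi}$ and not merely a subset. (Note that a common stabiliser such as $s\otimes\id$, with $s\ket{G_i}=\ket{G_i}$, also fixes $\ket{\Psi}$ but is of neither advertised form, so the statement must be read as specifying a generating set.) To handle this I would treat the Bell measurement as postselection on $X_{in}X_{out}=+1$ and $Z_{in}Z_{out}=+1$ and apply the stabiliser measurement-update rules to $\ket{\psi_1}\otimes\ket{\psi_2}$. Choosing for the first block the generators $Z_{in}\otimes \bar K_1$ and $X_{in}\otimes\bar F_1$ of the forms (\ref{framework:kstab}), (\ref{framework:fstab}) together with $N_1-2$ common generators $\id_{in}\otimes s$, and analogously $Z_{out}\otimes\bar K_2$, $X_{out}\otimes\bar F_2$ and $N_2-2$ common generators for the second block, one observes the following. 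The common generators commute with both measurement operators and survive as $s\otimes\id$ and $\id\otimes t$ on the $G,H$ systems. The generators $Z_{in}\otimes\bar K_1$ and $X_{in}\otimes\bar F_1$ anticommute with $X_{in}X_{out}$ and $Z_{in}Z_{out}$ respectively, so they cannot survive individually; each must be multiplied by its partner from the second block, producing $Z_{in}Z_{out}\otimes(\bar K_1\bar K_2)$ and $X_{in}X_{out}\otimes(\bar F_1\bar F_2)$, which commute with both measurement operators and, once $in$ and $out$ are fixed to the $\ket{\phip}$ state and discarded, restrict precisely to $\bar K_1\otimes\bar K_2$ and $\bar F_1\otimes\bar F_2$. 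This produces $(N_1-2)+(N_2-2)+2=N_1+N_2-2$ independent commuting generators, the correct number for a pure stabiliser state on the $N_1+N_2-2$ qubits of the $G$ and $H$ systems; independence follows because the common generators act trivially on one factor while $\bar K_1\otimes\bar K_2$ and $\bar F_1\otimes\bar F_2$ restrict on $G$ to the logical operators $\bar K_1,\bar F_1$, which are not common stabilisers. Since all these generators lie in the group generated by the two advertised families, the families determine $\ket{\Psi}$ uniquely. The only delicate points are tracking this commutation/pairing pattern and keeping the Pauli phases consistent throughout.
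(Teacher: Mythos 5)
Your verification step is exactly the paper's proof: the paper likewise expands $\ket{\phip}$ in the $X$ basis, obtains the post-measurement state $\propto\ket{G_0}\ket{H_0}+\ket{G_1}\ket{H_1}$, and checks by direct substitution that $K_1\otimes K_2$ and $F_1\otimes F_2$ fix it. Where you go further is the completeness/counting part. The paper's in-text proof stops at the stabilizing property and defers the question of whether one obtains enough independent stabilizers to Appendix~B, where it is settled by a combinatorial count: from linearly independent families with $|\K_1|+|\F_1|=n+1$ and $|\K_2|+|\F_2|=m+1$ one explicitly exhibits $(|\K_1|+|\K_2|-1)+(|\F_1|+|\F_2|-1)=n+m$ independent operators of the advertised product form. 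Your alternative — treating the Bell measurement as postselection on $X_{in}X_{out}$ and $Z_{in}Z_{out}$ and tracking the stabilizer update of a generating set split into $\{Z_{in}\otimes\bar K_1,\,X_{in}\otimes\bar F_1,\,\id_{in}\otimes s\}$ — is also correct and arguably more illuminating, since it explains \emph{why} the surviving generators must pair up across the two blocks; it does rely on the observation (which you make) that the common stabilizers $s\otimes\id$ are recovered as products within the advertised families because $\K_i,\F_i$ are closed under multiplication by common stabilizers. Your extra remark that $K_i$ and $F_i$ anticommute so that the product operators commute is a useful sanity check the paper omits. Both routes are sound; the paper's is shorter for the theorem itself, yours integrates the completeness argument that the paper postpones.
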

\begin{proof}
First we observe that $\ket{\phip} = (\ket{++} + \ket{--})/ \sqrt{2}$. Hence we find for the state after connecting $\ket{\psi_1}$ and $\ket{\psi_2}$ via the Bell-measurement
\begin{align}
\frac{1}{2} & \ketbra{\phip}{\phip}{in,out} \ket{\psi_1} \ket{\psi_2} \notag \\
& = \frac{1}{2} (\ket{G_0}\ket{H_0} + \ket{G_1}\ket{H_1}) \ket{\phip} =: \ket{\psi}.
\end{align}
It follows for $K_i \in \K_i$ and $F_i \in \F_i$ that
\begin{align}
(K_1 \otimes K_2) \ket{\psi} &= (K_1 \otimes K_2) (\ket{G_0}\ket{H_0} + \ket{G_1}\ket{H_1})/2 \notag \\
& = (\ket{G_1}\ket{H_1} + \ket{G_0}\ket{H_0})/2 = \ket{\psi}, \\
(F_1 \otimes F_2) \ket{\psi} &= (F_1 \otimes F_2) (\ket{G_0}\ket{H_0} + \ket{G_1}\ket{H_1})/2 \notag \\
& = (\ket{G_0}\ket{H_0} + (-1)^2\ket{G_1}\ket{H_1})/2 = \ket{\psi},
\end{align}
which shows that $K_1 \otimes K_2$ and $F_1 \otimes F_2$ stabilize $\ket{\psi}$ as claimed.
\end{proof}

\subsection{Dealing with byproduct operators}\label{sec:frm:by}

Finally we have to deal with Pauli byproduct operators due to the Bell-measurements at the read-in of the resource state. Because all quantum operations we are concerned with here belong to the Clifford group, the Pauli byproduct operators at the input qubits propagate through the resource state leading to (possibly different) Pauli operations at the output qubits, see Sec. \ref{sec:back:stab}. Thus it is left to determine how those Pauli errors propagate through a concatenated resource state which we discuss below.

If the resource states $\ket{\psi^\O}$ and $\ket{\psi^{\O'}}$ implementing the quantum operations $\O$ and $\O'$ are concatenated to obtain the resource state $\ket{\psi^{\O'^{\otimes m} \circ O}}$ implementing $\O'^{\otimes m} \circ \O$, the Pauli byproduct operators due to the read-in Bell-measurements need to be propagated through $\ket{\psi^{\O'^{\otimes m} \circ O}}$. This is done as follows: the Pauli byproduct operator $\sigma_{\alpha,\beta}$ at the read-in of $\ket{\psi^\O}$ translates to an effective Pauli error $\sigma_{f(\alpha,\beta)}$ on the output of $\ket{\psi^\O}$. Because $\ket{\psi^\O}$ and $\ket{\psi^{\O'}}^{\otimes m}$ were connected via a virtual Bell-measurement with $\ket{\phip}$ outcome, the Pauli error $\sigma_{f(\alpha,\beta)}$ at the output of $\ket{\psi^{\O}}$ is also the effective Pauli error on the input of $\ket{\psi^{\O'}}^{\otimes m}$. Hence one obtains the resulting Pauli error up to a global phase on the output of $\ket{\psi^{\O'^{\otimes m} \circ \O}}$ by propagating $\sigma_{f(\alpha,\beta)}$ through $\ket{\psi^{\O'}}^{\otimes m}$ leading to a deterministic correctable Pauli error $\sigma_{g(f(\alpha,\beta))}$ at the output of $\ket{\psi^{\O'^{\otimes m} \circ \O}}$.

\section{Applications}\label{sec:applications}

In Sec. \ref{sec:framework} we showed theorems which allow one to construct auxiliary operators for concatenated quantum tasks, and determine the stabizers of the resulting resource states. Now we provide applications of those theorems to different tasks in quantum communication and computation. In particular, we use Theorems \ref{thm.koperators}-\ref{thm.coupling} to construct stabilizers and graph state representation of resource states for measurement-based implementations of multiple rounds of entanglement purification protocols, for quantum error correction including encoding, decoding and syndrome readout for concatenated quantum codes, as well as for code switchers. Finally we construct resource states for the measurement-based implementation of entanglement purification protocols at a logical level.

\subsection{Bipartite entanglement purification protocols}

Bipartite entanglement purification protocols are used to distill high-fidelity entangled states, ideally a perfect Bell-pair, from a set of noisy copies by means of local operations and classical communication. 
Entanglement purification is an important primitive in quantum information processing \cite{DurEPP}, and constitutes a possible way to prepare states with high fidelity, both locally where they have the role of resource states to perform certain operations or tasks \cite{Wallnoefer}, as well as in a distributed, non-local way, where entanglement purification is used as a central element in long-distance quantum communication protocols, the quantum repeater \cite{BriegelRepeater}. Entanglement purification protocols exist for all graph states \cite{EPPallGraphs}, and the methods we develop  here are also applicable to such protocols. That is, one can construct the corresponding resource states to perform multiple rounds of multi-party entanglement purification in a similar way as outlined below for bipartite states.

Several different protocols for bipartite entanglement purification have been proposed \cite{Deutsch,Bennett,DurEPP,AschauerThesis}. Here we provide a detailed analysis of a measurement-based implementation of the entanglement purification protocol of \cite{Deutsch}, which we refer to as DEJMPS protocol in the following. The DEJMPS protocol is a recurrence-type entanglement purification protocol that operates on two noisy pairs and produces probabilistically one pair with improved fidelity. This is achieved by applying at each site, referred to as Alice and Bob, a certain two-qubit operation between the two pairs, followed by a local measurement of the second pair. Depending on the measurement outcome, the remaining pair is either kept or discarded, a step for which two-way classical communication is necessary. This constitutes the basic purification step, where only Clifford operations and Pauli measurements are involved and hence a three-qubit resource state with two input and one output particle can be found for a measurement-based implementation \cite{ZwergerEPP,ZwergerRepeater}. This basic purification step (also referred to as purification round) may be applied in an iterative manner, using the output pairs of the previous round as input pairs for the next round. One may combine $m$ of these basic purification steps, thereby obtaining a protocol that operates on $2^m$ input pairs which produces (probabilistically) one output pair. The corresponding resource state for a measurement-based implementation at each site is of size $2^m+1$. This setting has been studied for $m \leq 2$ in \cite{ZwergerRepeater}. Notice that the resource state for performing several purification rounds in one steps contains fewer qubits than the resource states to perform the same task in a sequential fashion. For two rounds, one needs at each site three three-qubit states, i.e. nine qubits, when performing the protocol in a sequential fashion, while a resource state of five qubits (four input plus one output) suffices for the overall task. This reduction in size of resource states seems to be the crucial feature that leads to very high error thresholds in a measurement-based implementation, where a threshold of more than $23\%$ noise per qubit was found \cite{ZwergerEPP}. In turn, performing multiple rounds in one step leads to a smaller success probability. Obtaining the explicit form of resource states that allow one to perform entanglement purification with such a high robustness and tolerance against noise and imperfections is highly relevant, and will be the subject of the remainder of this section.

We emphasize that our results go beyond the analysis provided in \cite{ZwergerEPP,ZwergerRepeater}, as we offer a framework for constructing analytically the \emph{concrete} stabilizers of the resource state for an \emph{arbitrary} number of rounds of entanglement purification explicitly rather than considering only a small number of entanglement purification rounds. Fig. \ref{fig.epp.meas} shows the measurement-based implementation of two rounds of the DEJMPS protocol.
\begin{figure}[h!]
\scalebox{0.8}{
\includegraphics{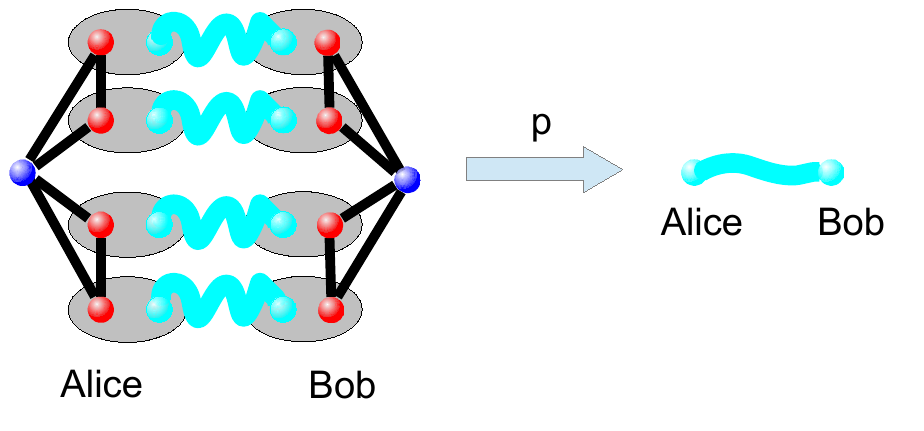}
}
\caption[h!]{Measurement-based implementation of two rounds of the DEJMPS entanglement distillation protocol. The protocol operates on four input pairs and produces one output pair probabilistically.} \label{fig.epp.meas}
\end{figure}

\subsubsection{Construction of resource state for several rounds of the DEJMPS protocol}

The resource state for one basic purification step of the DEJMPS protocol at Alice's side \cite{ZwergerRepeater} is
\begin{align}
\ket{\psi_{\text{D}}} = 1 / \sqrt{2} \left(\ket{-} \ket{\phim} - i \ket{+} \ket{\psip} \right).
\end{align}
Hence we have that $\ket{G_0} = -i \ket{\psip}$ and $\ket{G_1} = \ket{\phim}$. The resource state at Bobs side differs in the sign of $\ket{G_0}$. \newline
One easily verifies that the initial auxiliary operators of $\ket{\psi_{\text{D}}}$ are $K_1 = -(Y \otimes \id)$, $K'_1 = -(\id \otimes Y)$ and $F_1 = -(Z \otimes Z)$. Hence $\K_1 = \lbrace -(Y \otimes \id), -(\id \otimes Y) \rbrace$ and $\F_1 = \lbrace -(Z \otimes Z) \rbrace$. Thus we obtain via Theorem \ref{thm.koperators} that $\gamma_1=\delta_1=(1,0)$ and $\gamma_2=\delta_2=(0,1)$ where $c = -i$ for both cases. Furthermore Theorem \ref{thm.foperators1} implies $\epsilon=(1,1)$, $\eta=(0,0)$ and $c=-1$. To summarize, the sets of auxiliary operators $\K_{n+1}$ and $\F_{n+1}$ are given by
\begin{align}
\K_{n+1} &= \lbrace -i(F_{n} K_{n} \otimes \id): \, F_{n} \in \F_n, K_{n} \in \K_{n} \rbrace \label{dejmps:krec1} \\
& \cup \lbrace -i(\id \otimes F_{n} K_{n}): \, F_{n} \in \F_n, K_{n} \in \K_{n} \rbrace \label{dejmps:krec2}, \\
\F_{n+1} &= \lbrace -(K_n \otimes K_n): \, F_{n} \in \F_n, K_{n} \in \K_{n} \rbrace \label{dejmps:frec}.
\end{align}
The stabilizers of the resource state for performing $n$ purification steps of the DEJMPS protocol are, according to (\ref{framework:kstab}) and (\ref{framework:fstab}), $\lbrace Z \otimes K_{n} \rbrace_{K_n \in \K_n}$ and $\lbrace X \otimes F_{n} \rbrace_{F_n \in \F_n}$. One easily verifies that those stabilizers are linear independent for all $n$ by construction.

\subsubsection{Graph state representation of resource states}

In the following we transform the resource states of the DEJMPS protocol via local unitaries to graph states and show that there exists a construction scheme solely based  on simple graph rules for concatenating the DEJMPS protocol. For that purpose we observe from (\ref{dejmps:krec2}) and (\ref{dejmps:frec}) that
\begin{align}
F_{n-1} K_{n-1} &= -(K_{n-2} \otimes K_{n-2})(-i)(\id \otimes F_{n-2} K_{n-2}) \notag \\
& = i (K_{n-2} \otimes K_{n-2} F_{n-2} K_{n-2}) \notag \\
& = i \delta_i (K_{n-2} \otimes F_{n-2}) \label{eq:dejmps:arb:1}
\end{align}
for $K_{n-2} \in \K_{n-2}$ and $F_{n-2} \in \F_{n-2}$ where $\delta_i \in \lbrace 1, -1 \rbrace$ depends on whether $K_{n-2}$ and $F_{n-2}$ commute or anticommute. Similarly one obtains via (\ref{dejmps:krec1}) and (\ref{dejmps:frec}) that
\begin{align}
F_{n-1} K_{n-1} =  i \delta_i (F_{n-2} \otimes K_{n-2}) \label{eq:dejmps:arb:2}
\end{align}
for $K_{n-2} \in \K_{n-2}$ and $F_{n-2} \in \F_{n-2}$. Inserting (\ref{eq:dejmps:arb:1}) and (\ref{eq:dejmps:arb:2}) in (\ref{dejmps:krec1}) and (\ref{dejmps:krec2}) implies that
\begin{align}
\K_{n} &= \lbrace \delta^{(1)}_{KF} \id \otimes \id \otimes K \otimes F: K \in \K_{n-2}, F \in \F_{n-2} \rbrace \notag \\
& \cup \lbrace \delta^{(2)}_{KF} \id \otimes \id \otimes F \otimes K : K \in \K_{n-2}, F \in \F_{n-2} \rbrace \notag \\
& \cup \lbrace \delta^{(3)}_{KF} K \otimes F \otimes \id \otimes \id: K \in \K_{n-2}, F \in \F_{n-2} \rbrace \notag \\
& \cup \lbrace \delta^{(4)}_{KF} F \otimes K  \otimes \id \otimes \id : K \in \K_{n-2}, F \in \F_{n-2} \rbrace
\end{align}
where $\delta^{(i)}_{KF} \in \lbrace -1, 1 \rbrace$ for $1 \leq i \leq 4$ depend on $K \in \K_{n-2}$ and $F \in \F_{n-2}$. One computes in a similar fashion for $F_{n}$ that $F_{n} = -(K_{n-1} \otimes K_{n-1}) = \id \otimes F_{n-2} K_{n-2} \otimes \id \otimes F_{n-2} K_{n-2}$ for $K_{n-2} \in \K_{n-2}$ and $F_{n-2} \in \F_{n-2}$. To summarize, the stabilizers of the resource state for $n$ rounds of purification are
\begin{align}
\delta_{K'_1 F'_1} & Z \otimes \id \otimes \id \otimes K'_1 \otimes F'_1 \label{eq:dejmps:arb:stab1}\\
\delta_{K'_2 F'_2} & Z \otimes \id \otimes \id \otimes F'_2 \otimes K'_2 \label{eq:dejmps:arb:stab2}\\
\delta_{K'_3 F'_3} & Z \otimes K'_3 \otimes F'_3 \otimes \id \otimes \id \label{eq:dejmps:arb:stab3}\\
\delta_{K'_4 F'_4} & Z \otimes F'_4 \otimes K'_4  \otimes \id \otimes \id \label{eq:dejmps:arb:stab4}\\
& X \otimes \id \otimes F'_5 K'_5 \otimes \id \otimes F'_6 K'_6 \label{eq:dejmps:arb:stab5}
\end{align}
where $K'_{i} \in \K_{n-2}$ and $F'_{i} \in \F_{n-2}$ for all $1 \leq i \leq 6$.  Multiplying (\ref{eq:dejmps:arb:stab2}), (\ref{eq:dejmps:arb:stab4}) and (\ref{eq:dejmps:arb:stab5}) with appropriate choices of $K'_2=K'_6$ and $K'_4=K'_5$,  yields the stabilizer $\delta' X \otimes F'_{4} \otimes F'_{5} \otimes F'_{2} \otimes F'_{6}$, where we have used that Pauli operators either commute or anti-commute. Thus we obtain as stabilizers of the resource state for $n+2$ rounds of the DEJMPS
\begin{align}
\delta_{K'_1 F'_1} & Z \otimes \id \otimes \id \otimes K'_{1} \otimes F'_{1} \label{eq:dejmps:arb:stab:new1} \\
 \delta_{K'_2 F'_2} & Z \otimes \id \otimes \id \otimes F'_{2} \otimes  K'_{2} \label{eq:dejmps:arb:stab:new2} \\
 \delta_{K'_3 F'_3} & Z \otimes K'_{3} \otimes F'_{3} \otimes \id \otimes  \id \label{eq:dejmps:arb:stab:new3} \\
 \delta_{K'_4 F'_4} & Z \otimes F'_{4} \otimes K'_{4} \otimes \id \otimes  \id \label{eq:dejmps:arb:stab:new4} \\
 \delta' & X \otimes F'_{5} \otimes F'_{6} \otimes  F'_{7} \otimes F'_{8} \label{eq:dejmps:arb:stab:new5}
\end{align}
where $K'_{i} \in \K_{n-2}$ for $1 \leq i \leq 4$ and $F'_{j} \in \F_{n-2}$ for all $1 \leq j \leq 8$. From $F_{1} = Z^{\otimes 2}$ and $F_{2} = Z^{\otimes 4}$ we deduce
\begin{align}
F_{n} = Z^{\otimes 2^n}, \label{eq:dejmps:arb:fallz}
\end{align}
which implies $|\F_{n}| = 1$ for all $n$. Now we show that the stabilizers (\ref{eq:dejmps:arb:stab:new1})-(\ref{eq:dejmps:arb:stab:new5}) correspond to a graph state up to local Clifford operations. \newline
We observe from (\ref{eq:dejmps:arb:stab1})-(\ref{eq:dejmps:arb:stab4}) that if the operators in $\K_{1}$ (odd number of purification rounds) or $\K_{2}$ (even number of purification rounds) contain exactly \emph{one} $Y$ or $X$ respectively then all subsequent $K_{n} \in \K_{n}$ will also contain exactly \emph{one} $Y$ or $X$. (\ref{eq:dejmps:arb:fallz}) implies that the output particle will be connected to all input particles and that the stabilizers describe a valid graph state. \newline
Furthermore we find from (\ref{eq:dejmps:arb:fallz}) and (\ref{eq:dejmps:arb:stab:new1}) - (\ref{eq:dejmps:arb:stab:new4}) that the resource state for $n$ rounds of purification is a tensor product corresponding to two disjoint subgraphs, each described by (\ref{eq:dejmps:arb:stab:new1})-(\ref{eq:dejmps:arb:stab:new2}) and (\ref{eq:dejmps:arb:stab:new3})-(\ref{eq:dejmps:arb:stab:new4}). Both subgraphs are of the form
\begin{align}
 \delta_{K'_1 F'_{n-2}} & Z \otimes K'_{1} \otimes F_{n-2}, \label{eq:dejmps:arb:scaling:k:1}\\
\delta_{K'_2 F'_{n-2}} & Z \otimes F_{n-2} \otimes K'_{2} \label{eq:dejmps:arb:scaling:k:2}
\end{align}
where $K'_{i} \in \K_{n-2}$ for $1 \leq i \leq 2$. From (\ref{eq:dejmps:arb:scaling:k:1}) and (\ref{eq:dejmps:arb:scaling:k:2}) we infer the following construction scheme for the resource state of the DEJMPS protocol:
\begin{enumerate}
	\item Remove the output qubit from the resource state. This results in two disjoint graphs $G_1$ and $G_2$. \label{enu:dejmps:arb:const:1}
	\item Duplicate the disjoint graphs of item \ref{enu:dejmps:arb:const:1} and label them with $G_3$ and $G_4$. \label{enu:dejmps:arb:const:2}
	\item Connect $G_1, G_2, G_3$ and $G_4$ according to the following rules:
	\begin{itemize}
		\item Each particle of $G_1$ with each particle of $G_3$
		\item Each particle of $G_1$ with each particle of $G_4$
		\item Each particle of $G_2$ with each particle of $G_3$
		\item Each particle of $G_2$ with each particle of $G_4$
	\end{itemize}
	We denote the resulting graph $G'_1$.
	\label{enu:dejmps:arb:const:3}
	\item Duplicate the graph $G'_1$ of \ref{enu:dejmps:arb:const:3} and denote it $G'_2$. \label{enu:dejmps:arb:const:4}
	\item Insert the output qubit and connect it to all particles in $G'_1$ and $G'_2$.
	\label{enu:dejmps:arb:const:5}
\end{enumerate}

The construction scheme is depicted in Fig. \ref{fig:1epp-3epp} and \ref{fig:3epp-5epp} for the resource state of three and five rounds of entanglement purification.
The resulting resource states for one, two and three steps of the DEJMPS protcol are depicted in Fig. \ref{fig:123epp}.

\begin{figure}[h!]
\scalebox{0.6}{
\includegraphics{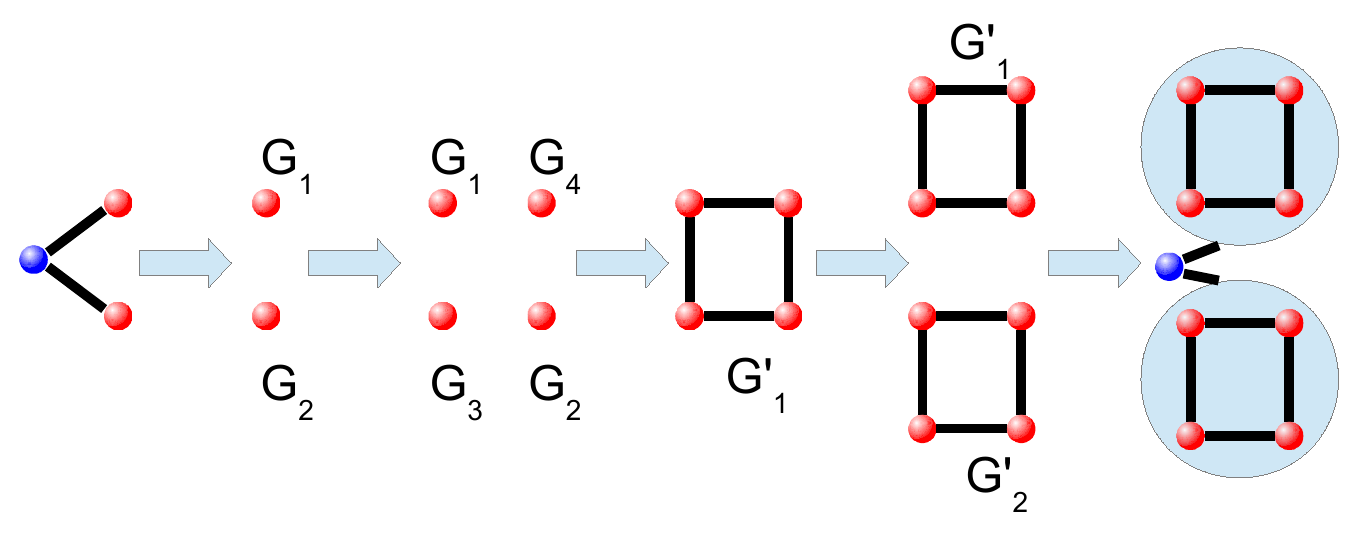}
}
\caption[h!]{Constructing the resource state for three rounds of entanglement purification from the resource state for one purification round.}\label{fig:1epp-3epp}
\end{figure}
\begin{figure}[h!]
\scalebox{0.6}{
\includegraphics{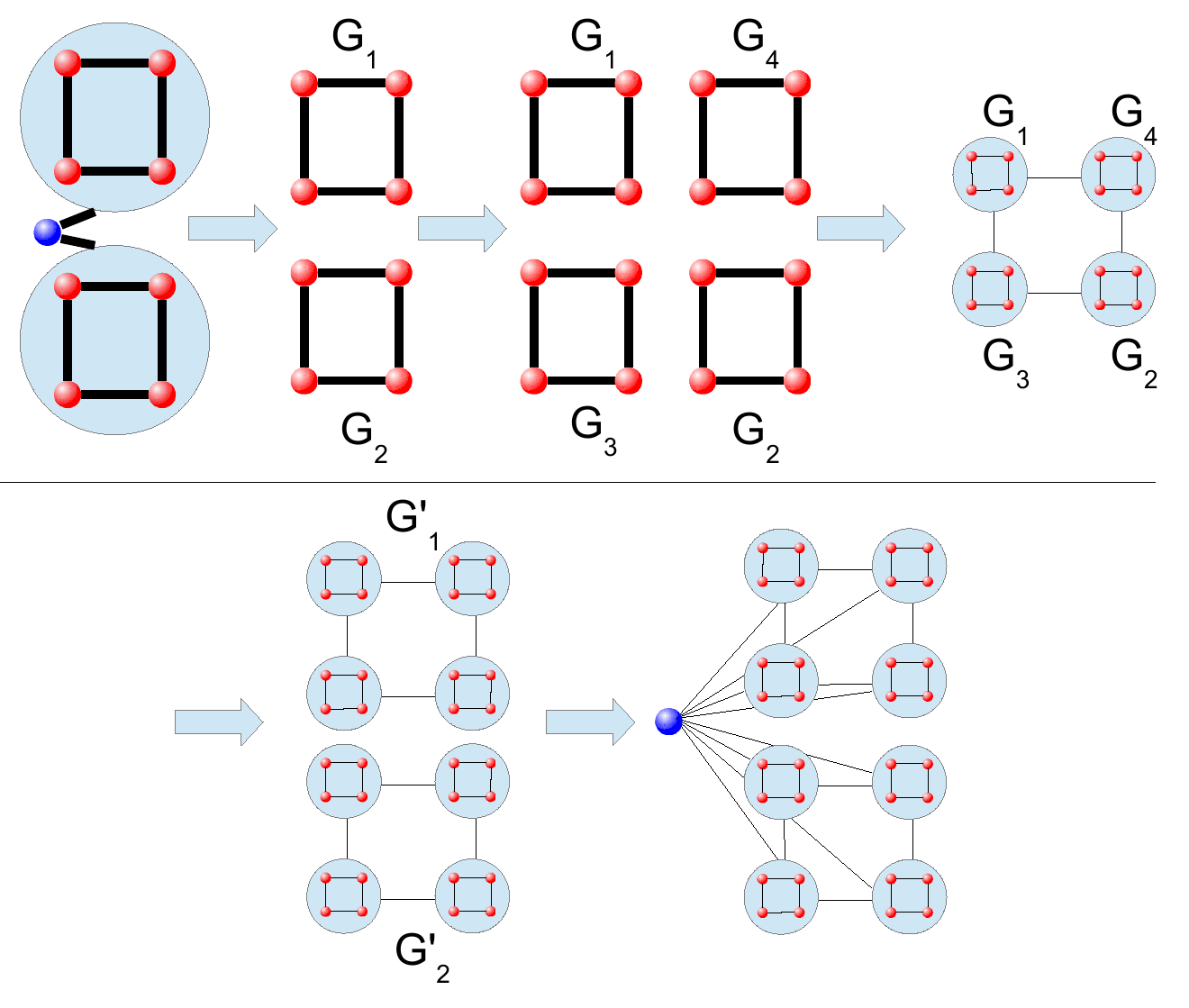}
}
\caption[h!]{Constructing the resource state for five rounds of entanglement purification from the resource state for three purification rounds.} \label{fig:3epp-5epp}
\end{figure}

\begin{figure}[h!]
\scalebox{0.85}{
\includegraphics{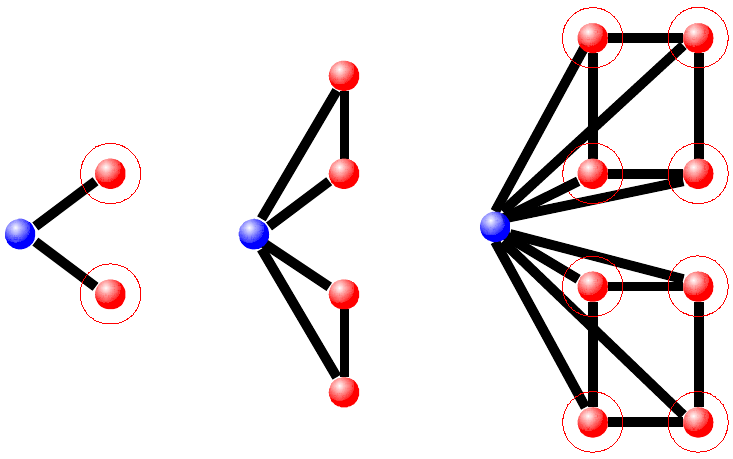}
}
\caption[h!]{Graph state representation of resource states for one, two and three rounds of entanglement purification using the DEJMPS protocol. The red circle indicates a $R_z(-\pi/2)$ rotation.}\label{fig:123epp}
\end{figure}

Since all quantum gates involved in the DEJMPS protocol are elements of the Clifford group and only measurements of the $Z$ observable are performed, we can propagate possible errors due to the read-in Bell-measurement through the resource state, see Sec. \ref{sec:back:stab}. A detailed analysis of the resulting Pauli corrections necessary at the output qubit is provided in \cite{ZwergerRepeater} for small resource states, but it is straightforward to explicitly obtain a similar result for multiple rounds of entanglement purification.

\subsection{Quantum repeaters}

Quantum repeaters enable long-distance quantum communication by combining entanglement purification and entanglement swapping \cite{BriegelRepeater}. This allows one to establish a long-distance entangled pair over arbitrary distances, with an overhead that scales only polynomially (in terms of resources, success probability, or time) with the distance. To this aim, the channel is divided into segments of shorter distance, and (noisy) short-distance entangled pairs are generated over all segments. Entanglement purification is used to generate pairs with high fidelity, after which two neighboring pairs are connected by means of Bell measurement. The measurement within the Bell-basis and the classical communication of the outcome is also referred to as entanglement swapping. A repeater station thus distills from several noisy Bell-pairs one Bell-pair of sufficiently high fidelity relative to $\ket{\phip}$, both for pairs from left and right. The resulting output qubits (more precisely, one part of a Bell-pair) belonging to different segments get measured within the Bell-basis and the obtained outcome is classically communicated. This establishes a long-distance quantum communication link by means of quantum teleportation. Notice that at each repeater station, there are only input and no output particles, i.e. no particles are left after the protocol has finished. Only at the end stations at Alice and Bob, no entanglement swapping is performed, and an output qubit is kept.
The basic functionality of a quantum repeater is shown in Figure \ref{fig:repeater}.

\begin{figure}[h!]
\scalebox{1}{
\includegraphics{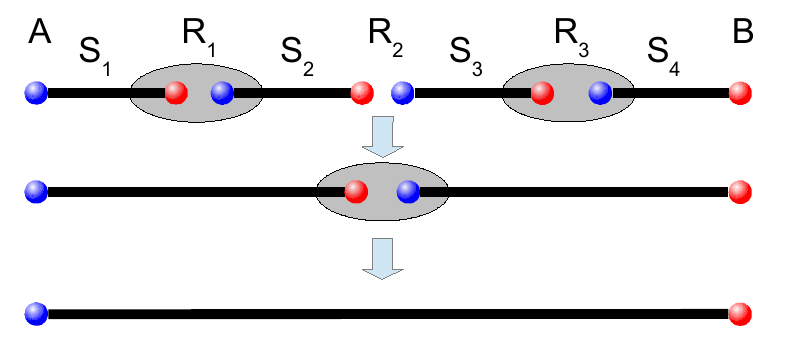}
}
\caption[h!]{Quantum repeaters establish long-distance quantum communication by combining entanglement purification and entanglement swapping.}\label{fig:repeater}
\end{figure}

If the quantum repeater station relies on a measurement-based implementation \cite{ZwergerRepeater} then the resource state of a quantum repeater is the combination of the quantum tasks \emph{entanglement purification} and \emph{entanglement swapping}. Hence we construct the explicit resource state via connecting the resource states for entanglement purification of two segments via a Bell-measurement, see Figure \ref{fig:thm:coupling}.


Now we translate this construction into the proposed framework. Because entanglement swapping coincides with a Bell-measurement if the outcome is $\ket{\phip}$, we easily find the stabilizers of the resource state at a quantum repeater via Theorem \ref{thm.coupling}. Hence the resource state is stabilized by the operators
\begin{align}
\lbrace K_{S_i} & \otimes K_{S_{i+1}} : K_{S_i} \in \K_{S_{i}}, K_{S_{i+1}} \in \K_{S_{i+1}} \rbrace \notag \\
& \cup \lbrace F_{S_i} \otimes F_{S_{i+1}} : F_{S_i} \in \F_{S_{i}}, F_{S_{i+1}} \in \F_{S_{i+1}} \rbrace \label{eq:swap:stab}
\end{align}
where $\K_{S_i}, \F_{S_i}$ and $\K_{S_{i+1}}, \F_{S_{i+1}}$ denote the sets of auxiliary operators of the entanglement purification resource states for segments $S_{i}$ and $S_{i+1}$. Using (\ref{eq:swap:stab}) one can apply Theorem 1 of \cite{Nest} to obtain a Clifford-equivalent graph state. In \cite{ZwergerRepeater} the resource states for one and two purification rounds followed by entanglement swapping were constructed explicitly, whereas here, with the proposed scheme, the stabilizers of resource states for \emph{arbitrary} rounds of entanglement purification followed by entanglement swapping can easily be constructed explicitly. 
%

\subsection{Quantum error correction}

Quantum error correction codes are used to protect quantum information against different types of errors \cite{Nielsen}. The basic idea is to encode a physical qubit as a logical qubit in a higher dimensional Hilbert space in such a way that any error maps the two-dimensional code space to an orthogonal two-dimensional subspace. By identifying the resulting error subspace, one can detect and correct the corresponding error. It is thereby sufficient to consider only Pauli errors. There exist codes that protect only against bit flip or phase flip errors, which are simple variants of classical repetition codes. One can combine these codes to obtain an error correction code that protects a single logical qubit against arbitrary errors on one of the physical qubits. This leads to the Shor code \cite{ShorCode}, an error correction code that uses nine qubits to encode a logical qubit. The optimal code to protect a single logical qubit against arbitrary errors on one of the encoding qubits is of size five, where e.g. a cluster-ring code can be used \cite{Grassl}. A multitude of good quantum error correction codes are known that allow one to protect a single (or several) qubits against arbitrary errors occurring on a certain limited number of physical qubits, as long as the error probability is small enough. One possible way to construct such large-scale codes is by concatenating small-scale codes \cite{Nielsen,gottesmanphd}, where the Shor code is the basic example. The generalized Shor code makes use of this principle, and leads in fact to a code with a very high error threshold against individual noise \cite{GenShorCode1,GenShorCode2,GenShorCode3,GenShorCode4}. The concatenation of quantum error correction codes found promising applications \cite{Kesting,LanyonQEC,DawsonNoiseQuantumComputers,DawsonNoiseClusterState}. This issue has been investigated in terms of stabilizers also in \cite{gottesmanphd} with a similar result as here, but has not been applied to measurement-based implementations. Quantum error correction codes which relate codewords to graph states have been studied in detail in  \cite{SchlingemannGraphCode,Grassl,HeinGraph,HeinMulti}. Other approaches include topological codes \cite{ToricCode,GraphCodes}. All these examples have in common that they are stabilizer codes \cite{Gottesman}, i.e. the codewords $|0_L\rangle$ and $|1_L\rangle$ are stabilizer states. For all these codes, encoding, decoding and syndrome readout can be done using only Clifford circuits and Pauli measurements, which allows for an efficient measurement-based implementation \cite{ZwergerQEC,ZwergerQC}. Elements of measurement-based quantum error correction have in fact already been experimentally demonstrated using trapped ions \cite{LanyonQEC} and photons \cite{BarzQEC}. Furthermore, \cite{DawsonNoiseQuantumComputers,DawsonNoiseClusterState} numerically found that optical cluster-state quantum computation \cite{BrownerOptical,NielsenOptical} using quantum error correction codes is practical, which enables scalable optical quantum computation. These results motivate the following subsection.

\subsubsection{Encoding}
In this section, we consider such a measurement-based approach to quantum error correction. We first concentrate on the encoding operation, as the resource states for decoding is given by the same state, while the resource state for syndrome measurement (error correction step) can be obtained by combining the resource states for decoding and encoding.  For encoding, one needs to perform the mapping $\ket{0} \mapsto \ket{0_L}$ and $\ket{1} \mapsto \ket{1_L}$ where $\ket{0_L}$ and $\ket{1_L}$ denote the codewords of the corresponding error correction code.
A measurement-based implementation is done by preparing a resource state and executing Bell-measurements at the read-in. One easily verifies that the resource state for encoding and decoding in a particular quantum error correction code is given by
\begin{align}
\ket{\psi} &= (\ket{0}\ket{0_L} + \ket{1}\ket{1_L})/\sqrt{2} \label{qec:general} \\
&= (\ket{+}(\ket{0_L} + \ket{1_L}) + \ket{-}(\ket{0_L} - \ket{1_L}))/2. \notag
\end{align}
Therefore we have according to (\ref{eq:observation:plusminus}) that
\begin{align}
\ket{G_0} &= (\ket{0_L} + \ket{1_L})/2 \label{qec:g0} \\
\ket{G_1} &= (\ket{0_L} - \ket{1_L})/2 \label{qec:g1}.
\end{align}
As the read-in Bell-measurement for encoding quantum information is probabilistic we need to take care of different outcomes. We handle them as follows: Suppose we encode $\ket{\varphi} = \alpha \ket{0} + \beta \ket{1}$ using the resource state $\ket{\psi}$ of (\ref{qec:general}). Furthermore assume that the sets of auxiliary operators $\K$ and $\F$ of $\ket{\psi}$ are known. A straightforward computation yields for the state after the read-in (depending on the measurement outcome)
\begin{align}
\alpha \ket{0_L} + \beta \ket{1_L} \text{ for the $\ket{\phip}$-outcome}, \\
\beta \ket{0_L} + \alpha \ket{1_L} \text{ for the $\ket{\psip}$-outcome}, \\
\alpha \ket{0_L} - \beta \ket{1_L} \text{ for the $\ket{\phim}$-outcome}, \\
\beta \ket{0_L} - \alpha \ket{1_L} \text{ for the $\ket{\psim}$-outcome}.
\end{align}
A simple computation shows $K \ket{0_L} = K (\ket{G_0} + \ket{G_1}) = \ket{0_L}$ and $K \ket{1_L} = K (\ket{G_0} - \ket{G_1}) = -\ket{1_L}$ for $K \in \K$. Similarly one  obtains $F \ket{0_L} = \ket{1_L}$ and $F \ket{1_L} = \ket{0_L}$ for $F \in \F$. Hence operators in $\K$ are logical $Z$ operators whereas operators in $\F$ are logical $X$ operators. Table \ref{tab:qec:byenc:corr} enables us to determine the correction operator for different measurement outcomes at the read-in on encoding.\newline
\begin{center}
\begin{table}[h!]
\begin{tabular}{ c | c }
Outcome & Correction \\
\hline
$\ket{\phip}$ & $\id$ \\
$\ket{\psip}$ & $F$ \\
$\ket{\phim}$ & $K$ \\
$\ket{\psim}$ & $FK$
\end{tabular}
\caption{Correction operators for different measurement outcomes at the read-in where $K \in \K$ and $F \in \F$.} \label{tab:qec:byenc:corr}
\end{table}
\end{center}

\subsubsection{Decoding and correction}
Notice that the same resource state as for encoding can be used for decoding, where now the role of input and output qubits is exchanged. The decoding procedure in fact also involves an error correction step, where the required correction operation is determined by the outcomes of the in-coupling Bell measurements \cite{ZwergerQC,ZwergerQEC,LanyonQEC,BarzQEC}.
Determining the decoding correction operation is a bit more complex than in the case of encoding, as there are more possible combinations of measurement outcomes to analyse. While it is of course possible to obtain the corrections from the underlying construction of the resource state via the Jamiolkowski isomorphism, they can also be determined directly from the stabilizers and the auxiliary operators $K$ and $F$.\newline
The four Bell states can be written as $(\id \otimes \sigma_{i,j}) \ket{\phip}$ with $i=0,1$ and $j=0,1$ so the projection on any combination of Bell states can be understood as applying a combination of Pauli operators on the input state $\ket{\Psi_{\mathrm{in}}},$ followed by projections on $\ket{\phip}$. A particular combination of different measurement outcomes can only occur if the overlap with $\ket{\phip}^{\otimes n}$ is non-zero after applying these Pauli operators. This is the case only if $\sigma_{i_1,j_1} \otimes \dots \otimes \sigma_{i_n,j_n} \ket{\Psi_{\mathrm{in}}}$ is in the logical subspace. \newline
That means if no error occured only measurement outcomes corresponding to $\sigma_{i_1,j_1} \dots \sigma_{i_n,j_n}$ which leave the logic subspace invariant can occur. This is precisely the logical group $S_L$ of the underlying error correction code, which can be obtained from the auxiliary operators $K$ and $F$ in a straightforward way. Some combinations of Pauli operators act as the identity on the logic subspace and those can be found by going over all different ways to multiply two $K$ operators or two $F$ operators together. The representation of the logical operator $X_L$ [$Y_L$,$Z_L$] is not unique either and the different representation can be found by choosing one representation and multiplying it by the different representions of the identity on the logic subspace. \newline
If the outcome corresponds to a combination of Pauli operators that is one representation of $X_L$, then the correction operation on the output is $X$. This can be understood in a similar fashion to the standard quantum teleportation with one party holding a logical qubit instead of a single physical one. Therefore the measurement outcome corresponding to $X_L$ is a projection on $\ket{\psip},$ but on the logical level. The same argument holds for outcomes corresponding to $Y_L$ and $Z_L$ which lead to corrections $Y$ and $Z$ respectively. \newline
Any other combination of measurement outcomes corresponds to subspaces where an error is detected. If a code can correct the Pauli errors $E_k$, the set $E_k S_L$ corresponds to detecting the error $E_k$. Similar as before, the correction operation is given by $X$ [$Y$,$Z$] if the measurement outcomes correspond to $E_k X_L$ [$E_k Y_L$,$E_k Z_L$]. If the error correction code is not optimal, for the remaining combinations one needs to define a default correction as these belong to the subspaces that correspond to errors the code can detect, but not correct. \newline
This approach is also applicable to resource states which implement quantum circuits other than error correction. The role of $K$ and $F$ stay the same, even though it does not make sense to speak of logical operators and subspaces in that case.

In the following we first provide results concerning the bit- and phase-flip code for pedagogical reasons and generalize this results to a generalized Shor code \cite{GenShorCode1,GenShorCode2,GenShorCode3,GenShorCode4}. We have also analysed the five qubit cluster-ring code within the proposed framework and refer the interested reader to Appendix \ref{app:cluster}. We emphasize that the framework is especially suited to construct resource states for  concatenated quantum error correction codes which offer promising error thresholds \cite{Kesting,GenShorCode1,GenShorCode2,GenShorCode3,GenShorCode4}.

\subsubsection{Bit-flip code}

For illustration purpose we first discuss the bit-flip code. The $m-$qubit bit-flip code protects a logical qubit against up to $(m-1)/2$ bit flip errors, where we assumed that $m$ is odd. Therefore we have the encoding $\ket{0_L} = \ket{0}^{\otimes m}$ and $\ket{1_L} = \ket{1}^{\otimes m}$. One easily verifies via (\ref{qec:g0}) and (\ref{qec:g1}) that $\K_1 = \lbrace Z^{(i)}: 1 \leq i \leq m \rbrace$ and $F_1 = X^{\otimes m}$. Hence, via Theorem \ref{thm.koperators} we easily obtain $\gamma_i = (0..,0,1,0,..,0)$ where only the $i-$th entry of $\gamma_i$ is $1$ for $i=1,..,m$ as well as $\delta=(0,..,0)$, thus $\K_{n+1} = \lbrace K^{(i)}: 1 \leq i \leq m, K \in \K_n \rbrace$. For the operators $F_n$ we apply Theorem \ref{thm.foperators1} and find that $\epsilon = (0,..,0)$, $\eta=(1,..,1)$ and $c = 1$. Hence Theorem \ref{thm.foperators1} implies that $\F_{n+1} = \lbrace \bigotimes^m_{j = 1} F_{j}: F_j \in \F_n \rbrace$. Thus the resource state of the bit-flip code is transformed into a graph state by applying a Hadamard gate on all output particles leading to a $m^n + 1$ GHZ state. Furthermore the obtained stabilizers are linear independent by construction.


\subsubsection{Phase-flip code}

The $m-$qubit phase-flip code is used to correct phase-flip errors affecting the encoded state, and can be obtained from the bit-flip code by applying Hadamard operations. Thus, the logical zero and logical one are given by $\ket{0_L} = \ket{+}^{\otimes m} $ and $\ket{1_L} = \ket{-}^{\otimes m}$ respectively. Hence the initial auxiliary operators are given by $\K_1 = \lbrace X^{(i)}: 1 \leq i \leq m \rbrace$ where  and $F_1 = Z^{\otimes m}$ respectively. Theorem \ref{thm.koperators} implies that $\gamma=(0,..,0)$ and $\delta_i=(0,..,0,1,0,..,0)$ where only the $i-th$ entry of $\delta_i$ is $1$ for $i=1,..,m$, thus $\K_{n+1} = \lbrace F^{(i)}: 1 \leq i \leq m, F \in \F_n \rbrace$. Finally, Theorem \ref{thm.foperators1} gives $\epsilon=(1,..,1)$ and $\eta=(0,0,0)$, hence $\F_{n+1} = \lbrace \bigotimes^m_{j = 1} K_{j}: K_j \in \K_n \rbrace$.

\subsubsection{Shor code and generalized Shor code}

The Shor code \cite{ShorCode} consists of a concatenation of phase-flip and bit-flip code, each of size three. This leads to a nine-qubit code that is capable of protecting quantum information against one arbitrary error happening on one physical qubit. The idea of the Shor code can be generalized to arbitrary numbers of bit-flip and phase-flip encodings, see \cite{GenShorCode1,GenShorCode2,GenShorCode3,GenShorCode4,Kesting}. The resulting quantum error correction code is constructed as follows: First encode the qubit in an $m_1-$qubit phase-flip code followed by a $m_2-$qubit bit-flip code.
The codewords are thus given by
\begin{align}
\ket{0_L} &= \left(\frac{\ket{0}^{\otimes m_2} + \ket{1}^{\otimes  m_2}}{\sqrt{2}}\right)^{\otimes m_1} \\
\ket{1_L} &= \left(\frac{\ket{0}^{\otimes m_2} - \ket{1}^{\otimes m_2}}{\sqrt{2}}\right)^{\otimes m_1}.
\end{align}
The resource state necessary to implement this quantum error correction code is given by
\begin{align}
\ket{\psi} = & \ket{0} \left(\frac{\ket{0}^{\otimes m_2} + \ket{1}^{\otimes  m_2}}{\sqrt{2}}\right)^{\otimes m_1} / \sqrt{2} \notag \\
& + \ket{1} \left(\frac{\ket{0}^{\otimes m_2} - \ket{1}^{\otimes m_2}}{\sqrt{2}}\right)^{\otimes m_1} / \sqrt{2}
\end{align}
up to a global normalization factor. We call such a code a $[m_1, m_2]-$generalized Shor code. Now we will construct the stabilizers of $\ket{\psi}$ in terms of auxiliary operators. For that purpose, we observe that we combine the resource state of an $m_1-$qubit phase-flip code with the resource state of an $m_2-$qubit bit-flip code. Hence by inserting the initial auxiliary operators of an $m_2-$qubit bit-flip code, i.e. $\K_1 = \lbrace Z^{(i)} :  1 \leq i \leq m_2 \rbrace$ and $F_1 = X^{\otimes m_2}$ respectively, in the recurrence relations of an $m_1-$qubit phase-flip code $\K_{n+1} = \lbrace F^{(i)} : 1 \leq i \leq m_1 \, , F \in \F_n \rbrace$ and $\F_{n+1} = \lbrace \bigotimes^m_{j = 1} K_{j}: K_j \in \K_n \rbrace$ we find with the help of (\ref{framework:kstab}) and (\ref{framework:fstab}) the stabilizers for a $[m_1,m_2]-$generalized Shor code to be
\begin{align}
\lbrace Z & \otimes (X^{\otimes {m_2}})^{(j)}: 1 \leq j \leq m_1 \rbrace \label{qec:genshor:stab:1}\\
& \cup \left\lbrace X \otimes \bigotimes\limits^{m_1 - 1}_{k = 0} Z^{(i_k + {m_2} k)}: 1 \leq i_k \leq m_2 \right\rbrace. \label{qec:genshor:stab:2}
\end{align}
The stabilizers (\ref{qec:genshor:stab:1}) and (\ref{qec:genshor:stab:2}) can be transformed by applying local unitaries to graph state stabilizers, see Appendix \ref{app:genshor} for details. The result thereof is depicted in Fig. \ref{fig:genshor}.

\begin{figure}[h!]
\scalebox{0.5}{
\includegraphics{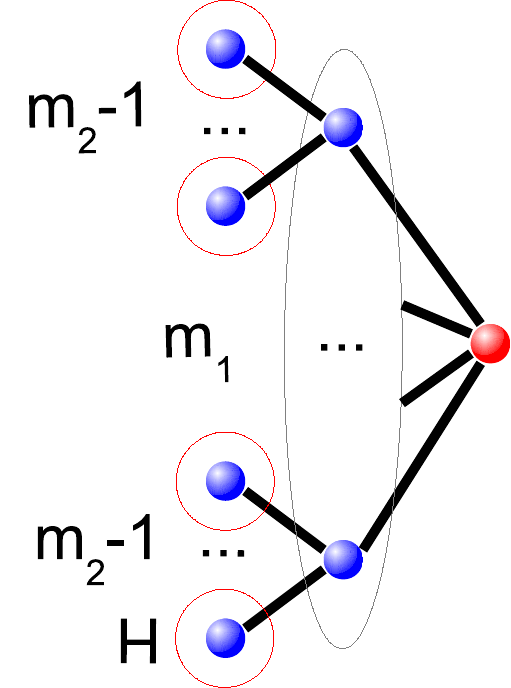}
}
\caption[h!]{Resource state of a $[m_1,m_2]-$generalized Shor code. The red circles indicate that a Hadamard rotation needs to be applied at the end. The red vertex corresponds to the input qubit, the blue vertices to the output qubits.}\label{fig:genshor}
\end{figure}

\subsubsection{Error syndrome readout and Code switcher}

The resource state for error correction (error syndrome readout and correction) can be obtained from the resource states from decoding and encoding. The output qubit of the decoding resource state is connected via a Bell measurement to the input qubit of the resource state for encoding, where the initial and the final error correction codes are the same. We emphasize that this is only used as a tool to construct the resource state for the overall process, and quantum information is at no place actually decoded. Quantum information remains protected in the quantum error correction code at all times, and the required correction operation is determined from the results of the in-coupling Bell measurements.

Notice that one may use different error correction codes for decoding and encoding. In this way, one obtains a resource state for a {\em code switcher} (with build-in error correction) that allows one to switch between two arbitrary stabilizer codes. Such a code switcher is a useful tool in several contexts. For instance, it can be used to switch between an error correction code for storage of quantum information, and a code for processing the information \cite{NautrupTopo}. For the former, a high stability against noise and decoherence is crucial, where also a passive or topological protection may be applicable. The requirements on a code for processing information, e.g. in a fault-tolerant quantum computation architecture, are different and also include the simple realization of certain encoded gates. Codes that are transversal for certain kinds of operations are known, but not all operations can be performed transversally using the same code. We remark that one can easily construct resource states to realize logical Clifford operations for all Calderbank-Shor-Steane or stabilizer codes \cite{ZwergerQEC} using only input and output qubits, and with build-in error correction. Here we discuss how to obtain a general code switcher (including error correction), which includes error syndrome readout for a fixed code as special case.

In the proposed framework, this boils down to combining a decoding resource state and an encoding resource state via a Bell-measurement, see Fig. \ref{fig:thm:coupling}. Thus we apply Theorem \ref{thm.coupling} which implies that the stabilizers of a code switcher, by denoting the sets of auxiliary operators of encoding and decoding resource state by $\K_{\text{E}}$ and $\F_{\text{E}}$ as well as $\K_{\text{D}}$ and $\F_{\text{D}}$ respectively, are given by
\begin{align}
\lbrace K_{\text{E}} \otimes K_{\text{D}} & : K_{\text{E}} \in \K_{\text{E}}, K_{\text{D}} \in \K_{\text{D}} \rbrace \label{qec:cs:k}\\
& \cup \lbrace F_{\text{E}} \otimes F_{\text{D}}: F_{\text{E}} \in \F_{\text{E}}, F_{\text{D}} \in \F_{\text{D}}\rbrace. \label{qec:cs:f}
\end{align}
We emphasize that if the decoding and encoding quantum error codes are fixed then one can use Theorem 1 of \cite{Nest} to transform the stabilizers (\ref{qec:cs:k}) and (\ref{qec:cs:f}) to a graph state. Furthermore the decoding and re-encoding operation is done virtually, as the final resource state performs both tasks at the same time. This ensures a built-in error correction. Furthermore the delivered quantum information is never unprotected. \newline
An example of a graph state for a code switcher is shown in Fig. \ref{fig:codeswitchers}.

\begin{figure}[h!]
\scalebox{0.5}{
\includegraphics{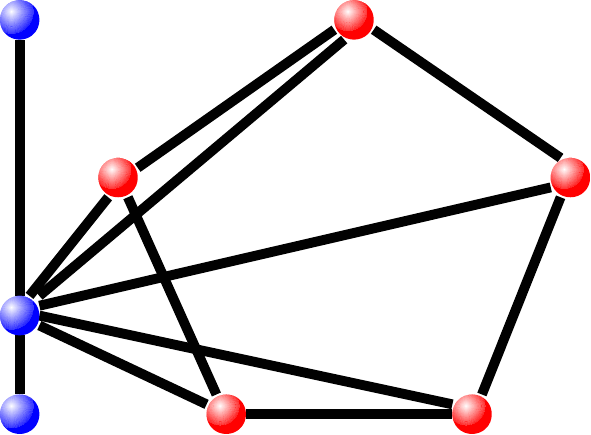}
}
\caption[h!]{Graph state representation (up to local Clifford operations) for a code switcher between the three qubit phase-flip code and the five qubit cluster-ring code. Colors corresponds to input and output qubits respectively.}\label{fig:codeswitchers}
\end{figure}

Last we consider resource states for error syndrome readout. The stabilizers of a resource state for error syndrome readout of a particular quantum error correction is again given (\ref{qec:cs:f}) with the restriction $\F_{\text{D}} = \F_{\text{E}}$ and $\K_{\text{D}} = \K_{\text{E}}$. An example of a local Clifford equivalent graph state for a syndrome readout is depicted in Fig. \ref{fig:synd:bitflip}.

\begin{figure}[h!]
\scalebox{0.7}{
\includegraphics{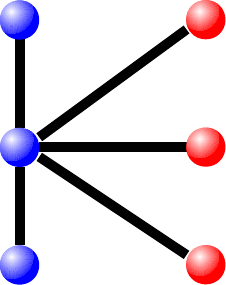}
}
\caption[h!]{LC equivalent graph state for syndrome readout for the 3 qubit bit-flip code. The red vertices correspond to the input qubits whereas the blue vertices to the output qubits.}\label{fig:synd:bitflip}
\end{figure}

\subsection{Entanglement purification at a logical level}
In this last section we consider entanglement purification at a logical level \cite{Zhou}. This is an interesting quantum task, as it combines the benefits of entanglement purification and quantum error correction codes, with potential applications in long-distance quantum communication.

The construction of the resource state which implements such a complex quantum task becomes simple in the proposed framework, as we derived all necessary results in the previous sections. We construct the stabilizers of the resource state for $n$ rounds of entanglement purification at a logical level as follows: \newline
First we virtually decode one part of the logical Bell-pair to be purified in a measurement-based way via the decoding resource state $\ket{\psi_{\mathrm{D}}}$. This yields the sets of auxiliary operators $\K_{\text{D}}$ and $\F_{\text{D}}$ associated with $\ket{\psi_{\mathrm{D}}}$. \newline
The decoding is followed by $n$ virtual rounds of the DEJMPS protocol. In order to perform $n$ rounds of the DEJMPS protocol we apply the recurrence relations  (\ref{dejmps:krec1})--(\ref{dejmps:frec}) $n$ times with  initial auxiliary operator sets $\K_{\text{D}}$ and $\F_{\text{D}}$. This leads to the sets $\K_{n}$ and $\F_{n}$ of auxiliary operators. \newline
Finally, we connect the obtained resource state (decoding followed by  entanglement purification) with sets $\K_{n}$ and $\F_{n}$ to an encoding resource state with auxiliary operators $\K_{\text{E}}$ and $\F_{\text{E}}$ via a Bell-measurement. Hence, according to Theorem \ref{thm.coupling}, the resulting resource state for $n$ rounds of entanglement purification at a logical level is stabilized by the family of operators
\begin{align}
\lbrace K_{\text{E}} \otimes K_{n} &: K_{\mathrm{E}} \in \K_{\text{E}}, K_{n} \in \K_{n} \rbrace \notag \\
& \cup \lbrace F_{\text{E}} \otimes F_{n} : F_{\mathrm{E}} \in \F_{\text{E}}, F_{n} \in \F_{n} \rbrace.
\end{align}
The construction scheme is depicted in Fig. \ref{fig.epplogical}.
\begin{figure}[h!]
\scalebox{0.65}{
\includegraphics{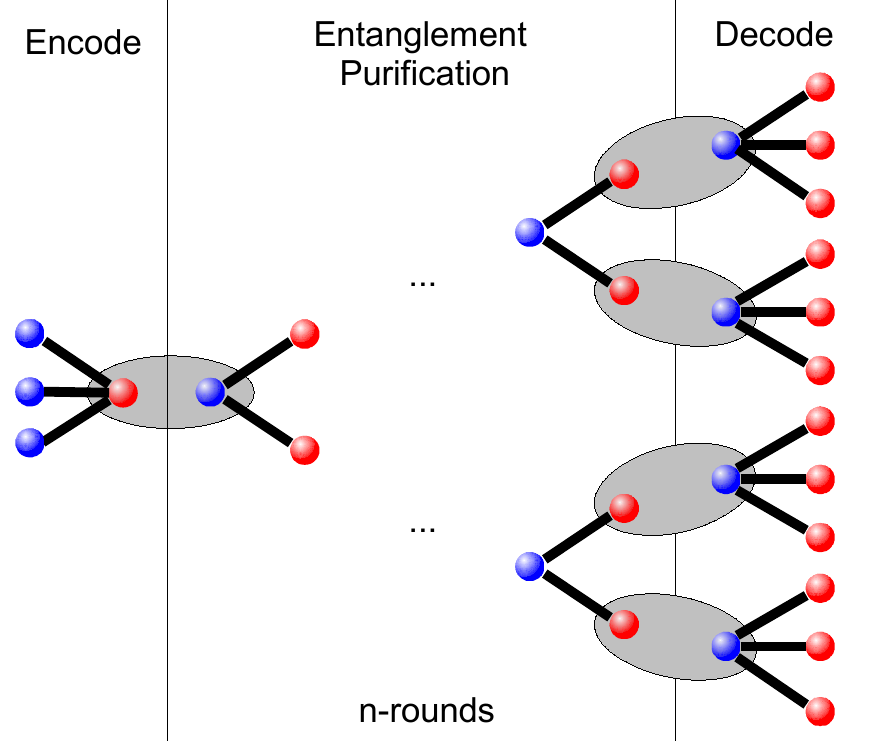}
}
\caption[h!]{Construction of a resource state necessary for $n-$rounds of entanglement purification at a logical level.}\label{fig.epplogical}
\end{figure}
We emphasize that decoding, entanglement purification, encoding and error correction is performed virtually only, as the final resource state performs all three tasks at once. If one needs to construct the resource state for a specific quantum error correction code explicitly, then inserting the appropriate values for $\K_{\text{D}}$, $\F_{\text{D}}$, $\K_{\text{E}}$ and $\F_{\text{E}}$ leads to explicit stabilizers of the resource state. The resulting stabilizers can be transformed via Theorem 1 of \cite{Nest} to a local Clifford equivalent graph state. An example is depicted in Fig. \ref{fig:concretelogicalepp} for the 3 qubit bit-flip code.

\begin{figure}[h!]
\scalebox{1.1}{
\includegraphics{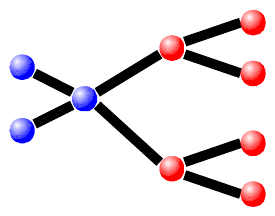}
}
\caption[h!]{The figure shows a LC equivalent graph state for entanglement purification at a logical level, where the three qubit bit-flip code and one round of purification using the DEJMPS protocol are considered. Input qubits are red and output qubits blue.}\label{fig:concretelogicalepp}
\end{figure}

Another interesting approach is to encode a physical qubit into a decoherence free subspace and perform entanglement purification for encoded Bell-pairs, which found promising application in the quantum repeater scheme \cite{ZwergerDFS}. Thereby we define the logical zero and logical one as $\ket{0_L} = \ket{01}$ and $\ket{1_L} = \ket{10}$ respectively. This encoding offers a protection against global dephasing, as phases picked up by $\ket{0}$ and $\ket{1}$ vanish if the physical Bell-pairs are encoded. The resource state necessary to perform one and two rounds of entanglement purification for logical Bell-pairs encoded in a decoherence free subspace are depicted in Fig. \ref{fig:concretelogicalepp:dfs}.

\begin{figure}[h!]
\scalebox{1}{
\includegraphics{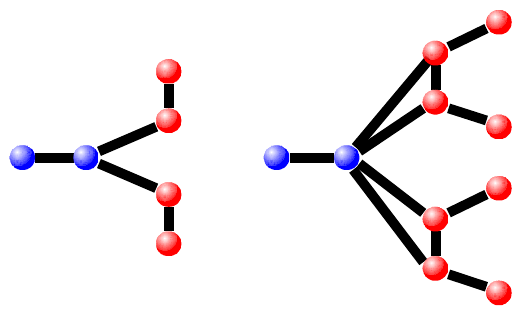}
}
\caption[h!]{The figure shows LC equivalent graph states for entanglement purification at a logical level for one (left) and two (right) purification rounds using a decoherence free subspace encoding \cite{ZwergerDFS}, i.e. the encoding $\ket{0_L} = \ket{01}$ and $\ket{1_L} = \ket{10}$. Input qubits are red and output qubits are blue.}\label{fig:concretelogicalepp:dfs}
\end{figure}

Finally we need to determine the Pauli corrections due to the read-in Bell-measurements. As outlined in Sec. \ref{sec:frm:by} any other outcome than the $\ket{\phip}$ outcome is first propagated through the decoding resource state. Then, the error at the output of the decoding resource state is propagated through the entanglement purification resource state leading to a possibly different Pauli error at the output of the entanglement purification. This error is then propagated through the encoding resource state.

\section{Discussion}\label{sec:discussion}
We proposed a framework which allows us to construct resource states for a measurement-based implementation of concatenated tasks in quantum communication and quantum error correction. In particular, we have  found a closed and analytical description of families of resource states based on recurrence relations of stabilizers. This forms a key tool to generate explicit (graph-) descriptions of the task at hand.

The resource states which we obtain are always of minimal size, containing only input and (if appropriate) output qubits. We derived graph state representations for all resource states, thereby identifying a possible way how to generate these resource states efficiently using commuting two-qubit gates. At the same time, this shows how to generate all these states with high fidelity using multiparty entanglement purification protocols \cite{DurEPP}, and to determine their features under noise and decoherence \cite{HeinGraph}.

We have explicitly constructed resource states for multiple rounds of bipartite entanglement purification protocols, states for encoding, decoding and error syndrome readout for quantum error correction using concatenated error correction codes, code switchers, as well as resource states for combined tasks such as entanglement purification of encoded states. The proposed techniques are however not limited to these examples, but are generically applicable to construct resource states that are combinations of elementary quantum tasks.

The explicit knowledge of required resource states is important in measurement-based implementation of quantum communication \cite{ZwergerQC} and fault-tolerant quantum computation \cite{ZwergerQEC}, in particular to design schemes and methods to prepare them efficiently and with high fidelity in specific setups. Given that such a measurement-based implementation offers very high error thresholds, this may soon become of practical relevance for scalable quantum information processing.

\begin{acknowledgments}
This work was supported by the Austrian Science Fund (FWF): P28000-N27 and SFB F40-FoQus F4012-N16.
\end{acknowledgments}





\appendix

\section{Proof of Main Theorems}\label{app:proofs}

\begin{proof}[Proof of Theorem \ref{thm.koperators}]
Recall that $\ket{G^{n+1}_{i}} = \sum_{k_1,\ldots,k_m} \alpha_{i}(k_1,\ldots,k_m) \ket{G^{n}_{k_1},\ldots,G^{n}_{k_m}}$.   We need to show that $K_{n+1} \ket{G^{n+1}_{i}} = \ket{G^{n+1}_{i \oplus 1}}$ where $K_{n+1} = c \bigotimes\limits^m_{r=1} (F_r)^{\delta_r} \bigotimes\limits^m_{s=1} (K_s)^{\gamma_s}$ for some bit vectors $\gamma$ and $\delta$. We observe $\bigotimes\limits^m_{s=1} (K_s)^{\gamma_s} \ket{G^{n}_{k_1},\ldots,G^{n}_{k_m}} = \ket{G^{n}_{k_1 \oplus \gamma_1},\ldots,G^{n}_{k_m \oplus \gamma_m}}$ which implies
\begin{align}
K_{n+1} & \ket{G^{n+1}_{i}} = c \bigotimes\limits^m_{r=1} (F_r)^{\delta_r} \bigotimes\limits^m_{s=1} (K_s)^{\gamma_s} \notag \\
& \sum\limits^1_{k_1,\ldots,k_m = 0} \alpha_{i}(k_1,\ldots,k_m) \ket{G^{n}_{k_1},\ldots,G^{n}_{k_m}} \notag \\
&= c \bigotimes\limits^m_{r=1} (F_r)^{\delta_r} \sum\limits^1_{k_1,\ldots,k_m=0} \alpha_{i}(k_1,\ldots,k_m) \notag \\
& \ket{G^{n}_{k_1 \oplus \gamma_1},\ldots,G^{n}_{k_m \oplus \gamma_m}} \notag \\
&= c \bigotimes\limits^m_{r=1} (F_r)^{\delta_r} \sum\limits^1_{k_1,\ldots,k_m=0} \alpha_{i}(k_1 \oplus \gamma_1,\ldots,k_m \oplus \gamma_m) \notag \\
& \ket{G^{n}_{k_1},\ldots,G^{n}_{k_m}} \notag \\
&= \sum\limits^1_{k_1,\ldots,k_m=0} c (-1)^{\sum_r \delta_r k_r} \alpha_{i}(k_1 \oplus \gamma_1,\ldots,k_m \oplus \gamma_m) \notag \\
& \ket{G^{n}_{k_1},\ldots,G^{n}_{k_m}}
\end{align}
Now we show the existence of the bit vectors $\gamma$ and $\delta$ as claimed in the theorem. First we observe that
\begin{align}
\alpha_{i \oplus 1} & (k_1,\ldots,k_m) = ^{\otimes m}\langle \phip | k^x_1,\ldots,k^x_m, G^{1}_{i \oplus 1} \rangle \notag \\
& = ^{\otimes m}\langle \phip | (\id \otimes K_1) | k^x_1,\ldots,k^x_m, G^{1}_{i} \rangle \notag \\
& = \left(\prod\limits^m_{k=1} (-1)^{i_k j_k}\right)\, ^{\otimes m} \langle \phip | (K_1 \otimes \id) | k^x_1,\ldots,k^x_m, G^{1}_{i} \rangle \notag \\
& = ^{\otimes m}\langle \phip | (K_1^{T} \otimes \id) | k^x_1,\ldots,k^x_m, G^{1}_{i} \rangle \label{supp:eq:kthm:1}
\end{align}
by using that $K_1$ is an element of the Pauli group $P_n$ and $(\sigma_{i,j} \otimes \id) \ket{\phip} = (-1)^{ij}(\id \otimes \sigma_{i,j}) \ket{\phip}$. Recall that $Z \ket{k^x} = \ket{(k \oplus 1)^x}$, $X \ket{k^x} = (-1)^k \ket{k^x}$ and $Y \ket{k^x} = -i ZX \ket{k^x} = (-i)(-1)^k \ket{(k \oplus 1)^x}$. Thus we have $\sigma_{i,j} \ket{k^x} = (-i)^{ij} (-1)^{kj} \ket{(k \oplus i)^x}$. Recall that $K_1 = a \bigotimes^m_{k=1} \sigma_{i_k,j_k}$ where $a \in \lbrace \pm 1, \pm i \rbrace$, $i_k \in \lbrace 0,1 \rbrace$ and $j_k \in \lbrace 0,1 \rbrace$. This implies for (\ref{supp:eq:kthm:1}) by using $Y^T = - Y$ that
\begin{align}
\alpha_{i \oplus 1} & (k_1,\ldots,k_m) =  ^{\otimes m}\langle \phip | K_1^{T} |k^x_1,\ldots,k^x_m \rangle | G^{1}_{i} \rangle \notag \\
& = a \prod\limits^m_{l=1} (-i)^{i_l j_l} (-1)^{i_l k_l} (-1)^{\sum_{r} k_r j_r} \notag \\
& \, (^{\otimes m}\langle \phip | (k_1 \oplus i_1)^x,\ldots,(k_m \oplus i_m)^x, G^{1}_{i} \rangle) \notag \\
&= a \prod\limits^m_{l=1} i^{i_l j_l} (-1)^{\sum_{r} k_r j_r} \alpha_{i}(k_1 \oplus i_1,\ldots,k_m \oplus i_m).
\end{align}
Thus we have $\gamma = (i_1,..,i_m)$, $\delta = (j_1,..,j_m)$ and $c = a \prod\limits^m_{l=1} i^{i_l j_l}$ which completes the proof.
\end{proof}

\begin{proof}[Proof of Theorem \ref{thm.foperators1}]
Recall that $\ket{G^{n+1}_{i}} = \sum_{k_1,\ldots,k_m} \alpha_{i}(k_1,\ldots,k_m) \ket{G^{n}_{k_1},\ldots,G^{n}_{k_m}}$. Thus we need to show that $F_{n+1} \ket{G^{n+1}_{i}} = (-1)^{i} \ket{G^{n+1}_{i}}$ for $F_{n+1} = c \bigotimes\limits^m_{r=1} (F_r)^{\eta_r} \bigotimes\limits^m_{s=1} (K_s)^{\epsilon_s}.$ Appyling $F_{n+1}$ to $\ket{G^{n+1}_{i}}$ gives
\begin{align}
F_{n+1} & \ket{G^{n+1}_{i}} = c \bigotimes\limits^m_{r=1} (F_r)^{\eta_r} \bigotimes\limits^m_{s=1} (K_s)^{\epsilon_s} \notag \\
& \sum^1_{k_1,\ldots,k_m=0} \alpha_{i}(k_1,\ldots,k_m) \ket{G^{n}_{k_1},\ldots,G^{n}_{k_m}} \notag \\
&= \bigotimes\limits^m_{r=1} (F_r)^{\eta_r} \sum^1_{k_1,\ldots,k_m=0} c \alpha_{i}(k_1,\ldots,k_m) \notag \\
& \ket{G^{n}_{k_1 \oplus \epsilon_1},\ldots,G^{n}_{k_m \oplus \epsilon_m}} \notag \\
&= \bigotimes\limits^m_{r=1} (F_r)^{\eta_r} \sum^1_{k_1,\ldots,k_m=0} c \alpha_{i}(k_1 \oplus \epsilon_1,\ldots,k_m \oplus \epsilon_m) \notag \\
& \ket{G^{n}_{k_1},\ldots,G^{n}_{k_m}} \notag \\
&= \sum^1_{k_1,\ldots,k_m=0} c (-1)^{\sum_r \eta_r k_r} \alpha_{i}(k_1 \oplus \epsilon_1,\ldots,k_m \oplus \epsilon_m) \notag \\
& \ket{G^{n}_{k_1},\ldots,G^{n}_{k_m}}
\end{align}
Hence we need to show that there exist bit vectors $\epsilon$ and $\eta$ such that
\begin{align}
c (-1)^{\sum_r \eta_r k_r} \alpha_{i} & (k_1 \oplus \epsilon_1,\ldots,k_m \oplus \epsilon_m) \notag \\
& = (-1)^{i} \alpha_{i}(k_1,\ldots,k_m). \label{supp:eq:fthm:0}
\end{align}
Recall that $\ket{G^1_i} = (-1)^{i} F_1 \ket{G^1_i}$. Thus we obtain in similar fashion to the proof of Theorem \ref{thm.koperators} that
\begin{align}
\alpha_{i} & (k_1,\ldots,k_m) = ^{\otimes m}\langle \phip | k^x_1,\ldots,k^x_m, G^{1}_{i} \rangle \notag \\
& = (-1)^{i} \, ^{\otimes m}\langle \phip | (\id \otimes F_1) | k^x_1,\ldots,k^x_m, G^{1}_{i} \rangle  \notag \\
&= (-1)^i \left(\prod\limits^m_{k=1} (-1)^{i_k j_k}\right) \notag \\
& \, ^{\otimes m}\langle \phip | (F_1 \otimes \id) | k^x_1,\ldots,k^x_m, G^{1}_{i} \rangle) \notag \\
& = (-1)^i (^{\otimes m}\langle \phip | (F_1^{T} \otimes \id) | k^x_1,\ldots,k^x_m, G^{1}_{i} \rangle) \label{supp:eq:fthm:1}
\end{align}
because $F_1 = a \bigotimes^m_{k=1} \sigma_{i_k,j_k}$ where $a \in \lbrace \pm 1, \pm i \rbrace$, $i_k \in \lbrace 0,1 \rbrace$ and $j_k \in \lbrace 0,1 \rbrace$. Hence we obtain as in the proof of Theorem \ref{thm.koperators} for (\ref{supp:eq:fthm:1})
\begin{align}
\alpha_{i}& (k_1,\ldots,k_m) = (-1)^{i} (^{\otimes m}\langle \phip | F_1^{T} | k^x_1,\ldots,k^x_m \rangle |G^{1}_{i} \rangle ) \notag \\
& = a (-1)^i \prod\limits^m_{l=1} i^{i_l j_l} (-1)^{\sum_{r} k_r j_r} \notag \\
& (^{\otimes m}\langle \phip | (k_1 \oplus i_1)^x,\ldots,(k_m \oplus i_m)^x, G^{1}_{i} \rangle) \notag \\
&= a (-1)^i \prod\limits^m_{l=1} i^{i_l j_l} (-1)^{\sum_{r} k_r j_r} \alpha_{i}(k_1 \oplus i_1,\ldots,k_m \oplus i_m),
\end{align}
i.e. $(-1)^{i} \alpha_{i}(k_1,\ldots,k_m) = a \prod\limits^m_{l=1} i^{i_l j_l} (-1)^{\sum_{r} k_r j_r} \alpha_{i}(k_1 \oplus i_1,\ldots,k_m \oplus i_m)$. Thus (\ref{supp:eq:fthm:0}) holds with $\epsilon = (i_1,..,i_m)$, $\eta = (j_1,..,j_m)$ and $c = a \prod\limits^m_{l=1} i^{i_l j_l}$ which completes the proof.
\end{proof}

\section{Linear independence of stabilizers for special cases}\label{app:lin}

In the following we investigate the linear independence of the sets of auxiliary operators $\K_{n+1}$ and $\F_{n+1}$ obtained via Theorem \ref{thm.koperators} and \ref{thm.foperators1} for special assumptions. Those assumptions are met by the DEJMPS protocol \cite{Deutsch} and the quantum error correction codes studied in the main text. But before we start, we formulate the following Lemma.
\begin{lemma}\label{lem:lin_indepedent}
Assume that the family of operators $\lbrace A_i \rbrace^n_{i=1}$ is linear independent and let $m \in \N$. Then we can construct $nm - m + 1$ linear independent operators of the form $A_{i_1} \otimes .. \otimes A_{i_m}$.
\end{lemma}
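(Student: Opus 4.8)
The plan is to write down an explicit family of $nm-m+1$ tensor products and then check that it is linearly independent. Since $nm-m+1 = 1 + m(n-1)$, the natural candidate is a ``star'' built around the base operator $A_1^{\otimes m} = A_1 \otimes \cdots \otimes A_1$: for each slot $p \in \{1,\dots,m\}$ and each index $j \in \{2,\dots,n\}$ I take the operator $B_{p,j}$ that carries $A_j$ in slot $p$ and $A_1$ in every other slot. This produces one base operator together with $m(n-1)$ operators $B_{p,j}$, i.e. exactly $nm-m+1$ operators, each of the required form $A_{i_1}\otimes\cdots\otimes A_{i_m}$. The pairs $(p,j)$ label distinct operators because the location and value of the unique non-$A_1$ slot can be read off from the operator.

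To verify independence I would argue on the dual side. Inside $W = \operatorname{span}\{A_1,\dots,A_n\}$ the linear independence of the $A_i$ furnishes a dual system $\phi_1,\dots,\phi_n$ with $\phi_k(A_l)=\delta_{kl}$, and the product functionals $\Phi_{k_1,\dots,k_m} = \phi_{k_1}\otimes\cdots\otimes\phi_{k_m}$ act multiplicatively, $\Phi_{k_1,\dots,k_m}(C_1\otimes\cdots\otimes C_m)=\prod_l \phi_{k_l}(C_l)$. Assuming a vanishing combination $c_0 A_1^{\otimes m} + \sum_{p,j} c_{p,j} B_{p,j} = 0$, I extract the coefficients by evaluating against suitable strings: the all-ones string $(1,\dots,1)$ annihilates every $B_{p,j}$ (each has an index $\ge 2$ in some slot) and leaves $c_0 = 0$, while the string that equals $1$ in all slots except slot $p$, where it equals $j$, isolates the single term $B_{p,j}$ and gives $c_{p,j}=0$. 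Hence every coefficient vanishes and the family is independent.

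I do not anticipate a real obstacle, since the lemma is essentially a counting-plus-separation argument. The only points requiring care are confirming that the $B_{p,j}$ are pairwise distinct and distinct from the base (so the claimed cardinality $nm-m+1$ is genuine), and that each chosen evaluation string picks out exactly one summand, which is precisely where the multiplicativity of $\Phi_{k_1,\dots,k_m}$ together with $\phi_k(A_l)=\delta_{kl}$ does all the work. If one prefers to avoid dual functionals entirely, the same separation can be phrased by successively applying the coordinate projection onto $A_1$ in all but one slot, but the dual-basis bookkeeping is the most transparent.
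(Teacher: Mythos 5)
Your construction is exactly the one in the paper's proof: the base operator $A_1^{\otimes m}$ together with the $m(n-1)$ operators obtained by replacing a single slot with some $A_j$, $j\geq 2$. The paper simply asserts that this family is ``linear independent by construction,'' whereas you supply the dual-basis verification explicitly; your argument is correct and follows essentially the same route.
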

\begin{proof}
Consider the following operators (where we have omitted the tensor product symbols for ease of notation):
\begin{align}
\begin{matrix}
A_1 & A_1 & .. & A_1 & A_1 \\
A_1 & A_1 & .. & A_1 & A_2 \\
.. & .. & .. & .. & .. \\
A_1 & A_1 & .. & A_1 & A_n \\
A_1 & A_1 & .. & A_2 & A_1 \\
.. & .. & .. & .. & .. \\
A_1 & A_1 & .. & A_n & A_1 \\
.. & .. & .. & .. & .. \\
A_2 & A_1 & .. & A_1 & A_1 \\
.. & .. & .. & .. & .. \\
A_n & A_1 & .. & A_1 & A_1
\end{matrix}
\end{align}
This family of operators is linear independent by construction. In total we have $m(n-1) + 1 = mn - m +1$ operators, which shows the claim.
\end{proof}
Suppose we have a particular resource state $\ket{\psi^{\O}}$ and the corresponding sets of initial auxiliary operators
\begin{align}
\K_1 &= \left\lbrace K = \bigotimes^{m}_{k=1} \sigma_{\gamma^{(i)}_k, \delta^{(i)}_k}: \, K \text{ satisfies (\ref{def:k:aux})}  \right \rbrace \\
\F_1 &= \left\lbrace F = \bigotimes^{m}_{k=1} \sigma_{\epsilon^{(i)}_k, \eta^{(i)}_k}: \, F \text{ satisfies (\ref{def:f:aux})}  \right \rbrace.
\end{align}
The superscripts in $\gamma, \delta, \epsilon$ and $\eta$ denote different vectors whereas subscripts denote positions within those vectors. Now we want to explicitly construct linear independent stabilizers of a concatenated quantum task.
We assume that the families of operators $\K_n$ and $\F_n$ are linear independent. Then, according to the construction of linear independent operators in Lemma \ref{lem:lin_indepedent}, we find for the families of operators (\ref{eq:k:fam}) and (\ref{eq:f:fam}) that
\begin{align}
|\K_{n+1}| &= \sum^{|\K_1|}_{i = 1} (|\supp \gamma^{(i)}| |\K_n| - |\supp \gamma^{(i)}| + 1) \notag \\
& (|\supp \delta^{(i)}| |\F_n| - |\supp \delta^{(i)}| + 1), \label{eq:lin:ind:k} \\
|\F_{n+1}| &= \sum^{|\F_1|}_{j = 1} (|\supp \epsilon^{(j)}| |\K_n| - |\supp \epsilon^{(j)}| + 1) \notag \\
& (|\supp \eta^{(j)}| |\F_n| - |\supp \eta^{(j)}| + 1). \label{eq:lin:ind:f}
\end{align}
The equations (\ref{eq:lin:ind:k}) and (\ref{eq:lin:ind:f}) will enable us to prove the linear independence for specific quantum tasks.
\begin{theo}\label{thm:lin:ind:1}
Suppose $\gamma^{(i)}=(0,..,0)$ for all $i$, $|\delta| = m$ where $\supp \delta^{(i)} = 1$ for all $i$, $\eta^{(i)} = (0,..,0)$ for all $i$ and $|\epsilon| = 1$ where $|\supp \epsilon| = m$. Furthermore let $|\K_n| + |\F_n| = n+1$. Then we have $nm + 1$ linear independent stabilizers via Lemma \ref{lem:lin_indepedent}.
\end{theo}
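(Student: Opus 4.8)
My plan is to read Theorem~\ref{thm:lin:ind:1} as the inductive step of a counting argument and to obtain the result by specialising the general counting formulas (\ref{eq:lin:ind:k})--(\ref{eq:lin:ind:f}), which already package Lemma~\ref{lem:lin_indepedent} together with the inductive assumption that $\K_n$ and $\F_n$ are linearly independent. First I would translate the hypotheses into the quantities appearing there: $\gamma^{(i)}=(0,\ldots,0)$ gives $|\supp\gamma^{(i)}|=0$ for each of the $|\K_1|=|\delta|=m$ initial $K$-operators, each of which has $|\supp\delta^{(i)}|=1$; while the single $F$-operator ($|\F_1|=|\epsilon|=1$) has $|\supp\epsilon|=m$ and $|\supp\eta|=0$.

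Feeding these into (\ref{eq:lin:ind:k}) makes every summand collapse to $(0\cdot|\K_n|-0+1)(1\cdot|\F_n|-1+1)=|\F_n|$, so that $|\K_{n+1}|=m\,|\F_n|$; concretely each element of $\K_{n+1}$ is one operator of $\F_n$ placed on a single output block, and these are independent because distinct blocks have disjoint support and within a block independence is inherited from $\F_n$. Feeding them into (\ref{eq:lin:ind:f}) leaves the single summand $(m\,|\K_n|-m+1)(0-0+1)$, so $|\F_{n+1}|=m\,|\K_n|-m+1$; this is exactly the count produced by Lemma~\ref{lem:lin_indepedent} applied to the $|\K_n|$ independent operators of $\K_n$ with $m$ tensor factors, and it is the step where the Lemma does the real work. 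Summing,
\begin{align}
|\K_{n+1}|+|\F_{n+1}| &= m\,|\F_n|+\big(m\,|\K_n|-m+1\big) \notag\\
&= m\big(|\K_n|+|\F_n|\big)-m+1 \notag\\
&= m(n+1)-m+1 = mn+1. \notag
\end{align}
Passing to stabilizers via (\ref{framework:kstab}) and (\ref{framework:fstab}) turns each independent $K\in\K_{n+1}$ into $Z\otimes K$ and each independent $F\in\F_{n+1}$ into $X\otimes F$, giving the claimed $mn+1$ generators.

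The step I expect to require the most care is confirming that these $mn+1$ stabilizers are genuinely independent as a single family, rather than only within the $K$-type and $F$-type blocks separately. The clean way is to observe that the hypotheses $\gamma^{(i)}=\eta^{(i)}=(0,\ldots,0)$ force $\K_1$ to be of pure $X$-type and $\F_1$ of pure $Z$-type, and that the recurrences build $\K_{n+1}$ only from $\F_n$ and $\F_{n+1}$ only from $\K_n$; hence $\K_n$ and $\F_n$ remain of opposite Pauli type at every level. Consequently a relation $\prod_i K_i\,\prod_j F_j \propto \id$ on the output qubits separates into a pure-$X$ and a pure-$Z$ part that must each vanish, so the independent $K$'s and the independent $F$'s stay independent when combined. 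Since the $K$-type generators carry $Z$ and the $F$-type generators carry $X$ on the input qubit, the two families cannot interfere on the input qubit either, so the number of independent stabilizers is exactly $|\K_{n+1}|+|\F_{n+1}|=mn+1$, as asserted.
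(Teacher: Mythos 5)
Your proof is correct and follows essentially the same route as the paper: you specialize the counting formulas (\ref{eq:lin:ind:k})--(\ref{eq:lin:ind:f}) under the stated hypotheses to obtain $|\K_{n+1}| = m|\F_n|$ and $|\F_{n+1}| = m|\K_n| - m + 1$, and sum to $mn+1$. Your closing paragraph on the joint independence of the $K$-type and $F$-type families (via the pure-$X$/pure-$Z$ separation on the outputs and the $Z$ versus $X$ on the input qubit) is a welcome elaboration of a point the paper dispatches in a single sentence.
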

\begin{proof}
The assumption $\gamma^{(i)}=(0,..,0)$ for all $i$ implies that all operators in $\K_{n+1}$ will contain only operators from $\F_{n}$. From $\supp \delta^{(i)} = 1$ we further observe that all operators in $\K_{n+1}$ will have exactly one $F_{n} \in \F_{n}$ in the tensor product. Because $|\delta| = m$ and $\gamma^{(i)}=(0,..,0)$ for all $i$ we have that $|\K_1| = |\delta| = m$.  \newline
Similarly, the assumption $\eta^{(i)} = (0,..,0)$ for all $i$ implies that all operators in $\F_{n+1}$ will contain only operators from $\K_{n}$ and $|\epsilon| = 1$ where $|\supp \epsilon| = m$ that all operators in $\F_{n+1}$ will be of the form $F_{n+1} = \bigotimes^{m}_{i=1} K_{i}$ where $K_{i} \in \K_{n}$. Because $\eta^{(i)} = (0,..,0)$ for all $i$ and $|\epsilon| = 1$ we have that $|\F_1| = |\epsilon| = 1$. \newline
Thus (\ref{eq:lin:ind:k}) and (\ref{eq:lin:ind:f}) simplify to
\begin{align}
|\K_{n+1}| &= \sum_{i} |\F_n| = m |\F_n| \\
|\F_{n+1}| &= m |\K_n| - m + 1
\end{align}
Hence $|\K_{n+1}| + |\F_{n+1}| = m (|\K_n| + |\F_n|) - m + 1 = mn + 1$. The linear independence of the families $\K_{n+1}$ and $\F_{n+1}$ follows from the construction of Lemma \ref{lem:lin_indepedent} and the linear independence of the families $\K_{n}$ and $\F_{n}$.
\end{proof}
In Theorem \ref{thm:lin:ind:1} one may also exchange the assumptions on $\gamma$ and $\delta$ as well as the assumptions on $\epsilon$ and $\eta$. Theorem \ref{thm:lin:ind:1} especially applies to the bit-flip code and the phase-flip code. \newline
Furthermore we have the following:
\begin{theo}\label{thm:lin:ind:2}
Suppose $|\K_1| = m$, $|\supp \gamma^{(i)}| = |\supp \delta^{(i)} |= 1$ for all $1 \leq i \leq m$, $|\eta| = 1$ and let $|\F_n| = 1$ for all $n$ where $F_{n+1} = K^{\otimes m}_{n}$ and $K_{n} \in \K_{n}$. Then have $nm + 1$ linear independent stabilizers via Lemma \ref{lem:lin_indepedent}.
\end{theo}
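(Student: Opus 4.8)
The plan is to follow the template set by the proof of Theorem \ref{thm:lin:ind:1}: first specialise the general counting identities (\ref{eq:lin:ind:k}) and (\ref{eq:lin:ind:f}) to the present hypotheses, and then combine the resulting recursion with Lemma \ref{lem:lin_indepedent} and an induction on $n$ to obtain simultaneously the count $nm+1$ and the linear independence of the stabilizers built from $\K_{n+1}$ and $\F_{n+1}$.

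First I would evaluate the $K$-count. Inserting $|\supp \gamma^{(i)}| = |\supp \delta^{(i)}| = 1$ for every $1 \leq i \leq m$, the hypothesis $|\K_1| = m$, and $|\F_n| = 1$ into (\ref{eq:lin:ind:k}), each of the $m$ summands collapses to $(|\K_n| - 1 + 1)(|\F_n| - 1 + 1) = |\K_n|$, so that $|\K_{n+1}| = m|\K_n|$. For the $F$-count, the hypotheses $|\eta| = 1$ (that is, $|\F_1| = 1$) and $F_{n+1} = K_n^{\otimes m}$ leave a single $F$-type operator, $|\F_{n+1}| = 1$, in agreement with (\ref{eq:lin:ind:f}). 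Carrying along the same inductive hypothesis as in Theorem \ref{thm:lin:ind:1}, namely $|\K_n| = n$ (equivalently $|\K_n| + |\F_n| = n+1$ since $|\F_n| = 1$), these two relations give $|\K_{n+1}| + |\F_{n+1}| = mn + 1$, which is the asserted number; by (\ref{framework:kstab}) and (\ref{framework:fstab}) each auxiliary operator becomes a stabilizer of the resource state.

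The hard part is the linear independence, and this is where I would spend most of the effort. The decisive structural feature is $|\supp \gamma^{(i)}| = |\supp \delta^{(i)}| = 1$: the operators of $\K_{n+1}$ produced from the $i$-th initial operator are nontrivial on a single output block, where they carry a factor of the form $(F_n)^{\delta}(K_n)^{\gamma}$ that ranges over the $|\K_n|$ operators $\lbrace F_n K_n : K_n \in \K_n \rbrace$, while acting as $\id$ on all remaining blocks. Each such family is therefore exactly of the type covered by Lemma \ref{lem:lin_indepedent}, and since the single nontrivial factor sits in a different block for each of the $m$ initial operators, the families belonging to distinct $i$ occupy disjoint tensor slots and cannot cancel against one another; this is precisely what legitimises the summation in (\ref{eq:lin:ind:k}). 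The lone $F$-operator $K_n^{\otimes m}$ is independent of all of these, being nontrivial on every block while each $K_{n+1}$ is trivial on all but one, and the leading factors $Z$ and $X$ on the input qubit in (\ref{framework:kstab})--(\ref{framework:fstab}) forbid any relation mixing the two stabilizer types. Invoking the inductively assumed independence of $\K_n$ and $\F_n$, Lemma \ref{lem:lin_indepedent} then yields the claimed $nm+1$ independent stabilizers and closes the induction.
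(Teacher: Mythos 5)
Your counting argument is essentially the paper's: you specialise (\ref{eq:lin:ind:k}) and (\ref{eq:lin:ind:f}) under the stated hypotheses to get $|\K_{n+1}| = m|\K_n|$ and $|\F_{n+1}|=1$, combine this with $|\K_n|=n$ (from $|\K_n|+|\F_n|=n+1$ and $|\F_n|=1$), and conclude $|\K_{n+1}|+|\F_{n+1}|=mn+1$; the paper does exactly this and then, like you, defers linear independence to the construction of Lemma \ref{lem:lin_indepedent} and the inductively assumed independence of $\K_n$ and $\F_n$.

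The one place where your added detail overshoots the hypotheses is the independence step. You assert that each family generated by the $i$-th initial operator is nontrivial on a \emph{single} output block carrying a factor $(F_n)^{\delta}(K_n)^{\gamma}$, and that distinct $i$ occupy disjoint tensor slots. The hypotheses only give $|\supp \gamma^{(i)}| = |\supp \delta^{(i)}| = 1$; they do not force $\supp\gamma^{(i)} = \supp\delta^{(i)}$, and the paper's own proof explicitly remarks that the single $K_n$ and single $F_n$ factors are ``not necessarily at the same position within the tensor product.'' Nor do the hypotheses guarantee that the supports for different $i$ are pairwise disjoint. So your disjoint-slots picture is valid for the motivating DEJMPS example (where $\gamma^{(1)}=\delta^{(1)}=(1,0)$ and $\gamma^{(2)}=\delta^{(2)}=(0,1)$) but does not cover the theorem as literally stated. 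To be fair, the paper's proof is no more rigorous here --- it simply cites the construction of Lemma \ref{lem:lin_indepedent} --- so this is a gap in your justification of a step the paper also leaves informal, not a divergence in method.
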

\begin{proof}
The assumption $|\supp \gamma^{(i)}| = |\supp \delta^{(i)} |= 1$ for all $i$ implies that all operators in $\K_{n+1}$ will contain exactly one $K_{n}$ and one $F_{n}$, which are not necessarily at the same position within the tensor product. Because $|\K_{n}| + |\F_{n}| = n +1$ and $|\F_{n}| = 1$ for all $n$ we find $|\K_{n}| = n$. \newline
Thus (\ref{eq:lin:ind:k}) simplifies to
\begin{align}
|\K_{n+1}| = \sum_{i} & |\K_n| |\F_n| = |\gamma| |\K_n| |\F_n| = |\K_1| |\K_n|
\end{align}
Hence $|\K_{n+1}| + |\F_{n+1}| = mn +1$. The linear independence follows from the linear independence of $\K_{n}$ and $\F_{n}$ and the construction via Lemma \ref{lem:lin_indepedent}.
\end{proof}
Theorem \ref{thm:lin:ind:2} is applicable to the DEJMPS protocol. Finally we treat the linear independence of stabilizers constructed via Theorem \ref{thm.coupling}, crucial in the construction of resource states for quantum repeaters, code switchers and entanglement purification at a logical level.

\begin{theo}
Suppose we are given two resource states, $\ket{\psi_1} = \ket{+}_{\mathrm{in}} \ket{G_0} + \ket{-}_{\mathrm{in}} \ket{G_1}$ and $\ket{\psi_2} = \ket{+}_{\mathrm{out}} \ket{H_0} + \ket{-}_{\mathrm{out}}\ket{H_1}$ where $\ket{\psi_1}$ has $n$ output qubits and $\ket{\psi_2}$ has $m$ input qubits. Furthermore, assume that we are given the corresponding sets of auxiliary operators $\K_1, \F_1$ and $\K_2, \F_2$ where $|\K_1| + |\F_1| = n + 1$ and $|\K_2| + |\F_2| = m + 1$. If we connect the resource states $\ket{\psi_1}$ and $\ket{\psi_2}$ via Theorem 3 and each family $\K_1, \F_1$ and $\K_2, \F_2$ is linear independent, then we find $n+m$ linear independent stabilizers.
\end{theo}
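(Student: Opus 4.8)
The plan is to work in the binary symplectic representation of the Pauli group, in which the tensor product $K_1 \otimes K_2$ corresponds to the concatenation of the binary vector of $K_1$ (acting on the $n$ output qubits of $\ket{\psi_1}$) with that of $K_2$ (acting on the $m$ input qubits of $\ket{\psi_2}$), and where ``linear independence of stabilizers'' means $\mathbb{Z}_2$-linear independence of these vectors. By Theorem \ref{thm.coupling} the available stabilizers are exactly $\lbrace K_1 \otimes K_2 \rbrace$ and $\lbrace F_1 \otimes F_2 \rbrace$, and the coupled state lives on $n+m$ qubits, so a complete description needs $n+m$ independent generators, precisely the number claimed. I would exhibit them by a cross construction: fixing reference operators $K_1^{(1)},K_2^{(1)},F_1^{(1)},F_2^{(1)}$, take $\lbrace K_1^{(i)}\otimes K_2^{(1)}\rbrace_i$ together with $\lbrace K_1^{(1)}\otimes K_2^{(j)}\rbrace_{j\geq 2}$, and the analogous $F$-type family. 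This produces $(|\K_1|+|\K_2|-1)+(|\F_1|+|\F_2|-1)=(n+1)+(m+1)-2=n+m$ operators, matching the count.

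The heart of the argument is to show these $n+m$ operators are linearly independent. I would set a general $\mathbb{Z}_2$-combination equal to $\id$ and split it into its $G$-component (on the output qubits of $\ket{\psi_1}$) and its $H$-component (on the input qubits of $\ket{\psi_2}$). By construction the $G$-component involves only the binary vectors of $\K_1\cup\F_1$ and the $H$-component only those of $\K_2\cup\F_2$. Hence, \emph{provided} each of $\K_1\cup\F_1$ and $\K_2\cup\F_2$ is \emph{jointly} linearly independent on its respective system, the two component equations force all coefficients to vanish, exactly as in the split-into-halves computation underlying Lemma \ref{lem:lin_indepedent}.

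The main obstacle is therefore establishing this joint independence, because, in contrast to the recurrence theorems, the distinguished input qubit carrying $Z$ versus $X$ has been consumed by the Bell measurement, so the $K$-type and $F$-type stabilizers are no longer separated automatically. I would prove it by lifting back to the uncoupled state: a nontrivial relation $\prod_i (K_1^{(i)})^{c_i}\prod_j (F_1^{(j)})^{d_j}=\id$ among the $G$-operators, combined with the genuine stabilizers $Z_{\mathrm{in}}\otimes K_1^{(i)}$ and $X_{\mathrm{in}}\otimes F_1^{(j)}$ of $\ket{\psi_1}$, yields the operator $Z^{\sum_i c_i}X^{\sum_j d_j}\otimes \id$ as a stabilizer of $\ket{\psi_1}$. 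If the single-qubit factor is nontrivial, this is a one-qubit stabilizer on the input, contradicting the fact that a meaningful resource state is entangled between input and output (cf.\ the discussion following (\ref{framework:fstab})); if it is trivial, the operator equals $\id$, contradicting the independence of the $n+1$ generators $\lbrace Z_{\mathrm{in}}\otimes K_1^{(i)}\rbrace\cup\lbrace X_{\mathrm{in}}\otimes F_1^{(j)}\rbrace$ of the $(n+1)$-qubit state $\ket{\psi_1}$. Either way the relation is trivial, which gives joint independence; the identical argument applies to $\ket{\psi_2}$.

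As a consistency check I would invoke a dimension count: from the proof of Theorem \ref{thm.coupling} the coupled state $\ket{\psi}\propto\ket{G_0}\ket{H_0}+\ket{G_1}\ket{H_1}$ is a pure stabilizer state on $n+m$ qubits, so its stabilizer group has rank exactly $n+m$. Having produced $n+m$ linearly independent members, they necessarily constitute a complete generating set, which both confirms the count and certifies that no stabilizer has been overlooked.
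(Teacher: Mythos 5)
Your proposal is correct and its skeleton coincides with the paper's proof: both select the same cross-constructed family $\lbrace K_1^{(i)}\otimes K_2^{(1)}\rbrace\cup\lbrace K_1^{(1)}\otimes K_2^{(j)}\rbrace$ (and its $F$-analogue) in the spirit of Lemma \ref{lem:lin_indepedent} and count $(|\K_1|+|\K_2|-1)+(|\F_1|+|\F_2|-1)=n+m$. Where you genuinely go beyond the paper is in the independence argument. The paper's proof invokes only the separate independence of each of the four families and stops at the count; it does not address whether a product mixing $K$-type and $F$-type stabilizers of the coupled state could collapse to the identity, which is exactly the delicate point here, since the distinguishing $Z$-versus-$X$ factor on the measured qubit has been consumed by the Bell measurement. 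Your lifting argument --- converting a hypothetical relation $\prod_i (K_1^{(i)})^{c_i}\prod_j (F_1^{(j)})^{d_j}=\id$ into the stabilizer $Z^{\sum_i c_i}X^{\sum_j d_j}\otimes\id$ of $\ket{\psi_1}$ and excluding both the nontrivial single-qubit case (by entanglement of the input qubit) and the trivial case (by independence of the $n+1$ generators) --- supplies precisely the joint independence of $\K_1\cup\F_1$ and of $\K_2\cup\F_2$ that the paper's proof tacitly assumes. The closing rank count, certifying that the $n+m$ operators generate the full stabilizer group of the pure $(n+m)$-qubit coupled state, is likewise absent from the paper but is a worthwhile consistency check. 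In short: same construction and same count, but your version closes a real gap in the independence argument that the published proof leaves implicit.
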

\begin{proof}
Similar to Lemma \ref{lem:lin_indepedent} we find a family of operators $\lbrace K_1 \otimes K_2 \rbrace_{K_1 \in \K_1, K_2 \in \K_2}$ which is linear independent and contains $|\K_1| + |\K_2| - 1$ elements. This family can be explicitly constructed via
\begin{align}
\begin{matrix}
A_1 & B_1 \\
.. & .. \\
A_1 & B_{|\K_2|} \\
A_2 & B_1 \\
.. & .. \\
A_{|\K_1|} & B_1 \\
\end{matrix}
\end{align}
where $A_i \in \K_1$ and $B_j \in \K_2$. Furthermore we similarly find $|\F_1| + |\F_2| - 1$ linear independent stabilizers via the family $\lbrace F_1 \otimes F_2 \rbrace_{F_1 \in \F_1, F_2 \in \F_2}$. Because the families $\K_1$ and $\F_1$ and $\K_2$ and $\F_2$ are linear independent by assumption we have in total
\begin{align}
|\K_1| + & |\K_2| - 1 + |\F_1| + |\F_2| - 1 \notag \\
& = n + 1 + m +1 - 2 = n + m
\end{align}
linear independent stabilizers which completes the proof.
\end{proof}

\section{Sufficient number of linear independent stabilizer for general concatenations}\label{app:numberofstabilizers}

In Appendix \ref{app:lin} we already showed that the recurrence relations \eqref{eq:k:fam} and \eqref{eq:f:fam} provide a sufficient number of linear independent stabilizers for some special cases of interest. Here we show that this is indeed a general feature of this approach not limited to examples that have a quickly recognisable way to pick a complete set of stabilizers.

For this purpose we split up one concatenation step into smaller parts, namely performing only one projection on $\ket{\phip}$ at a time instead of all at once. A slight modification of the proofs in Appendix \ref{app:proofs} shows that this leads to only the $i$-th block in the operators in $\K_1$ and $\F_1$ being updated with elements of $\K_n$ and $\F_n$ instead of all at once where $i$ is the qubit in the set $\mathrm{out}$ on which part of the projection acts.

If the initial state $\ket{\psi^\O}$ consists of $M$ qubits and $\ket{\psi^{\O'}}$ is given by $N$ qubits, we need $N+M-2$ linear independent stabilizers to describe the state after the first projection is performed. The stabilizers obtained from $\K_1$ and $\F_1$ can be categorized by the Pauli operator that is going to be replaced, namely
\begin{align}
 \begin{matrix}
 m_X \text{ of the type} & . & \otimes & X & \otimes & \dots \\
 m_Z \text{ of the type} & . & \otimes & Z & \otimes & \dots \\
 m_Y \text{ of the type} & . & \otimes & Y & \otimes & \dots \\
 m_I \text{ of the type} & . & \otimes & \id & \otimes & \dots
 \end{matrix}
\end{align}
with $m_X + m_Y + m_Z + m_I = M$. For this analysis it does not matter whether the stabilizers relate to $K$-type or $F$-type auxiliary operators. If the marked qubit is entangled with the rest of the qubits, at least two of $m_X$, $m_Y$, $m_Z$ are non-zero.

According to Theorem \ref{thm.koperators} and \ref{thm.foperators1} the projection replaces the $X$ [$Z$,$Y$] in the $m_X$ [$m_Z$,$m_Y$] $X$ [$Z$,$Y$] type stabilizers by $F_n$ [$K_n$,$iK_nF_n$] operators from the set(s) $\F_n$ [and/or $\K_n$]. For $\id$ type stabilizers the identity is simply replaced by the identity on a larger space and the number of stabilizers $m'_I = m_I$ is unaffected. From Lemma \ref{lem:lin_indepedent} it is clear that if $m_X \neq 0$ we obtain $m'_X = m_X + n_F - 1 $ stabilizers that are independent from one another, where $n_F$ is $|\F_n|$. Similarly if $m_Z \neq 0$ we obtain $m'_Z = m_Z + n_K - 1$ new linear independent stabilizers, with $n_K = |\K_n|$.

For the $Y$ type stabilizers it is less straightforward to see, but a similar technique as in Lemma \ref{lem:lin_indepedent} can be used also in this case. If $m_Y \neq 0$ we get $m'_Y = m_Y + n_F + n_K - 2$ stabilizers that are independent from one another, but these are not linear independent from those obtained by either $X$ or $Z$ type stabilizers. If $m_X \neq 0$ [$m_Z \neq 0$] we reduce that previous number $m'_Y$ by $n_F - 1$ [$n_K - 1$]. From that we see that the new number of independent stabilizers $m'_X + m'_Y + m'_Z + m'_I = M + N - 2$. (Recall that $n_F + n_K = N$.) From this it follows that we always obtain a full set of linear independent stabilizers for one projection step and applying this argument repeatedly means that we always have a sufficient number of linear independent stabilizers to uniquely describe the resulting state.

\section{Five-qubit cluster ring code}\label{app:cluster}

In the following we provide the recurrence relations of the five-qubit cluster ring code. This quantum error correction code is very interesting because of its small size and error detecting properties. The five-qubit cluster ring code is able to correct single qubit phase-flip and bit-flip errors. The resource state for its measurement-based implementation is a graph state, see Fig. \ref{fig:clusterring}, stabilized by the following operators:
\begin{align}
\begin{matrix}
Z & X & Z & I & I & Z \\
Z & Z & X & Z & I & I \\
Z & I & Z & X & Z & I \\
Z & I & I & Z & X & Z \\
Z & Z & I & I & Z & X \\
X & Z & Z & Z & Z & Z
\end{matrix} \label{eq:qec:cluster}
\end{align}

\begin{figure}[h!]
\scalebox{0.5}{
\includegraphics{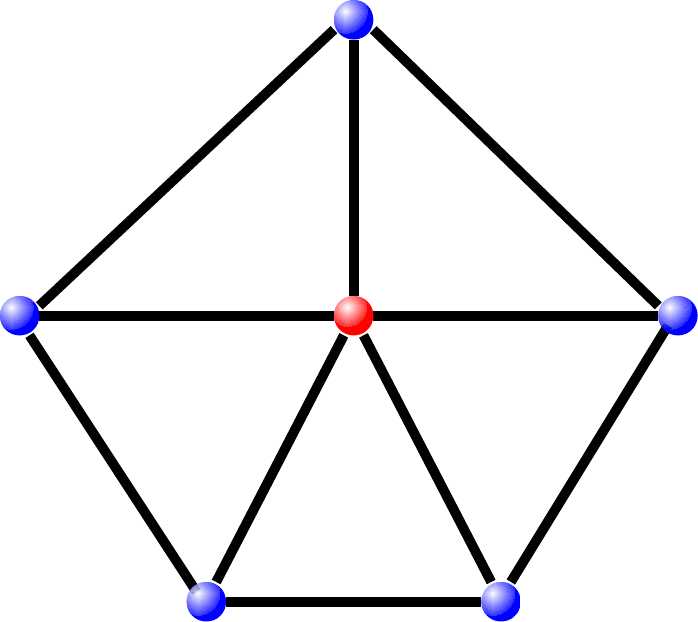}
}
\caption[h!]{Resource state of the five-qubit clusterring code. The red vertex is the input qubit.}\label{fig:clusterring}
\end{figure}

From (\ref{eq:qec:cluster}) we easily find the initial sets of auxiliary operators $\K_1$ and $\F_1$. Hence Theorem \ref{thm.koperators} and \ref{thm.foperators1} implies that
\begin{align}
\K_{n+1} &= \lbrace F \otimes K_{1} \otimes \id \otimes \id \otimes K_{2} : \notag \\
 & K_{1} \in \K_{n}, K_{2} \in \K_{n}, F \in \F_{n} \rbrace \notag \\
 & \cup \lbrace K_{1} \otimes F \otimes K_{2} \otimes \id \otimes \id : \notag \\
 & K_{1} \in \K_{n}, K_{2} \in \K_{n}, F \in \F_{n} \rbrace \notag \\
 & \cup \lbrace \id \otimes K_{1} \otimes F \otimes K_{2} \otimes \id : \notag \\
 & K_{1} \in \K_{n}, K_{2} \in \K_{n}, F \in \F_{n} \rbrace  \notag \\
 & \cup \lbrace \id \otimes \id \otimes K_{1} \otimes F \otimes K_{2} : \notag \\
 & K_{1} \in \K_{n}, K_{2} \in \K_{n}, F \in \F_{n} \rbrace  \notag \\
 & \cup \lbrace K_{1} \otimes \id \otimes \id \otimes K_{2} \otimes F : \notag \\
 & K_{1} \in \K_{n}, K_{2} \in \K_{n}, F \in \F_{n} \rbrace \\
\F_{n+1} &= \lbrace K_{1} \otimes K_{2} \otimes K_{3} \otimes K_{4} \otimes K_{5} : \notag \\
& \forall 1 \leq i \leq 5: K_i \in \K_n \rbrace .
\end{align}

Hence the recurrence relations above can be used to concatenate the five-qubit cluster ring code with any other quantum tasks proposed within this paper.

\section{Graph state representation of the resource state for a generalized Shor code}\label{app:genshor}

Recall that the stabilizers of a $[m_1,m_2]-$generalized Shor code are given by
\begin{align}
\lbrace Z & \otimes (X^{\otimes {m_2}})^{(j)}: 1 \leq j \leq m_1 \rbrace \label{qec:genshor:stab:1:supp}\\
& \cup \left\lbrace X \otimes \bigotimes\limits^{m_1 - 1}_{k = 0} Z^{(i_k + {m_2} k)}: 1 \leq i_k \leq m_2 \right\rbrace. \label{qec:genshor:stab:2:supp}
\end{align}
Now we transform the stabilizers (\ref{qec:genshor:stab:1:supp}) and (\ref{qec:genshor:stab:2:supp}) to graph state stabilizers by applying local unitaries. First we observe that not all obtained stabilizers are linear independent. Hence we need to find a subset of stabilizers which is linear independent. For stabilizers of type (\ref{qec:genshor:stab:1:supp}) we have
\[
\begin{array}{c|ccc|ccc|ccc}
Z & X & ... & X & I & ... & ... & ... & ... & I \\
Z & I & ... & I & X & ... & X & I & ... & I \\
... & ... & ... & ... & ... & ... & ... & ... & ... & ... \\
Z & I & ... & ... & ... & ... & I & X & ... & X \\
\end{array}
\]
where each block of $X$ operators is of length $m_2$. This yields $m_1$ linear independent stabilizers of type (\ref{qec:genshor:stab:1:supp}). To find linear independent stabilizers of type (\ref{qec:genshor:stab:2:supp}) we use Lemma \ref{lem:lin_indepedent}. According to Lemma \ref{lem:lin_indepedent} we find that the stabilizers
\setcounter{MaxMatrixCols}{20}
\[
\begin{array}{c|ccccc|ccccc|c|ccccc}
X & Z & I & ... & ... & I & Z & I & ... &... & I & ... & Z & I & ... & ... & I \\
... & ... & ...& ... & ... & ... & ... & ... & ... & ... & ... & ... & ... & ... & ... & ... & ... \\
X & Z & I & ... & ... & I & Z & I & ... & ... & I & ... & I & ... & ...& I & Z \\
\hline X & Z & I & ... & ... & I & I & Z & I & ... & I & ... & Z & I & ... & ... & I \\
... & ... & ... & ... & ... & ... & ... & ... & ... & ... & ... & ... & ... & ... & ... & ... & ... \\
X & Z & I & ... & ... & I & I & ...  & ... & I & Z & ... & Z & I & ... & ... & I \\
\hline X & I & Z & I & ... & I & Z & I & ... & ... & I & ... & Z & I & ... & ... & I \\
... & ... & ... & ..& ... & ... & ... & ... & ... & ... & ... & ... & ... & ... & ... & ... & ... \\
X & I & ... & ... & I & Z & Z & I & ... & ... & I & ... & Z & I & ... & ... & I \\
\end{array}
\]
which are of type (\ref{qec:genshor:stab:2:supp}) and linear independent. Hence we observe that we have $1 + m_1(m_2 - 1) + m_1 = 1 + m_1 m_2$ linear independent stabilizers as required to completely describe the resulting graph state. To summarize, the stabilizers of the resource state are given by
\[
\begin{array}{c|ccccc|ccccc|c|ccccc}
Z & X & ... & ... & ... & X & I & ...& ... & ... & ... & ... &... & ... & ... & ... & I \\
Z & I & ... & ... & ... & I & X & ... & ... & ... & X & ... & I & ... & ... & ... & I \\
... & ... & ... & ... & ... & ... & ... & ... & ... & ... & ... & ... & ... &... & ... & ... & ... \\
Z & I & ... & ... &... & ... & ... & ... & ... &... & ... & ... & X & ... & ... & ... & X \\

\hline X & Z & I & ... & ... & I & Z & I & ... &... & I & ... & Z & I & ... & ... & I \\
\hline ... & ... & ...& ... & ... & ... & ... & ... & ... & ... & ... & ... & ... & ... & ... & ... & ... \\
X & Z & I & ... & ... & I & Z & I & ... & ... & I & ... & I & ... & ...& I & Z \\
\hline X & Z & I & ... & ... & I & I & Z & I & ... & I & ... & Z & I & ... & ... & I \\
... & ... & ... & ... & ... & ... & ... & ... & ... & ... & ... & ... & ... & ... & ... & ... & ... \\
X & Z & I & ... & ... & I & I & ...  & ... & I & Z & ... & Z & I & ... & ... & I \\
\hline X & I & Z & I & ... & I & Z & I & ... & ... & I & ... & Z & I & ... & ... & I \\
... & ... & ... & ..& ... & ... & ... & ... & ... & ... & ... & ... & ... & ... & ... & ... & ... \\
X & I & ... & ... & I & Z & Z & I & ... & ... & I & ... & Z & I & ... & ... & I \\
\end{array}
\]

For convenience we define $X' = X \otimes \bigotimes\limits^{m_1 - 1}_{k = 0} Z^{(i_k + m_2 k)}$ where all $i_k$ are set to $1$ (observe that this is the first stabilizer having an $X$ on the input qubit in the previous expression). In order to obtain a graph state we apply a Hadamard gate to the output qubits $i_k + m_2 k$ where each $i_k = 2,...,m_2$ for all $k$. This yields the stabilizers
\[
\begin{array}{c|ccccc|ccccc|c|ccccc}
Z & X & Z & ... & ... & Z & I & ... & ...& ... & ... & ... &... & ... & ... & ... & I \\
Z & I & ... & ... & ... & I & X & Z & ... & ... & Z & ... & I & ... & ... & ...& I \\
... & ... & ... & ... & ... & ... & ... & ... & ... & ... & ... & ... &... & ... & ... & ... & ... \\
Z & I & ... & ... &... & ... & ... & ... &... & ... & ... & ... & X & Z & ... & ... & Z \\

\hline X & Z & I & ... & ... & I & Z & I & ... & ... & I & ... & Z & I & ... & ... & I \\
\hline ... & ... & ... & ... & ... & ... & ... & ... & ... & ... & ... & ... & ... & ... & ... & ... & ... \\
X & Z & I & ... & ... & I & Z & I & ... & ... & I & ... & I & ... & ... & I & X \\
\hline X & Z & I & ... & ... & I & I & X & I &... & I & ... & Z & I & ... & ... & I \\
... & ... & ... & ... & ... & ... & ... & ... & ... & ... & ... & ... & ... & ... & ... & ... & ... \\
X & Z & I & ... & ... & I & I & ... & ... & I & X & ... & Z & I & ... & ... & I \\
\hline X & I & X & I & ... & I & Z & I & ... & ... & I & ... & Z & I & ... & ... & I \\
... & ... & ... & ... & ... & ... & ... & ... & ... & ... & ... & ... & ... & ... & ... & ... & ... \\
X & I & ... & ... & I & X & Z & I & ... & ... & I & ... & Z & I & ... & ... & I \\
\end{array}
\]
Next we multiply all $F$-type stabilizers (except $X'$), i.e. the lower three blocks in the previous expression, with $X'$ and replace them. Hence we end up with the stabilizers
\[
\begin{array}{c|ccccc|ccccc|c|ccccc}
Z & X & Z & ... & ... & Z & I & ... & ...& ... & ... & ... &... & ... & ... & ... & I \\
Z & I & ... & ... & ... & I & X & Z & ... & ... & Z & ... & I & ... & ... & ... & I \\
... & ... & ... & ... & ... & ... & ... & ... & ... & ... & ... & ... &... & ... & ... & ... & ... \\
Z & I & ... & ... & ...& ... & ... & ... &... & ... & ... & ... & X & Z & ... & ... & Z \\
\hline X & Z & I & ... & ... & I & Z & I & ... & ... & I & ... & Z & I & ... & ... & I \\
\hline ... & ... & ... & ... & ... & ... & ... & ... & ... & ... & ... & ... & ... & ... & ... & ... & ... \\
I & I & ... & ... & ... & I & I & ... & ... & ... & I & ... & Z & I & ... & I & X \\
\hline I & I & ... & ... & ... & I & Z & X & I & ... & I & ... & I & ... & ... & ... & I \\
... & ... & ... & ... & ... & ... & ... & ... & ...& ... & ... & ... & ... & ... & ... & ... & ... \\
I & I & ... & ... & ... & I & Z & I & ... & I & X & ... & I & ... & ... & ... & I \\
\hline I & Z & X & I & ... & I & I & ... & ... & ... & I & ... & I & ... & ... & ... & I \\
... & ... & ... & ... & ... & ... & ... & ... & ... & ... & ... & ... & ... & ... & ... & ... & ... \\
I & Z & I & ... & I & X & I & ... & ... & ... & I & ... & I & ... & ... & ... & I \\
\end{array}
\]
We immediately observe that specific output particles (first column $X$ stabilizer) have $m_1$ edges. Furthermore, each of those output qubits has $m_2-1$ edges to the remaining  output qubits. The final graph state is shown in Fig. \ref{fig:genshor:supp}.
\begin{figure}[h!]
\scalebox{0.8}{
\includegraphics{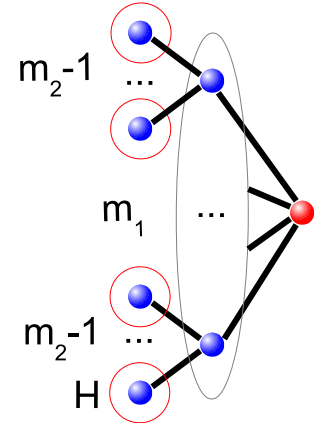}
}
\caption[h!]{Resource state of a $[m_1,m_2]-$generalized Shor code. The red circles indicate that a Hadamard rotation needs to be applied.}\label{fig:genshor:supp}
\end{figure}

\section{Concatenated Shor code}\label{app:concshor}

We also investigated the concatenation of a variant of the Shor code. The stabilizers can be obtained by repeatedly applying the rules of the resource state of the phase-flip code as its function is to switch bases and implement a bit-flip code. The resource state is obtained from 3 concatenations of that code, which is equivalent to concatenating the Shor code once. Its local Clifford equivalent graph state is depicted in Fig. \ref{fig:concatenated_shor}.

\begin{figure}[h!]
 \includegraphics[width=0.5\columnwidth]{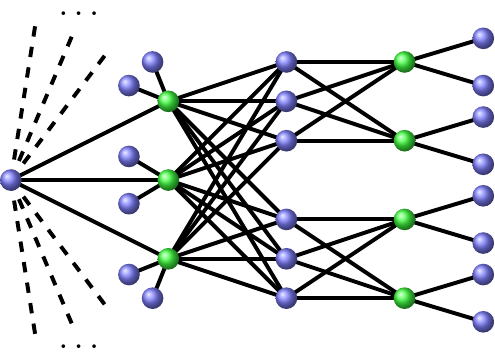}
 \caption{\label{fig:concatenated_shor} The resource state for the concatenated Shor code is local Clifford equivalent to a two-colorable graph state. The qubit on the very left is the input qubit. The dashed
 lines connect to other copies of the center block with 27 qubits.}
\end{figure}

\newpage
\bibliography{refs}

\begin{thebibliography}{62}%
\makeatletter
\providecommand \@ifxundefined [1]{%
 \@ifx{#1\undefined}
}%
\providecommand \@ifnum [1]{%
 \ifnum #1\expandafter \@firstoftwo
 \else \expandafter \@secondoftwo
 \fi
}%
\providecommand \@ifx [1]{%
 \ifx #1\expandafter \@firstoftwo
 \else \expandafter \@secondoftwo
 \fi
}%
\providecommand \natexlab [1]{#1}%
\providecommand \enquote  [1]{``#1''}%
\providecommand \bibnamefont  [1]{#1}%
\providecommand \bibfnamefont [1]{#1}%
\providecommand \citenamefont [1]{#1}%
\providecommand \href@noop [0]{\@secondoftwo}%
\providecommand \href [0]{\begingroup \@sanitize@url \@href}%
\providecommand \@href[1]{\@@startlink{#1}\@@href}%
\providecommand \@@href[1]{\endgroup#1\@@endlink}%
\providecommand \@sanitize@url [0]{\catcode `\\12\catcode `\$12\catcode
  `\&12\catcode `\#12\catcode `\^12\catcode `\_12\catcode `\%12\relax}%
\providecommand \@@startlink[1]{}%
\providecommand \@@endlink[0]{}%
\providecommand \url  [0]{\begingroup\@sanitize@url \@url }%
\providecommand \@url [1]{\endgroup\@href {#1}{\urlprefix }}%
\providecommand \urlprefix  [0]{URL }%
\providecommand \Eprint [0]{\href }%
\providecommand \doibase [0]{http://dx.doi.org/}%
\providecommand \selectlanguage [0]{\@gobble}%
\providecommand \bibinfo  [0]{\@secondoftwo}%
\providecommand \bibfield  [0]{\@secondoftwo}%
\providecommand \translation [1]{[#1]}%
\providecommand \BibitemOpen [0]{}%
\providecommand \bibitemStop [0]{}%
\providecommand \bibitemNoStop [0]{.\EOS\space}%
\providecommand \EOS [0]{\spacefactor3000\relax}%
\providecommand \BibitemShut  [1]{\csname bibitem#1\endcsname}%
\let\auto@bib@innerbib\@empty
\bibitem [{\citenamefont {Briegel}\ \emph {et~al.}(2009)\citenamefont
  {Briegel}, \citenamefont {Browne}, \citenamefont {D{\"u}r}, \citenamefont
  {Raussendorf},\ and\ \citenamefont {{Van den Nest}}}]{BriegelMeas}%
  \BibitemOpen
  \bibfield  {author} {\bibinfo {author} {\bibfnamefont {H.~J.}\ \bibnamefont
  {Briegel}}, \bibinfo {author} {\bibfnamefont {D.~E.}\ \bibnamefont {Browne}},
  \bibinfo {author} {\bibfnamefont {W.}~\bibnamefont {D{\"u}r}}, \bibinfo
  {author} {\bibfnamefont {R.}~\bibnamefont {Raussendorf}}, \ and\ \bibinfo
  {author} {\bibfnamefont {M.}~\bibnamefont {{Van den Nest}}},\ }\href
  {\doibase 10.1038/nphys1157} {\bibfield  {journal} {\bibinfo  {journal} {Nat.
  Phys.}\ }\textbf {\bibinfo {volume} {5}},\ \bibinfo {pages} {19} (\bibinfo
  {year} {2009})}\BibitemShut {NoStop}%
\bibitem [{\citenamefont {Raussendorf}\ and\ \citenamefont
  {Briegel}(2001)}]{BriegelMeas2}%
  \BibitemOpen
  \bibfield  {author} {\bibinfo {author} {\bibfnamefont {R.}~\bibnamefont
  {Raussendorf}}\ and\ \bibinfo {author} {\bibfnamefont {H.~J.}\ \bibnamefont
  {Briegel}},\ }\href {\doibase 10.1103/PhysRevLett.86.5188} {\bibfield
  {journal} {\bibinfo  {journal} {Phys. Rev. Lett.}\ }\textbf {\bibinfo
  {volume} {86}},\ \bibinfo {pages} {5188} (\bibinfo {year}
  {2001})}\BibitemShut {NoStop}%
\bibitem [{\citenamefont {Gottesman}\ and\ \citenamefont
  {Chuang}(1999)}]{Gottesman1999}%
  \BibitemOpen
  \bibfield  {author} {\bibinfo {author} {\bibfnamefont {D.}~\bibnamefont
  {Gottesman}}\ and\ \bibinfo {author} {\bibfnamefont {I.~L.}\ \bibnamefont
  {Chuang}},\ }\href {\doibase 10.1038/46503} {\bibfield  {journal} {\bibinfo
  {journal} {Nature}\ }\textbf {\bibinfo {volume} {402}},\ \bibinfo {pages}
  {390} (\bibinfo {year} {1999})}\BibitemShut {NoStop}%
\bibitem [{\citenamefont {Briegel}\ and\ \citenamefont
  {Raussendorf}(2001)}]{BriegelCluster}%
  \BibitemOpen
  \bibfield  {author} {\bibinfo {author} {\bibfnamefont {H.~J.}\ \bibnamefont
  {Briegel}}\ and\ \bibinfo {author} {\bibfnamefont {R.}~\bibnamefont
  {Raussendorf}},\ }\href {\doibase 10.1103/PhysRevLett.86.910} {\bibfield
  {journal} {\bibinfo  {journal} {Phys. Rev. Lett.}\ }\textbf {\bibinfo
  {volume} {86}},\ \bibinfo {pages} {910} (\bibinfo {year} {2001})}\BibitemShut
  {NoStop}%
\bibitem [{\citenamefont {Raussendorf}\ \emph {et~al.}(2003)\citenamefont
  {Raussendorf}, \citenamefont {Browne},\ and\ \citenamefont
  {Briegel}}]{RausMeasQCcluster}%
  \BibitemOpen
  \bibfield  {author} {\bibinfo {author} {\bibfnamefont {R.}~\bibnamefont
  {Raussendorf}}, \bibinfo {author} {\bibfnamefont {D.~E.}\ \bibnamefont
  {Browne}}, \ and\ \bibinfo {author} {\bibfnamefont {H.~J.}\ \bibnamefont
  {Briegel}},\ }\href {\doibase 10.1103/PhysRevA.68.022312} {\bibfield
  {journal} {\bibinfo  {journal} {Phys. Rev. A}\ }\textbf {\bibinfo {volume}
  {68}},\ \bibinfo {pages} {022312} (\bibinfo {year} {2003})}\BibitemShut
  {NoStop}%
\bibitem [{\citenamefont {Zwerger}\ \emph
  {et~al.}(2016{\natexlab{a}})\citenamefont {Zwerger}, \citenamefont
  {Briegel},\ and\ \citenamefont {D{\"u}r}}]{ZwergerQC}%
  \BibitemOpen
  \bibfield  {author} {\bibinfo {author} {\bibfnamefont {M.}~\bibnamefont
  {Zwerger}}, \bibinfo {author} {\bibfnamefont {H.~J.}\ \bibnamefont
  {Briegel}}, \ and\ \bibinfo {author} {\bibfnamefont {W.}~\bibnamefont
  {D{\"u}r}},\ }\href {\doibase 10.1007/s00340-015-6285-8} {\bibfield
  {journal} {\bibinfo  {journal} {Applied Physics B}\ }\textbf {\bibinfo
  {volume} {122}},\ \bibinfo {pages} {1} (\bibinfo {year}
  {2016}{\natexlab{a}})}\BibitemShut {NoStop}%
\bibitem [{\citenamefont {Lanyon}\ \emph {et~al.}(2013)\citenamefont {Lanyon},
  \citenamefont {Jurcevic}, \citenamefont {Zwerger}, \citenamefont {Hempel},
  \citenamefont {Martinez}, \citenamefont {D{\"u}r}, \citenamefont {Briegel},
  \citenamefont {Blatt},\ and\ \citenamefont {Roos}}]{LanyonQEC}%
  \BibitemOpen
  \bibfield  {author} {\bibinfo {author} {\bibfnamefont {B.~P.}\ \bibnamefont
  {Lanyon}}, \bibinfo {author} {\bibfnamefont {P.}~\bibnamefont {Jurcevic}},
  \bibinfo {author} {\bibfnamefont {M.}~\bibnamefont {Zwerger}}, \bibinfo
  {author} {\bibfnamefont {C.}~\bibnamefont {Hempel}}, \bibinfo {author}
  {\bibfnamefont {E.~A.}\ \bibnamefont {Martinez}}, \bibinfo {author}
  {\bibfnamefont {W.}~\bibnamefont {D{\"u}r}}, \bibinfo {author} {\bibfnamefont
  {H.~J.}\ \bibnamefont {Briegel}}, \bibinfo {author} {\bibfnamefont
  {R.}~\bibnamefont {Blatt}}, \ and\ \bibinfo {author} {\bibfnamefont {C.~F.}\
  \bibnamefont {Roos}},\ }\href {\doibase 10.1103/PhysRevLett.111.210501}
  {\bibfield  {journal} {\bibinfo  {journal} {Phys. Rev. Lett.}\ }\textbf
  {\bibinfo {volume} {111}},\ \bibinfo {pages} {210501} (\bibinfo {year}
  {2013})}\BibitemShut {NoStop}%
\bibitem [{\citenamefont {Barz}\ \emph {et~al.}(2014)\citenamefont {Barz},
  \citenamefont {Vasconcelos}, \citenamefont {Greganti}, \citenamefont
  {Zwerger}, \citenamefont {D{\"u}r}, \citenamefont {Briegel},\ and\
  \citenamefont {Walther}}]{BarzQEC}%
  \BibitemOpen
  \bibfield  {author} {\bibinfo {author} {\bibfnamefont {S.}~\bibnamefont
  {Barz}}, \bibinfo {author} {\bibfnamefont {R.}~\bibnamefont {Vasconcelos}},
  \bibinfo {author} {\bibfnamefont {C.}~\bibnamefont {Greganti}}, \bibinfo
  {author} {\bibfnamefont {M.}~\bibnamefont {Zwerger}}, \bibinfo {author}
  {\bibfnamefont {W.}~\bibnamefont {D{\"u}r}}, \bibinfo {author} {\bibfnamefont
  {H.~J.}\ \bibnamefont {Briegel}}, \ and\ \bibinfo {author} {\bibfnamefont
  {P.}~\bibnamefont {Walther}},\ }\href {\doibase 10.1103/PhysRevA.90.042302}
  {\bibfield  {journal} {\bibinfo  {journal} {Phys. Rev. A}\ }\textbf {\bibinfo
  {volume} {90}},\ \bibinfo {pages} {042302} (\bibinfo {year}
  {2014})}\BibitemShut {NoStop}%
\bibitem [{\citenamefont {Zwerger}\ \emph {et~al.}(2014)\citenamefont
  {Zwerger}, \citenamefont {Briegel},\ and\ \citenamefont
  {D{\"u}r}}]{ZwergerQEC}%
  \BibitemOpen
  \bibfield  {author} {\bibinfo {author} {\bibfnamefont {M.}~\bibnamefont
  {Zwerger}}, \bibinfo {author} {\bibfnamefont {H.~J.}\ \bibnamefont
  {Briegel}}, \ and\ \bibinfo {author} {\bibfnamefont {W.}~\bibnamefont
  {D{\"u}r}},\ }\href {http://www.ncbi.nlm.nih.gov/pmc/articles/PMC4064337}
  {\bibfield  {journal} {\bibinfo  {journal} {Scientific Reports}\ }\textbf
  {\bibinfo {volume} {4}},\ \bibinfo {pages} {5364} (\bibinfo {year}
  {2014})}\BibitemShut {NoStop}%
\bibitem [{\citenamefont {Zwerger}\ \emph {et~al.}(2013)\citenamefont
  {Zwerger}, \citenamefont {Briegel},\ and\ \citenamefont
  {D{\"u}r}}]{ZwergerEPP}%
  \BibitemOpen
  \bibfield  {author} {\bibinfo {author} {\bibfnamefont {M.}~\bibnamefont
  {Zwerger}}, \bibinfo {author} {\bibfnamefont {H.~J.}\ \bibnamefont
  {Briegel}}, \ and\ \bibinfo {author} {\bibfnamefont {W.}~\bibnamefont
  {D{\"u}r}},\ }\href {\doibase 10.1103/PhysRevLett.110.260503} {\bibfield
  {journal} {\bibinfo  {journal} {Phys. Rev. Lett.}\ }\textbf {\bibinfo
  {volume} {110}},\ \bibinfo {pages} {260503} (\bibinfo {year}
  {2013})}\BibitemShut {NoStop}%
\bibitem [{\citenamefont {Hein}\ \emph {et~al.}(2006)\citenamefont {Hein},
  \citenamefont {D{\"u}r}, \citenamefont {Eisert}, \citenamefont {Raussendorf},
  \citenamefont {van~den Nest},\ and\ \citenamefont {Briegel}}]{HeinGraph}%
  \BibitemOpen
  \bibfield  {author} {\bibinfo {author} {\bibfnamefont {M.}~\bibnamefont
  {Hein}}, \bibinfo {author} {\bibfnamefont {W.}~\bibnamefont {D{\"u}r}},
  \bibinfo {author} {\bibfnamefont {J.}~\bibnamefont {Eisert}}, \bibinfo
  {author} {\bibfnamefont {R.}~\bibnamefont {Raussendorf}}, \bibinfo {author}
  {\bibfnamefont {M.}~\bibnamefont {van~den Nest}}, \ and\ \bibinfo {author}
  {\bibfnamefont {H.-J.}\ \bibnamefont {Briegel}},\ }in\ \href@noop {} {\emph
  {\bibinfo {booktitle} {{Quantum Computers, Algorithms and Chaos}}}},\
  \bibinfo {series} {{ Proceedings of the International School of Physics
  "Enrico Fermi"}}, Vol.\ \bibinfo {volume} {162}\ (\bibinfo {year} {2006})\
  pp.\ \bibinfo {pages} {115--218}\BibitemShut {NoStop}%
\bibitem [{\citenamefont {Hein}\ \emph {et~al.}(2004)\citenamefont {Hein},
  \citenamefont {Eisert},\ and\ \citenamefont {Briegel}}]{HeinMulti}%
  \BibitemOpen
  \bibfield  {author} {\bibinfo {author} {\bibfnamefont {M.}~\bibnamefont
  {Hein}}, \bibinfo {author} {\bibfnamefont {J.}~\bibnamefont {Eisert}}, \ and\
  \bibinfo {author} {\bibfnamefont {H.~J.}\ \bibnamefont {Briegel}},\ }\href
  {\doibase 10.1103/PhysRevA.69.062311} {\bibfield  {journal} {\bibinfo
  {journal} {Phys. Rev. A}\ }\textbf {\bibinfo {volume} {69}},\ \bibinfo
  {pages} {062311} (\bibinfo {year} {2004})}\BibitemShut {NoStop}%
\bibitem [{\citenamefont {D{\"u}r}\ and\ \citenamefont
  {Briegel}(2007)}]{DurEPP}%
  \BibitemOpen
  \bibfield  {author} {\bibinfo {author} {\bibfnamefont {W.}~\bibnamefont
  {D{\"u}r}}\ and\ \bibinfo {author} {\bibfnamefont {H.~J.}\ \bibnamefont
  {Briegel}},\ }\href {http://stacks.iop.org/0034-4885/70/i=8/a=R03} {\bibfield
   {journal} {\bibinfo  {journal} {Rep. Prog. Phys.}\ }\textbf {\bibinfo
  {volume} {70}},\ \bibinfo {pages} {1381} (\bibinfo {year}
  {2007})}\BibitemShut {NoStop}%
\bibitem [{\citenamefont {D{\"u}r}\ \emph {et~al.}(2003)\citenamefont
  {D{\"u}r}, \citenamefont {Aschauer},\ and\ \citenamefont
  {Briegel}}]{DurEPPGraph}%
  \BibitemOpen
  \bibfield  {author} {\bibinfo {author} {\bibfnamefont {W.}~\bibnamefont
  {D{\"u}r}}, \bibinfo {author} {\bibfnamefont {H.}~\bibnamefont {Aschauer}}, \
  and\ \bibinfo {author} {\bibfnamefont {H.-J.}\ \bibnamefont {Briegel}},\
  }\href {\doibase 10.1103/PhysRevLett.91.107903} {\bibfield  {journal}
  {\bibinfo  {journal} {Phys. Rev. Lett.}\ }\textbf {\bibinfo {volume} {91}},\
  \bibinfo {pages} {107903} (\bibinfo {year} {2003})}\BibitemShut {NoStop}%
\bibitem [{\citenamefont {Walln\"ofer}\ and\ \citenamefont
  {D\"ur}(2017)}]{Wallnoefer}%
  \BibitemOpen
  \bibfield  {author} {\bibinfo {author} {\bibfnamefont {J.}~\bibnamefont
  {Walln\"ofer}}\ and\ \bibinfo {author} {\bibfnamefont {W.}~\bibnamefont
  {D\"ur}},\ }\href {\doibase 10.1103/PhysRevA.95.012303} {\bibfield  {journal}
  {\bibinfo  {journal} {Phys. Rev. A}\ }\textbf {\bibinfo {volume} {95}},\
  \bibinfo {pages} {012303} (\bibinfo {year} {2017})}\BibitemShut {NoStop}%
\bibitem [{\citenamefont {Zwerger}\ \emph {et~al.}(2012)\citenamefont
  {Zwerger}, \citenamefont {D{\"u}r},\ and\ \citenamefont
  {Briegel}}]{ZwergerRepeater}%
  \BibitemOpen
  \bibfield  {author} {\bibinfo {author} {\bibfnamefont {M.}~\bibnamefont
  {Zwerger}}, \bibinfo {author} {\bibfnamefont {W.}~\bibnamefont {D{\"u}r}}, \
  and\ \bibinfo {author} {\bibfnamefont {H.~J.}\ \bibnamefont {Briegel}},\
  }\href {\doibase 10.1103/PhysRevA.85.062326} {\bibfield  {journal} {\bibinfo
  {journal} {Phys. Rev. A}\ }\textbf {\bibinfo {volume} {85}},\ \bibinfo
  {pages} {062326} (\bibinfo {year} {2012})}\BibitemShut {NoStop}%
\bibitem [{\citenamefont {Bennett}\ \emph {et~al.}(1996)\citenamefont
  {Bennett}, \citenamefont {Brassard}, \citenamefont {Popescu}, \citenamefont
  {Schumacher}, \citenamefont {Smolin},\ and\ \citenamefont
  {Wootters}}]{Bennett}%
  \BibitemOpen
  \bibfield  {author} {\bibinfo {author} {\bibfnamefont {C.~H.}\ \bibnamefont
  {Bennett}}, \bibinfo {author} {\bibfnamefont {G.}~\bibnamefont {Brassard}},
  \bibinfo {author} {\bibfnamefont {S.}~\bibnamefont {Popescu}}, \bibinfo
  {author} {\bibfnamefont {B.}~\bibnamefont {Schumacher}}, \bibinfo {author}
  {\bibfnamefont {J.~A.}\ \bibnamefont {Smolin}}, \ and\ \bibinfo {author}
  {\bibfnamefont {W.~K.}\ \bibnamefont {Wootters}},\ }\href {\doibase
  10.1103/PhysRevLett.76.722} {\bibfield  {journal} {\bibinfo  {journal} {Phys.
  Rev. Lett.}\ }\textbf {\bibinfo {volume} {76}},\ \bibinfo {pages} {722}
  (\bibinfo {year} {1996})}\BibitemShut {NoStop}%
\bibitem [{\citenamefont {Deutsch}\ \emph {et~al.}(1996)\citenamefont
  {Deutsch}, \citenamefont {Ekert}, \citenamefont {Jozsa}, \citenamefont
  {Macchiavello}, \citenamefont {Popescu},\ and\ \citenamefont
  {Sanpera}}]{Deutsch}%
  \BibitemOpen
  \bibfield  {author} {\bibinfo {author} {\bibfnamefont {D.}~\bibnamefont
  {Deutsch}}, \bibinfo {author} {\bibfnamefont {A.}~\bibnamefont {Ekert}},
  \bibinfo {author} {\bibfnamefont {R.}~\bibnamefont {Jozsa}}, \bibinfo
  {author} {\bibfnamefont {C.}~\bibnamefont {Macchiavello}}, \bibinfo {author}
  {\bibfnamefont {S.}~\bibnamefont {Popescu}}, \ and\ \bibinfo {author}
  {\bibfnamefont {A.}~\bibnamefont {Sanpera}},\ }\href {\doibase
  10.1103/PhysRevLett.77.2818} {\bibfield  {journal} {\bibinfo  {journal}
  {Phys. Rev. Lett.}\ }\textbf {\bibinfo {volume} {77}},\ \bibinfo {pages}
  {2818} (\bibinfo {year} {1996})}\BibitemShut {NoStop}%
\bibitem [{\citenamefont {Nielsen}\ and\ \citenamefont
  {Chuang}(2010)}]{Nielsen}%
  \BibitemOpen
  \bibfield  {author} {\bibinfo {author} {\bibfnamefont {M.~A.}\ \bibnamefont
  {Nielsen}}\ and\ \bibinfo {author} {\bibfnamefont {I.~L.}\ \bibnamefont
  {Chuang}},\ }\href@noop {} {\emph {\bibinfo {title} {{Quantum computation and
  quantum information}}}}\ (\bibinfo  {publisher} {Cambridge university
  press},\ \bibinfo {year} {2010})\BibitemShut {NoStop}%
\bibitem [{\citenamefont {Briegel}\ \emph {et~al.}(1998)\citenamefont
  {Briegel}, \citenamefont {D{\"u}r}, \citenamefont {Cirac},\ and\
  \citenamefont {Zoller}}]{BriegelRepeater}%
  \BibitemOpen
  \bibfield  {author} {\bibinfo {author} {\bibfnamefont {H.-J.}\ \bibnamefont
  {Briegel}}, \bibinfo {author} {\bibfnamefont {W.}~\bibnamefont {D{\"u}r}},
  \bibinfo {author} {\bibfnamefont {J.~I.}\ \bibnamefont {Cirac}}, \ and\
  \bibinfo {author} {\bibfnamefont {P.}~\bibnamefont {Zoller}},\ }\href
  {\doibase 10.1103/PhysRevLett.81.5932} {\bibfield  {journal} {\bibinfo
  {journal} {Phys. Rev. Lett.}\ }\textbf {\bibinfo {volume} {81}},\ \bibinfo
  {pages} {5932} (\bibinfo {year} {1998})}\BibitemShut {NoStop}%
\bibitem [{\citenamefont {Sangouard}\ \emph {et~al.}(2011)\citenamefont
  {Sangouard}, \citenamefont {Simon}, \citenamefont {de~Riedmatten},\ and\
  \citenamefont {Gisin}}]{SangouardRMP}%
  \BibitemOpen
  \bibfield  {author} {\bibinfo {author} {\bibfnamefont {N.}~\bibnamefont
  {Sangouard}}, \bibinfo {author} {\bibfnamefont {C.}~\bibnamefont {Simon}},
  \bibinfo {author} {\bibfnamefont {H.}~\bibnamefont {de~Riedmatten}}, \ and\
  \bibinfo {author} {\bibfnamefont {N.}~\bibnamefont {Gisin}},\ }\href
  {\doibase 10.1103/RevModPhys.83.33} {\bibfield  {journal} {\bibinfo
  {journal} {Rev. Mod. Phys.}\ }\textbf {\bibinfo {volume} {83}},\ \bibinfo
  {pages} {33} (\bibinfo {year} {2011})}\BibitemShut {NoStop}%
\bibitem [{\citenamefont {Jamio{\l}kowski}(1972)}]{Jamiolkowski}%
  \BibitemOpen
  \bibfield  {author} {\bibinfo {author} {\bibfnamefont {A.}~\bibnamefont
  {Jamio{\l}kowski}},\ }\href {\doibase 10.1016/0034-4877(72)90011-0}
  {\bibfield  {journal} {\bibinfo  {journal} {Rep. Math. Phys.}\ }\textbf
  {\bibinfo {volume} {3}},\ \bibinfo {pages} {275} (\bibinfo {year}
  {1972})}\BibitemShut {NoStop}%
\bibitem [{\citenamefont {Nielsen}(2006)}]{NielsenCluster}%
  \BibitemOpen
  \bibfield  {author} {\bibinfo {author} {\bibfnamefont {M.~A.}\ \bibnamefont
  {Nielsen}},\ }\href {\doibase
  http://dx.doi.org/10.1016/S0034-4877(06)80014-5} {\bibfield  {journal}
  {\bibinfo  {journal} {Reports on Mathematical Physics}\ }\textbf {\bibinfo
  {volume} {57}},\ \bibinfo {pages} {147 } (\bibinfo {year}
  {2006})}\BibitemShut {NoStop}%
\bibitem [{\citenamefont {Aliferis}\ and\ \citenamefont
  {Leung}(2004)}]{AliferisComp}%
  \BibitemOpen
  \bibfield  {author} {\bibinfo {author} {\bibfnamefont {P.}~\bibnamefont
  {Aliferis}}\ and\ \bibinfo {author} {\bibfnamefont {D.~W.}\ \bibnamefont
  {Leung}},\ }\href {\doibase 10.1103/PhysRevA.70.062314} {\bibfield  {journal}
  {\bibinfo  {journal} {Phys. Rev. A}\ }\textbf {\bibinfo {volume} {70}},\
  \bibinfo {pages} {062314} (\bibinfo {year} {2004})}\BibitemShut {NoStop}%
\bibitem [{\citenamefont {Childs}\ \emph {et~al.}(2005)\citenamefont {Childs},
  \citenamefont {Leung},\ and\ \citenamefont {Nielsen}}]{ChildsUnified}%
  \BibitemOpen
  \bibfield  {author} {\bibinfo {author} {\bibfnamefont {A.~M.}\ \bibnamefont
  {Childs}}, \bibinfo {author} {\bibfnamefont {D.~W.}\ \bibnamefont {Leung}}, \
  and\ \bibinfo {author} {\bibfnamefont {M.~A.}\ \bibnamefont {Nielsen}},\
  }\href {\doibase 10.1103/PhysRevA.71.032318} {\bibfield  {journal} {\bibinfo
  {journal} {Phys. Rev. A}\ }\textbf {\bibinfo {volume} {71}},\ \bibinfo
  {pages} {032318} (\bibinfo {year} {2005})}\BibitemShut {NoStop}%
\bibitem [{\citenamefont {Gross}\ and\ \citenamefont
  {Eisert}(2007)}]{GrossNovel}%
  \BibitemOpen
  \bibfield  {author} {\bibinfo {author} {\bibfnamefont {D.}~\bibnamefont
  {Gross}}\ and\ \bibinfo {author} {\bibfnamefont {J.}~\bibnamefont {Eisert}},\
  }\href {\doibase 10.1103/PhysRevLett.98.220503} {\bibfield  {journal}
  {\bibinfo  {journal} {Phys. Rev. Lett.}\ }\textbf {\bibinfo {volume} {98}},\
  \bibinfo {pages} {220503} (\bibinfo {year} {2007})}\BibitemShut {NoStop}%
\bibitem [{\citenamefont {{Mantri}}\ \emph {et~al.}(2017)\citenamefont
  {{Mantri}}, \citenamefont {{Demarie}},\ and\ \citenamefont
  {{Fitzsimons}}}]{MantriUniversal}%
  \BibitemOpen
  \bibfield  {author} {\bibinfo {author} {\bibfnamefont {A.}~\bibnamefont
  {{Mantri}}}, \bibinfo {author} {\bibfnamefont {T.~F.}\ \bibnamefont
  {{Demarie}}}, \ and\ \bibinfo {author} {\bibfnamefont {J.~F.}\ \bibnamefont
  {{Fitzsimons}}},\ }\href {\doibase 10.1038/srep42861} {\bibfield  {journal}
  {\bibinfo  {journal} {Scientific Reports}\ }\textbf {\bibinfo {volume} {7}},\
  \bibinfo {eid} {42861} (\bibinfo {year} {2017})},\ \Eprint
  {http://arxiv.org/abs/1607.00758} {arXiv:1607.00758 [quant-ph]} \BibitemShut
  {NoStop}%
\bibitem [{\citenamefont {Browne}\ and\ \citenamefont
  {Briegel}(2006)}]{BrowneIntro}%
  \BibitemOpen
  \bibfield  {author} {\bibinfo {author} {\bibfnamefont {D.~E.}\ \bibnamefont
  {Browne}}\ and\ \bibinfo {author} {\bibfnamefont {H.~J.}\ \bibnamefont
  {Briegel}},\ }\href@noop {} {\bibfield  {journal} {\bibinfo  {journal} {arXiv
  preprint quant-ph/0603226}\ } (\bibinfo {year} {2006})}\BibitemShut {NoStop}%
\bibitem [{\citenamefont {{Jozsa}}(2005)}]{JozsaIntro}%
  \BibitemOpen
  \bibfield  {author} {\bibinfo {author} {\bibfnamefont {R.}~\bibnamefont
  {{Jozsa}}},\ }\href@noop {} {\bibfield  {journal} {\bibinfo  {journal}
  {eprint arXiv:quant-ph/0508124}\ } (\bibinfo {year} {2005})},\ \Eprint
  {http://arxiv.org/abs/quant-ph/0508124} {quant-ph/0508124} \BibitemShut
  {NoStop}%
\bibitem [{\citenamefont {Campbell}\ and\ \citenamefont
  {Fitzsimons}(2010)}]{CampellIntro}%
  \BibitemOpen
  \bibfield  {author} {\bibinfo {author} {\bibfnamefont {E.~T.}\ \bibnamefont
  {Campbell}}\ and\ \bibinfo {author} {\bibfnamefont {J.}~\bibnamefont
  {Fitzsimons}},\ }\href {\doibase 10.1142/S0219749910006198} {\bibfield
  {journal} {\bibinfo  {journal} {International Journal of Quantum
  Information}\ }\textbf {\bibinfo {volume} {08}},\ \bibinfo {pages} {219}
  (\bibinfo {year} {2010})}\BibitemShut {NoStop}%
\bibitem [{\citenamefont {Kwek}\ \emph {et~al.}(2012)\citenamefont {Kwek},
  \citenamefont {Wei},\ and\ \citenamefont {Zeng}}]{KwekImpl}%
  \BibitemOpen
  \bibfield  {author} {\bibinfo {author} {\bibfnamefont {L.~C.}\ \bibnamefont
  {Kwek}}, \bibinfo {author} {\bibfnamefont {Z.}~\bibnamefont {Wei}}, \ and\
  \bibinfo {author} {\bibfnamefont {B.}~\bibnamefont {Zeng}},\ }\href {\doibase
  10.1142/S0217979212300022} {\bibfield  {journal} {\bibinfo  {journal}
  {International Journal of Modern Physics B}\ }\textbf {\bibinfo {volume}
  {26}},\ \bibinfo {pages} {1230002} (\bibinfo {year} {2012})}\BibitemShut
  {NoStop}%
\bibitem [{\citenamefont {Kyaw}\ \emph {et~al.}(2014)\citenamefont {Kyaw},
  \citenamefont {Li},\ and\ \citenamefont {Kwek}}]{KyawImpl}%
  \BibitemOpen
  \bibfield  {author} {\bibinfo {author} {\bibfnamefont {T.~H.}\ \bibnamefont
  {Kyaw}}, \bibinfo {author} {\bibfnamefont {Y.}~\bibnamefont {Li}}, \ and\
  \bibinfo {author} {\bibfnamefont {L.-C.}\ \bibnamefont {Kwek}},\ }\href
  {\doibase 10.1103/PhysRevLett.113.180501} {\bibfield  {journal} {\bibinfo
  {journal} {Phys. Rev. Lett.}\ }\textbf {\bibinfo {volume} {113}},\ \bibinfo
  {pages} {180501} (\bibinfo {year} {2014})}\BibitemShut {NoStop}%
\bibitem [{\citenamefont {Benjamin}\ \emph {et~al.}(2009)\citenamefont
  {Benjamin}, \citenamefont {Lovett},\ and\ \citenamefont
  {Smith}}]{BenjaminProsp}%
  \BibitemOpen
  \bibfield  {author} {\bibinfo {author} {\bibfnamefont {S.}~\bibnamefont
  {Benjamin}}, \bibinfo {author} {\bibfnamefont {B.}~\bibnamefont {Lovett}}, \
  and\ \bibinfo {author} {\bibfnamefont {J.}~\bibnamefont {Smith}},\ }\href
  {\doibase 10.1002/lpor.200810051} {\bibfield  {journal} {\bibinfo  {journal}
  {Laser \& Photonics Reviews}\ }\textbf {\bibinfo {volume} {3}},\ \bibinfo
  {pages} {556} (\bibinfo {year} {2009})}\BibitemShut {NoStop}%
\bibitem [{\citenamefont {Gottesman}(1997)}]{gottesmanphd}%
  \BibitemOpen
  \bibfield  {author} {\bibinfo {author} {\bibfnamefont {D.}~\bibnamefont
  {Gottesman}},\ }\href@noop {} {\bibfield  {journal} {\bibinfo  {journal}
  {arXiv preprint quant-ph/9705052}\ } (\bibinfo {year} {1997})}\BibitemShut
  {NoStop}%
\bibitem [{\citenamefont {Schlingemann}\ and\ \citenamefont
  {Werner}(2001)}]{Schlingemann}%
  \BibitemOpen
  \bibfield  {author} {\bibinfo {author} {\bibfnamefont {D.}~\bibnamefont
  {Schlingemann}}\ and\ \bibinfo {author} {\bibfnamefont {R.~F.}\ \bibnamefont
  {Werner}},\ }\href {\doibase 10.1103/PhysRevA.65.012308} {\bibfield
  {journal} {\bibinfo  {journal} {Phys. Rev. A}\ }\textbf {\bibinfo {volume}
  {65}},\ \bibinfo {pages} {012308} (\bibinfo {year} {2001})}\BibitemShut
  {NoStop}%
\bibitem [{\citenamefont {Schlingemann}(2002)}]{SchlingemannGraphCode}%
  \BibitemOpen
  \bibfield  {author} {\bibinfo {author} {\bibfnamefont {D.}~\bibnamefont
  {Schlingemann}},\ }\href {http://dl.acm.org/citation.cfm?id=2011477.2011481}
  {\bibfield  {journal} {\bibinfo  {journal} {Quantum Info. Comput.}\ }\textbf
  {\bibinfo {volume} {2}},\ \bibinfo {pages} {307} (\bibinfo {year}
  {2002})}\BibitemShut {NoStop}%
\bibitem [{\citenamefont {Grassl}\ \emph {et~al.}(2007)\citenamefont {Grassl},
  \citenamefont {Klappenecker},\ and\ \citenamefont {R{\"o}tteler}}]{Grassl}%
  \BibitemOpen
  \bibfield  {author} {\bibinfo {author} {\bibfnamefont {M.}~\bibnamefont
  {Grassl}}, \bibinfo {author} {\bibfnamefont {A.}~\bibnamefont
  {Klappenecker}}, \ and\ \bibinfo {author} {\bibfnamefont {M.}~\bibnamefont
  {R{\"o}tteler}},\ }\href@noop {} {\bibfield  {journal} {\bibinfo  {journal}
  {arXiv preprint quant-ph/0703112}\ } (\bibinfo {year} {2007})}\BibitemShut
  {NoStop}%
\bibitem [{\citenamefont {Dawson}\ \emph
  {et~al.}(2006{\natexlab{a}})\citenamefont {Dawson}, \citenamefont
  {Haselgrove},\ and\ \citenamefont {Nielsen}}]{DawsonNoiseQuantumComputers}%
  \BibitemOpen
  \bibfield  {author} {\bibinfo {author} {\bibfnamefont {C.~M.}\ \bibnamefont
  {Dawson}}, \bibinfo {author} {\bibfnamefont {H.~L.}\ \bibnamefont
  {Haselgrove}}, \ and\ \bibinfo {author} {\bibfnamefont {M.~A.}\ \bibnamefont
  {Nielsen}},\ }\href {\doibase 10.1103/PhysRevLett.96.020501} {\bibfield
  {journal} {\bibinfo  {journal} {Phys. Rev. Lett.}\ }\textbf {\bibinfo
  {volume} {96}},\ \bibinfo {pages} {020501} (\bibinfo {year}
  {2006}{\natexlab{a}})}\BibitemShut {NoStop}%
\bibitem [{\citenamefont {Dawson}\ \emph
  {et~al.}(2006{\natexlab{b}})\citenamefont {Dawson}, \citenamefont
  {Haselgrove},\ and\ \citenamefont {Nielsen}}]{DawsonNoiseClusterState}%
  \BibitemOpen
  \bibfield  {author} {\bibinfo {author} {\bibfnamefont {C.~M.}\ \bibnamefont
  {Dawson}}, \bibinfo {author} {\bibfnamefont {H.~L.}\ \bibnamefont
  {Haselgrove}}, \ and\ \bibinfo {author} {\bibfnamefont {M.~A.}\ \bibnamefont
  {Nielsen}},\ }\href {\doibase 10.1103/PhysRevA.73.052306} {\bibfield
  {journal} {\bibinfo  {journal} {Phys. Rev. A}\ }\textbf {\bibinfo {volume}
  {73}},\ \bibinfo {pages} {052306} (\bibinfo {year}
  {2006}{\natexlab{b}})}\BibitemShut {NoStop}%
\bibitem [{\citenamefont {Browne}\ and\ \citenamefont
  {Rudolph}(2005)}]{BrownerOptical}%
  \BibitemOpen
  \bibfield  {author} {\bibinfo {author} {\bibfnamefont {D.~E.}\ \bibnamefont
  {Browne}}\ and\ \bibinfo {author} {\bibfnamefont {T.}~\bibnamefont
  {Rudolph}},\ }\href {\doibase 10.1103/PhysRevLett.95.010501} {\bibfield
  {journal} {\bibinfo  {journal} {Phys. Rev. Lett.}\ }\textbf {\bibinfo
  {volume} {95}},\ \bibinfo {pages} {010501} (\bibinfo {year}
  {2005})}\BibitemShut {NoStop}%
\bibitem [{\citenamefont {Nielsen}(2004)}]{NielsenOptical}%
  \BibitemOpen
  \bibfield  {author} {\bibinfo {author} {\bibfnamefont {M.~A.}\ \bibnamefont
  {Nielsen}},\ }\href {\doibase 10.1103/PhysRevLett.93.040503} {\bibfield
  {journal} {\bibinfo  {journal} {Phys. Rev. Lett.}\ }\textbf {\bibinfo
  {volume} {93}},\ \bibinfo {pages} {040503} (\bibinfo {year}
  {2004})}\BibitemShut {NoStop}%
\bibitem [{\citenamefont {Nielsen}\ and\ \citenamefont
  {Dawson}(2005)}]{NielsenFault}%
  \BibitemOpen
  \bibfield  {author} {\bibinfo {author} {\bibfnamefont {M.~A.}\ \bibnamefont
  {Nielsen}}\ and\ \bibinfo {author} {\bibfnamefont {C.~M.}\ \bibnamefont
  {Dawson}},\ }\href {\doibase 10.1103/PhysRevA.71.042323} {\bibfield
  {journal} {\bibinfo  {journal} {Phys. Rev. A}\ }\textbf {\bibinfo {volume}
  {71}},\ \bibinfo {pages} {042323} (\bibinfo {year} {2005})}\BibitemShut
  {NoStop}%
\bibitem [{\citenamefont {Raussendorf}\ \emph {et~al.}(2007)\citenamefont
  {Raussendorf}, \citenamefont {Harrington},\ and\ \citenamefont
  {Goyal}}]{RaussendorfFault}%
  \BibitemOpen
  \bibfield  {author} {\bibinfo {author} {\bibfnamefont {R.}~\bibnamefont
  {Raussendorf}}, \bibinfo {author} {\bibfnamefont {J.}~\bibnamefont
  {Harrington}}, \ and\ \bibinfo {author} {\bibfnamefont {K.}~\bibnamefont
  {Goyal}},\ }\href {http://stacks.iop.org/1367-2630/9/i=6/a=199} {\bibfield
  {journal} {\bibinfo  {journal} {New Journal of Physics}\ }\textbf {\bibinfo
  {volume} {9}},\ \bibinfo {pages} {199} (\bibinfo {year} {2007})}\BibitemShut
  {NoStop}%
\bibitem [{\citenamefont {Raussendorf}\ \emph {et~al.}(2006)\citenamefont
  {Raussendorf}, \citenamefont {Harrington},\ and\ \citenamefont
  {Goyal}}]{RaussendorfFault2}%
  \BibitemOpen
  \bibfield  {author} {\bibinfo {author} {\bibfnamefont {R.}~\bibnamefont
  {Raussendorf}}, \bibinfo {author} {\bibfnamefont {J.}~\bibnamefont
  {Harrington}}, \ and\ \bibinfo {author} {\bibfnamefont {K.}~\bibnamefont
  {Goyal}},\ }\href@noop {} {\bibfield  {journal} {\bibinfo  {journal} {Annals
  of physics}\ }\textbf {\bibinfo {volume} {321}},\ \bibinfo {pages} {2242}
  (\bibinfo {year} {2006})}\BibitemShut {NoStop}%
\bibitem [{\citenamefont {Kruszynska}\ \emph {et~al.}(2006)\citenamefont
  {Kruszynska}, \citenamefont {Miyake}, \citenamefont {Briegel},\ and\
  \citenamefont {D{\"u}r}}]{EPPallGraphs}%
  \BibitemOpen
  \bibfield  {author} {\bibinfo {author} {\bibfnamefont {C.}~\bibnamefont
  {Kruszynska}}, \bibinfo {author} {\bibfnamefont {A.}~\bibnamefont {Miyake}},
  \bibinfo {author} {\bibfnamefont {H.~J.}\ \bibnamefont {Briegel}}, \ and\
  \bibinfo {author} {\bibfnamefont {W.}~\bibnamefont {D{\"u}r}},\ }\href
  {\doibase 10.1103/PhysRevA.74.052316} {\bibfield  {journal} {\bibinfo
  {journal} {Phys. Rev. A}\ }\textbf {\bibinfo {volume} {74}},\ \bibinfo
  {pages} {052316} (\bibinfo {year} {2006})}\BibitemShut {NoStop}%
\bibitem [{\citenamefont {{Van den Nest}}\ \emph {et~al.}(2004)\citenamefont
  {{Van den Nest}}, \citenamefont {Dehaene},\ and\ \citenamefont {{De
  Moor}}}]{Nest}%
  \BibitemOpen
  \bibfield  {author} {\bibinfo {author} {\bibfnamefont {M.}~\bibnamefont {{Van
  den Nest}}}, \bibinfo {author} {\bibfnamefont {J.}~\bibnamefont {Dehaene}}, \
  and\ \bibinfo {author} {\bibfnamefont {B.}~\bibnamefont {{De Moor}}},\ }\href
  {\doibase 10.1103/PhysRevA.69.022316} {\bibfield  {journal} {\bibinfo
  {journal} {Phys. Rev. A}\ }\textbf {\bibinfo {volume} {69}},\ \bibinfo
  {pages} {022316} (\bibinfo {year} {2004})}\BibitemShut {NoStop}%
\bibitem [{\citenamefont {Danos}\ and\ \citenamefont
  {Kashefi}(2006)}]{DanosDeter}%
  \BibitemOpen
  \bibfield  {author} {\bibinfo {author} {\bibfnamefont {V.}~\bibnamefont
  {Danos}}\ and\ \bibinfo {author} {\bibfnamefont {E.}~\bibnamefont
  {Kashefi}},\ }\href {\doibase 10.1103/PhysRevA.74.052310} {\bibfield
  {journal} {\bibinfo  {journal} {Phys. Rev. A}\ }\textbf {\bibinfo {volume}
  {74}},\ \bibinfo {pages} {052310} (\bibinfo {year} {2006})}\BibitemShut
  {NoStop}%
\bibitem [{\citenamefont {Browne}\ \emph {et~al.}(2007)\citenamefont {Browne},
  \citenamefont {Kashefi}, \citenamefont {Mhalla},\ and\ \citenamefont
  {Perdrix}}]{BrowneDeter}%
  \BibitemOpen
  \bibfield  {author} {\bibinfo {author} {\bibfnamefont {D.~E.}\ \bibnamefont
  {Browne}}, \bibinfo {author} {\bibfnamefont {E.}~\bibnamefont {Kashefi}},
  \bibinfo {author} {\bibfnamefont {M.}~\bibnamefont {Mhalla}}, \ and\ \bibinfo
  {author} {\bibfnamefont {S.}~\bibnamefont {Perdrix}},\ }\href
  {http://stacks.iop.org/1367-2630/9/i=8/a=250} {\bibfield  {journal} {\bibinfo
   {journal} {New Journal of Physics}\ }\textbf {\bibinfo {volume} {9}},\
  \bibinfo {pages} {250} (\bibinfo {year} {2007})}\BibitemShut {NoStop}%
\bibitem [{Note1()}]{Note1}%
  \BibitemOpen
  \bibinfo {note} {We emphasize that the same procedure applies to quantum
  operations of the form $m \mathrel {\mathop {\to }\limits ^{\protect \mathcal
  {O}}} 1$ and $n \mathrel {\mathop {\to }\limits ^{\protect \mathcal {O}'}}
  1$, i.e. taking multiple qubits as input and producing a single qubit output.
  Examples thereof are concatenated entanglement purification protocols or
  decoding a logical qubit.}\BibitemShut {Stop}%
\bibitem [{\citenamefont {Aschauer}(2004)}]{AschauerThesis}%
  \BibitemOpen
  \bibfield  {author} {\bibinfo {author} {\bibfnamefont {H.}~\bibnamefont
  {Aschauer}},\ }\emph {\bibinfo {title} {{Quantum communication in noisy
  environments}}},\ \href {http://edoc.ub.uni-muenchen.de/archive/00003588/}
  {Ph.D. thesis},\ \bibinfo  {school} {LMU Munich} (\bibinfo {year}
  {2004})\BibitemShut {NoStop}%
\bibitem [{\citenamefont {Shor}(1995)}]{ShorCode}%
  \BibitemOpen
  \bibfield  {author} {\bibinfo {author} {\bibfnamefont {P.~W.}\ \bibnamefont
  {Shor}},\ }\href {\doibase 10.1103/PhysRevA.52.R2493} {\bibfield  {journal}
  {\bibinfo  {journal} {Phys. Rev. A}\ }\textbf {\bibinfo {volume} {52}},\
  \bibinfo {pages} {R2493} (\bibinfo {year} {1995})}\BibitemShut {NoStop}%
\bibitem [{\citenamefont {DiVincenzo}\ \emph {et~al.}(1998)\citenamefont
  {DiVincenzo}, \citenamefont {Shor},\ and\ \citenamefont
  {Smolin}}]{GenShorCode1}%
  \BibitemOpen
  \bibfield  {author} {\bibinfo {author} {\bibfnamefont {D.~P.}\ \bibnamefont
  {DiVincenzo}}, \bibinfo {author} {\bibfnamefont {P.~W.}\ \bibnamefont
  {Shor}}, \ and\ \bibinfo {author} {\bibfnamefont {J.~A.}\ \bibnamefont
  {Smolin}},\ }\href {\doibase 10.1103/PhysRevA.57.830} {\bibfield  {journal}
  {\bibinfo  {journal} {Phys. Rev. A}\ }\textbf {\bibinfo {volume} {57}},\
  \bibinfo {pages} {830} (\bibinfo {year} {1998})}\BibitemShut {NoStop}%
\bibitem [{\citenamefont {Smith}\ and\ \citenamefont
  {Smolin}(2007)}]{GenShorCode2}%
  \BibitemOpen
  \bibfield  {author} {\bibinfo {author} {\bibfnamefont {G.}~\bibnamefont
  {Smith}}\ and\ \bibinfo {author} {\bibfnamefont {J.~A.}\ \bibnamefont
  {Smolin}},\ }\href {\doibase 10.1103/PhysRevLett.98.030501} {\bibfield
  {journal} {\bibinfo  {journal} {Phys. Rev. Lett.}\ }\textbf {\bibinfo
  {volume} {98}},\ \bibinfo {pages} {030501} (\bibinfo {year}
  {2007})}\BibitemShut {NoStop}%
\bibitem [{\citenamefont {Fern}\ and\ \citenamefont
  {Whaley}(2008)}]{GenShorCode3}%
  \BibitemOpen
  \bibfield  {author} {\bibinfo {author} {\bibfnamefont {J.}~\bibnamefont
  {Fern}}\ and\ \bibinfo {author} {\bibfnamefont {K.~B.}\ \bibnamefont
  {Whaley}},\ }\href {\doibase 10.1103/PhysRevA.78.062335} {\bibfield
  {journal} {\bibinfo  {journal} {Phys. Rev. A}\ }\textbf {\bibinfo {volume}
  {78}},\ \bibinfo {pages} {062335} (\bibinfo {year} {2008})}\BibitemShut
  {NoStop}%
\bibitem [{\citenamefont {Chen}\ and\ \citenamefont
  {Jiang}(2011)}]{GenShorCode4}%
  \BibitemOpen
  \bibfield  {author} {\bibinfo {author} {\bibfnamefont {X.-y.}\ \bibnamefont
  {Chen}}\ and\ \bibinfo {author} {\bibfnamefont {L.-z.}\ \bibnamefont
  {Jiang}},\ }\href {\doibase 10.1103/PhysRevA.83.052316} {\bibfield  {journal}
  {\bibinfo  {journal} {Phys. Rev. A}\ }\textbf {\bibinfo {volume} {83}},\
  \bibinfo {pages} {052316} (\bibinfo {year} {2011})}\BibitemShut {NoStop}%
\bibitem [{\citenamefont {Kesting}\ \emph {et~al.}(2013)\citenamefont
  {Kesting}, \citenamefont {Fr{\"o}wis},\ and\ \citenamefont
  {D{\"u}r}}]{Kesting}%
  \BibitemOpen
  \bibfield  {author} {\bibinfo {author} {\bibfnamefont {F.}~\bibnamefont
  {Kesting}}, \bibinfo {author} {\bibfnamefont {F.}~\bibnamefont {Fr{\"o}wis}},
  \ and\ \bibinfo {author} {\bibfnamefont {W.}~\bibnamefont {D{\"u}r}},\ }\href
  {\doibase 10.1103/PhysRevA.88.042305} {\bibfield  {journal} {\bibinfo
  {journal} {Phys. Rev. A}\ }\textbf {\bibinfo {volume} {88}},\ \bibinfo
  {pages} {042305} (\bibinfo {year} {2013})}\BibitemShut {NoStop}%
\bibitem [{\citenamefont {Kitaev}(2003)}]{ToricCode}%
  \BibitemOpen
  \bibfield  {author} {\bibinfo {author} {\bibfnamefont {A.}~\bibnamefont
  {Kitaev}},\ }\href {\doibase 10.1016/S0003-4916(02)00018-0} {\bibfield
  {journal} {\bibinfo  {journal} {Annals of Physics}\ }\textbf {\bibinfo
  {volume} {303}},\ \bibinfo {pages} {2} (\bibinfo {year} {2003})}\BibitemShut
  {NoStop}%
\bibitem [{\citenamefont {Bombin}\ and\ \citenamefont
  {Martin-Delgado}(2006)}]{GraphCodes}%
  \BibitemOpen
  \bibfield  {author} {\bibinfo {author} {\bibfnamefont {H.}~\bibnamefont
  {Bombin}}\ and\ \bibinfo {author} {\bibfnamefont {M.~A.}\ \bibnamefont
  {Martin-Delgado}},\ }\href {\doibase 10.1103/PhysRevA.73.062303} {\bibfield
  {journal} {\bibinfo  {journal} {Phys. Rev. A}\ }\textbf {\bibinfo {volume}
  {73}},\ \bibinfo {pages} {062303} (\bibinfo {year} {2006})}\BibitemShut
  {NoStop}%
\bibitem [{\citenamefont {Gottesman}(1996)}]{Gottesman}%
  \BibitemOpen
  \bibfield  {author} {\bibinfo {author} {\bibfnamefont {D.}~\bibnamefont
  {Gottesman}},\ }\href {\doibase 10.1103/PhysRevA.54.1862} {\bibfield
  {journal} {\bibinfo  {journal} {Phys. Rev. A}\ }\textbf {\bibinfo {volume}
  {54}},\ \bibinfo {pages} {1862} (\bibinfo {year} {1996})}\BibitemShut
  {NoStop}%
\bibitem [{\citenamefont {{Poulsen Nautrup}}\ \emph {et~al.}(2016)\citenamefont
  {{Poulsen Nautrup}}, \citenamefont {Friis},\ and\ \citenamefont
  {Briegel}}]{NautrupTopo}%
  \BibitemOpen
  \bibfield  {author} {\bibinfo {author} {\bibfnamefont {H.}~\bibnamefont
  {{Poulsen Nautrup}}}, \bibinfo {author} {\bibfnamefont {N.}~\bibnamefont
  {Friis}}, \ and\ \bibinfo {author} {\bibfnamefont {H.~J.}\ \bibnamefont
  {Briegel}},\ }\href@noop {} {\bibfield  {journal} {\bibinfo  {journal} {arXiv
  preprint arXiv:1609.08062}\ } (\bibinfo {year} {2016})}\BibitemShut {NoStop}%
\bibitem [{\citenamefont {Zhou}\ and\ \citenamefont {Sheng}(2016)}]{Zhou}%
  \BibitemOpen
  \bibfield  {author} {\bibinfo {author} {\bibfnamefont {L.}~\bibnamefont
  {Zhou}}\ and\ \bibinfo {author} {\bibfnamefont {Y.-B.}\ \bibnamefont
  {Sheng}},\ }\href {\doibase 10.1038/srep28813} {\bibfield  {journal}
  {\bibinfo  {journal} {Scientific Reports}\ }\textbf {\bibinfo {volume} {6}},\
  \bibinfo {pages} {28813} (\bibinfo {year} {2016})}\BibitemShut {NoStop}%
\bibitem [{\citenamefont {Zwerger}\ \emph
  {et~al.}(2016{\natexlab{b}})\citenamefont {Zwerger}, \citenamefont {Lanyon},
  \citenamefont {Northup}, \citenamefont {Muschik}, \citenamefont {D{\"u}r},\
  and\ \citenamefont {Sangouard}}]{ZwergerDFS}%
  \BibitemOpen
  \bibfield  {author} {\bibinfo {author} {\bibfnamefont {M.}~\bibnamefont
  {Zwerger}}, \bibinfo {author} {\bibfnamefont {B.}~\bibnamefont {Lanyon}},
  \bibinfo {author} {\bibfnamefont {T.}~\bibnamefont {Northup}}, \bibinfo
  {author} {\bibfnamefont {C.}~\bibnamefont {Muschik}}, \bibinfo {author}
  {\bibfnamefont {W.}~\bibnamefont {D{\"u}r}}, \ and\ \bibinfo {author}
  {\bibfnamefont {N.}~\bibnamefont {Sangouard}},\ }\href@noop {} {\bibfield
  {journal} {\bibinfo  {journal} {arXiv preprint arXiv:1611.07779}\ } (\bibinfo
  {year} {2016}{\natexlab{b}})}\BibitemShut {NoStop}%
\end{thebibliography}%

\end{document}